\def\isarxivversion{1} 
\newtheorem{theorem}{Theorem}[section]
\newtheorem{lemma}[theorem]{Lemma}
\newtheorem{definition}[theorem]{Definition}
\newtheorem{fact}[theorem]{Fact}
\newtheorem{remark}[theorem]{Remark}
\newtheorem{problem}[theorem]{Problem}
\newcommand{\wh}{\widehat}
\newcommand{\wt}{\widetilde}
\newcommand{\R}{\mathbb{R}}
\renewcommand{\varepsilon}{\epsilon}
\renewcommand{\tilde}{\wt}
\DeclareMathOperator*{\E}{{\bf {E}}}
\DeclareMathOperator*{\Z}{\mathbb{Z}}
\DeclareMathOperator{\OPT}{OPT}
\DeclareMathOperator{\supp}{supp}
\DeclareMathOperator{\poly}{poly}
\DeclareMathOperator{\Tr}{tr}
\DeclareMathOperator{\nnz}{nnz}
\DeclareMathOperator{\rank}{rank}
\newcommand{\SQDATA}{\textsc{SqrtData}}
\newcommand{\LOGDATA}{\textsc{LogData}}
\newcommand*{\RN}[1]{\expandafter\@slowromancap\romannumeral #1@}
\newcommand{\define}[4][ignore]{%
  \ifstrequal{#1}{ignore}{}{
  \@namedef{thmtitle@#2}{#1}}%
  \@namedef{thm@#2}{#4}%
  \@namedef{thmtypen@#2}{lemma}%
  \newtheorem{thmtype@#2}[theorem]{#3}%
  \newtheorem*{thmtypealt@#2}{#3~\ref{#2}}%
}
\newcommand{\state}[1]{%
  \@namedef{curthm}{#1}
  \@ifundefined{thmtitle@#1}{
  \begin{thmtype@#1}
    }{
  \begin{thmtype@#1}[\@nameuse{thmtitle@#1}]
  }
    \label{#1}
    \@nameuse{thm@#1}
  \end{thmtype@#1}
  \@ifundefined{thmdone@#1}{
  \@namedef{thmdone@#1}{stated}%
  }{}
}
\newcommand{\restate}[1]{%
  \@namedef{curthm}{#1}
  \@ifundefined{thmtitle@#1}{
    \begin{thmtypealt@#1}
    }{
  \begin{thmtypealt@#1}[\@nameuse{thmtitle@#1}]
  }
    \@nameuse{thm@#1}
  \end{thmtypealt@#1}
  \@ifundefined{thmdone@#1}{
  \@namedef{thmdone@#1}{stated}%
  }{}
}
\newcommand{\thmlabel}[1]{
  \@ifundefined{thmdone@\@nameuse{curthm}}{\label{#1}
    }{\tag*{\eqref{#1}}}
}
\begin{document}

%

%

\ifdefined\isarxivversion

\title{Sketching Transformed Matrices with Applications to\\ Natural Language Processing\thanks{A preliminary version of this paper appeared in the proceedings of AISTATS 2020.}}

\author{
Yingyu Liang\thanks{\texttt{yliang@cs.wisc.edu}. University of Wisconsin-Madison.}
\and 
Zhao Song\thanks{\texttt{zhaos@ias.edu}. Princeton University and Institute for Advanced Study.}
\and
Mengdi Wang\thanks{\texttt{mengdiw@princeton.edu}. Princeton University.}
\and
Lin F. Yang\thanks{\texttt{linyang@ee.ucla.edu}. University of California, Los Angeles.}
\and
Xin Yang\thanks{\texttt{yx1992@cs.washington.edu}. University of Washington.}
}

\date{}

\else

\twocolumn[

\aistatstitle{Sketching Transformed Matrices with Applications to Natural Language Processing}

\aistatsauthor{ 
Yingyu Liang\thanks{haha}
\and 
Zhao Song
\and
Mengdi Wang
\and
Lin F. Yang
\and
Xin Yang\thanks{\texttt{yx1992@cs.washington.edu}}
}

\aistatsaddress{ Address} 
]

\fi

\ifdefined\isarxivversion
\begin{titlepage}
  \maketitle
  \begin{abstract}
Suppose we are given a large matrix $A=(a_{i,j})$ that cannot be stored in memory but is in a disk or is presented in a data stream. However, we need to compute a matrix decomposition of the entry-wisely transformed matrix, $f(A):=(f(a_{i,j}))$ for some function $f$. Is it possible to do it in a space efficient way? Many machine learning applications indeed need to deal with such large transformed matrices, for example word embedding method in NLP needs to work with the pointwise mutual information (PMI) matrix, while the entrywise transformation makes it difficult to apply known linear algebraic tools. Existing approaches for this problem either need to store the whole matrix and perform the entry-wise transformation afterwards, which is space consuming or infeasible, or need to redesign the learning method, which is application specific and requires substantial remodeling.

In this paper, we first propose a space-efficient sketching algorithm for computing the product of a given small matrix with the transformed matrix. It works for a general family of transformations with provable small error bounds and thus can be used as a primitive in downstream learning tasks. 
We then apply this primitive to a concrete application: low-rank approximation. 
We show that our approach obtains small error and is efficient in both space and time. 
We complement our theoretical results with experiments on synthetic and real data. 


  \end{abstract}
  \thispagestyle{empty}
\end{titlepage}

{\hypersetup{linkcolor=black}
\tableofcontents
}
\newpage

\else

\begin{abstract}

\end{abstract}

\fi

\section{Introduction}\label{sec:intro}
Matrix datasets are ubiquitous in machine learning. 
However, many matrix datasets are usually too large to fit in the computer memory in large scale applications, e.g., image clustering~\cite{pauca2006nonnegative}, natural language processing~\cite{michel2011quantitative}, network analysis~\cite{mao2004modeling,grover2016node2vec}, and recommendation systems~\cite{koren2009matrix}. 
Many techniques have been proposed to perform the learning tasks on these data in an efficient way; see, e.g.,~\cite{mahoney2011randomized,w14,zhou2008large,gemulla2011large} and the references therein. 
However, challenges arise when the learning task is performed on an entrywise transformation of the matrix, which prevents applying many linear algebraic techniques.
Furthermore, due to large sizes,
these matrices are often constructed by entrywise updates, i.e., the entries of the matrix are constructed from a stream of updates where each update adds some value on some entry.
More specifically, 
there is a very large underlying matrix $A$ (that cannot be stored in memory easily) whose entries are constructed by a data stream where each item in the stream is of the form $(i,j, \Delta)$ with $\Delta\in \{\pm 1\}$ representing the update $A_{i,j} \leftarrow A_{i,j} + \Delta$. The downstream learning task (e.g., low rank approximation), however, needs to take input as matrix $M$ where $M_{i,j} = f(A_{i,j})$ for some transformation function $f$ (e.g., $f(x)=\log(|x|+1)$).

A concrete example is word embedding in natural language processing (NLP). Word embedding methods aim to embed each word to a vector space. It becomes a basic building block in many modern NLP systems. 
Many of these systems achieve the state of the art performance on various tasks via word embedding~\cite{pennington2014glove,mikolov2013distributed,wu2016google}.
A basic routine in word embedding is to explicitly or implicitly perform low rank approximation of an entry-wise transformed matrix~\cite{lg14,lzm15}.
For instance, the transformation is to apply a log likelihood function on each entry. The matrix itself is the so-called co-occurrence count matrix, which can be constructed by scanning the text corpus, e.g., the entire Wikipedia database. This matrix is usually of size millions by millions. 

Similar examples include regressions on huge accumulated datasets in economics~\cite{dv13,v14b}, where different transformations on covariates are often used to reduce biases.
Other examples include visual feature extraction~\cite{boureau2010theoretical}, kernel methods~\cite{rahimi2008random}, and $M$-estimators~\cite{zhang1997parameter}.
These large scale applications make it  impractical or hard to implement existing methods, which keep the matrix in memory. 
Some other approaches exploit the problem structure to get around the huge space requirement.
For instance, some of them propose sequential models of the data, and design online algorithms for computing the embeddings (e.g.,\cite{mikolov2013distributed,bojanowski2016enriching}). 
These methods, however, are more task-specific and cannot be applied to other tasks involving more general entrywise matrix transformations.


In this paper,
we show that learning based on transformed large matrices is possible even when storing such a matrix is not feasible.
Our main contributions are:
\begin{itemize}
\item For a general class of transformation function $f$, we provide an efficient one-pass \emph{matrix-product sketch} for computing the product of a given small matrix $B$ with the transformed matrix $f(A)$ with provable error bounds.
This algorithm uses space at most the size of the output.
The method assumes no statistical model about the updates and can handle a general family of transformations.
In particular, these transformations include  logarithmic functions and small degree polynomials.
This method can also be used as building blocks for downstream tasks: any algorithm requires access to the transformed matrix via a matrix product can apply our algorithm to obtain space saving.
\item 
We demonstrate the application of our algorithm in a concrete task: low rank approximation.
To the best of our knowledge,
our algorithm is the first one that is able to compute low rank approximation of large matrices under entrywise transformations.
We plug in our matrix product sketch into known algorithms as black boxes. 
We provide theoretical analysis on the tradeoff between the space and the accuracy of these algorithms.
We show that our algorithms are space efficient and almost match the accuracy of using the full matrix. 
These theoretical guarantees are complemented by experiments for low rank approximation on synthetic and real data. 
The empirical results show that our algorithm can reduce the space usage by orders of magnitude while the error is almost the same as the optimum.
We show that our algorithms beat the baseline of using uniform sampling on columns of the transformed matrix by a large margin.
We also provide results on linear regression in the appendix.
\end{itemize}

\paragraph{Road Map.} We provide definitions and basic concepts in Section~\ref{sec:preli}.
In Section~\ref{sec:matrix_product}, we introduce our basic routine called the \emph{matrix product} sketch.
We use our sketching algorithms to compute the low rank approximation of a transformed matrix in Section~\ref{sec:lowrank},
and the application on linear regression is in Appendix \ref{sec:more application}.
In Section~\ref{sec:exp}, we use numeric experiments to justify our approach.
The appendix provides a list of related works, the complete proofs, details of the experiments, and also additional theoretical and empirical results. 
\section{Related Work} \label{sec:relatedwork}

There exists a large body of work on fast algorithms for large scale matrices. Some are based on randomized matrix algorithms and use techniques like sampling and sketching; see~\cite{mahoney2011randomized,w14} and the reference therein. Some others are based on optimization algorithms like Alternating Least Square and Stochastic Gradient Descent and their variants; see ~\cite{zhou2008large,gemulla2011large} for some examples. However, most existing approaches do not apply to the settings considered in this paper. The closest work is \cite{woodruff2016distributed}, which considers low rank approximation of the element-wise transformation of the sum of several matrices located in different machines. This distributed setting is different from our setting and na\"ively applying their algorithm will lead to a large space cost. Furthermore, our sketching method can be applied to learning tasks beyond low rank approximation. 

Our work is built on techniques from numerical linear algebra and streaming data analysis in the recent decade.
There are numerous research works along this line.
Here we list a few but far from exhaustive.

Low-rank approximation or  matrix factorization of a matrix  is an important task in numerical linear algebra.
In this problem, we are given a $n \times d$ matrix $A$ and a parameter $k$, the goal is to find a $\rank$-$k$ matrix $\wh{A}$ so as to minimize the residual error $\| \wh{A} - A \|_F^2$, where the Frobenius norm is defined as $\| A\|_F= ( \sum_{i=1}^n \sum_{j=1}^d A_{i,j}^2 )^{\frac{1}{2}}$. 
Note that an optimal $\wh{A}$ provides a good estimation to the leading eigenspace of the matrix $A$.
Classical way of speeding up low-rank approximation via sketching requires showing two properties for sketching matrix:  subspace embedding \cite{s06,lww19,ww19} and  approximate matrix product \cite{nn13,kn14}. 
Low-rank approximation algorithm via combining those two properties has been presented in several papers \cite{cw13,mm13,swz19}. The classical sketching idea is easy to be made a streaming algorithm, since we usually use linear sketching matrix, which we don't need to explicitly write down during the stream.
However none of these methods are applicable to our setting,
which is much harder than the classical streaming low-rank approximation problem. 
This is mainly because the transformation $f$ that acts on an the matrix $A$ completely destroyes the linear algebraic property of matrix $A$; see Appendix~\ref{sec:diff} for some discussions.
The storage of $A$ can also be indefeasibly large to be stored and apply the above mentioned methods.

Streaming algorithms have gained great progress since its first systematic study by \cite{ams99}. 
Classic streaming problems ask how to estimate a function over a vector, which is under streaming updates.
For instance, \cite{ams99} approximates $\|v\|_p$ while  observing a sequence of updates to the coordinates of $v$.
The usual assumption is that $v\in \R^n$ and $n$ is so large that $v$ cannot be stored in memory easily.
Since \cite{ams99}, a line of research works (e.g. \cite{i00, iw05, bks02, bksv14, knw10}) gradually improve the algorithm and obtain nearly optimal upper and lower bounds.
Very recently, \cite{bo10a,bo10b, bvwy17} attempts to handle a more general set of functions.
\cite{bvwy17} gives a nearly optimal characterization of this problem.
\cite{bbcky17} studies a more general setting, i.e., functions that do not have a summation structure $f:\R^n\rightarrow \R$.
They give optimal characterization for streaming all symmetric norms.
Given theses advances, none of them solves our problem directly since a  streaming estimation only gives a value of vector, that is unrelated to the matrix formulation of the input.

\section{Preliminaries}\label{sec:preli}

\textbf{Notation.}
$[n]$ denotes the set $\{1,2,\cdots,n\}$. 
For a vector $ x\in \mathbb{R}^n$, $|x| \in \R^n$ denotes a vector whose $i$-th entry is $|x_i|$.
For a matrix $A\in \R^{n\times n}$, let $\| A\|$ denote its spectral norm, $\sigma_i(A)$ to denote its $i$-th largest singular value, and $[A]_k$ denote its best rank-$k$ approximation. Also let $\det(A)$ denote its determinant when $A$ is square.  
For a function $f$, $M=f(A)$ means entrywise transformation $M_{ij} = f(A_{ij})$. 
We also denote $A_{i*}$ as the $i$-th row of matrix $A$ and $A_{*j}$ as its $j$-th column.


\textbf{Problem Definition.}
The problem of interests is defined as follows. Suppose we have a underlying large matrix $A=(A_{i,j})\in \R^{n\times n}$ initialized as a zero matrix.\footnote{Our method also applies to non-square $A$; we consider square matrices for simplicity.}
Now, we have observed a sequence of updates of the form 
$
\langle (i_1, j_1, 
\Delta_1), (i_2, j_2, \Delta_2), \ldots, (i_m, j_m, \Delta_m) \rangle
$
for some $m=\poly(n)$, $i_t, j_t\in [n]$ and $\Delta_t\in \{-1, 1\}$. 
At the $t$-th update, we are updating the underlying matrix by $a_{i_t, j_t}\gets a_{i_t,j_t}+ \Delta_t$.
We assume that $m$ is bounded by $\poly(n)$.
Note that the assumptions of integer updates is without loss of generality.
For instance, if the updates is not an integer, we can round them to a specified precision $\epsilon>0$ and then scale them to integers.
The polynomially bounded length is also a usual and reasonable assumption. 
At the end of the stream, one would like to perform some learning task (such as low-rank approximation) on the matrix $M=f(A)$ for some fixed function $f:\R\rightarrow\R$ and would like to do so using as small space as possible, in particular, avoid storing the large matrix $A$.
Some examples of the transformation functions are
\begin{align}
	f(x) = \log (|x|+1), \text{~or~}
	f(x) = |x|^{\alpha}, ~ \forall \alpha\ge 0.
\end{align}
Functions of this form are important in machine learning. For example, $f(x) =\log(|x|+1)$ corresponds to the log likelihood function and $f(x) = |x|^\alpha$ corresponds to a general family of statistic models or feature expansion.

In this paper we would like to design a space efficient method for approximating $Z=f(A) B$ for a given matrix $B$, where $f(A)\in \R^{n\times n}$ and $B\in \R^{n\times k}$ for some integer $n$ and $k$ with $k\ll n$. 
We would like to design algorithms that uses space $\wt{O}(nk)$ instead of $\wt{O}(n^2)$.
This can then be used as a plug-in primitive and turn learning algorithms into space efficient ones if they only access $f(A)$ by matrix product with small $B$. 
More formally,

\begin{problem}[approximate transformed matrix and matrix product]
Given a fixed matrix $B$ and function $f:\R\rightarrow \R$, design an algorithm that makes a single pass over an update stream of a matrix $A$, output an approximated value of $f(A) B$ with high probability. We require the algorithm to use as small space as possible (without counting the space of $B$).
\end{problem}

We call our method the \emph{sketch for $f$-matrix product}. 
We then demonstrate its effectiveness in the applications of linear regression and low rank approximation on $M=f(A)$. 
Linear regression is to minimize $\|Mx - b\|_2^2$, and low rank approximation is defined as follows.

\begin{problem}[low-rank approximation]
Given integers $k \le n$, an $n \times n$ matrix $M$, two parameters $\epsilon, \delta>0$, the goal is to output an orthonormal $n\times k$ matrix $L$ such that
\begin{align*}
\| L L^\top M - M \|_F^2 \leq (1+\epsilon) \| M - [M]_k \|_F^2 + \delta.
\end{align*} 
where $[M]_k = \arg\min_{\rank-k~M'} \| M - M' \|_F^2$.
\end{problem}



\section{Sketch for $f$-Matrix Product }

\label{sec:matrix_product}
Our goal in this section is to compute the matrix product $f(A)B$ where $B$ is given and $A$ is under updating or can only be read entry by entry.
We observe that each entry of $Z=f(A)B$ can be written as a vector product: $Z_{i,j} = \langle f(A)_{i*}, B_{*j} \rangle$.
Thus, we will first design a primitive to compute each $Z_{i,j}$ using small space.
Running a primitive in parallel for each entry $Z_{i,j}$ results in our full algorithm for computing the matrix product.
In the following sections, we will first introduce the vector sketch problem and present our vector product primitives for different functions $f$.
Lastly, we will combine them to form a unified algorithm for matrix product.

\subsection{Sketch for $f$-Vector Product}
Recall that for given vectors $x,y\in \R^n$, the inner product is defined as
$\langle x, y\rangle=\sum_{i=1}^nx_iy_i$.
In our setting, we are also given a function $f:\R\rightarrow\R$ and a vector $x\in\R^n$ where the storage of $x$ is free, but not directly given $y$.
The \emph{$f$-vector} product is defined as $\langle x, f(y)\rangle$, where $f$ is applied to $y$ coordinate-wisely.
The updates to $y$ is a stream, i.e., we observe a sequence of integer pairs $(z_t, \Delta_t)$ for $t=1, 2, \ldots, m$, 
where each $z_t\in [n]$ and $\Delta_t\in\{-1,1\}$.
Thus, we initialize $y$ as a $y^{(0)}\gets0$, a zero-vector,
and at time $t$, the update to $y$ is described by
$
y^{(t)}\gets y^{(t-1)} +  \Delta_{z_t}\cdot e_{z_t}
$ 
where $e_{z_t}$ is the standard unit vector with only the $z_t$-th coordinate non-zero.
Our goal is to approximate $\langle x, f(y)\rangle$ without storing $y$, where $x$ is given to the algorithm without storage cost.  Formally, we define the following problem. 
	
\begin{problem}[approximate transformed vector and vector inner product]
Given a fixed vector $x$ and function $f:\R\rightarrow \R$, design an algorithm that makes a single pass over an update stream of a vector $y$, output an approximated value of $\langle f(y) , x \rangle$ with high probability. We require the algorithm to use as small space as possible (excluding the space of $x$).	
\end{problem}
We note that a na\"ive algorithm would be storing the vector $y$ as a whole. 
However such an algorithm is not feasible when $n$ is large or the demand of computing such inner products is too high (e.g., in our matrix applications for computing $Z=f(A)B\in \R^{n\times k}$, each entry of $Z$ is an inner product. 
If each inner product requires space $n$, then final space can be $O(n^2k)$ which is prohibitively high.). 
In Section~\ref{sec:sketch_logsum} below, we design an algorithm that accomplish this task for function $f(y)=\log(|y|+1)$, which only uses $\wt{O}(1)$ bits of memory.
In Section~\ref{sec:more general function}, we present a general framework that works for a general family of functions $f$ with nearly optimal space complexity.

\subsection{Sketch $\log(|\cdot |+1)$-Vector Product} \label{sec:sketch_logsum}

{\small
\begin{algorithm}[!hbt]
		\begin{algorithmic}[1]
		\State {\bf data structure} \textsc{LogSum} \Comment{Theorem~\ref{thm:logsum}}
		\Procedure{Initialize}{$x$}
			\State$\gamma\leftarrow\epsilon^{-2}\poly(\log n/\delta )$\label{eq:alg_gamma} 
			\State $t\leftarrow \Theta(\log n)$, $p_j \leftarrow 2^{-j}\cdot \gamma, \forall j \in [t]$
			\For{$j = 1 \to t$}
			\State Sample a $\log n$-wise independent hash function $h_j : [n] \rightarrow \{0,1\}$ such that 
			$
			\forall i
			\in[n]: \Pr[ h_j (i) = 1] = \min(  p_j, 1).
			$
			\State Sample a K-set structure \textsc{KSet}$_j$ with error parameter $\Theta(\delta/t)$ and memory budget $\epsilon^{-2}\poly(\log n/\delta)$
			\EndFor
		\EndProcedure
		\Procedure{\text{Update}}{$a$} \Comment{$a \in [n]$}
			\For{$j=1 \to t$}
			\If{$h_j(a) = 1$ and $x_{a}\neq 0$}
			\State \textsc{KSet}$_j$.update($a$)
			\EndIf
			\EndFor
			\EndProcedure
			\Procedure{\text{Query}}{$ $}
			\State Pick the largest $j$ such that \textsc{KSet}$_j$ does not return ``Fail''
			\State Let $v$ be the output of \textsc{KSet}$_j$, denote $S_j=\supp(v)$ 
			\State \Return $2^j \sum_{i \in S_j} x_i \log (|v_i|+1) $ 
			\EndProcedure
		\State {\bf end data structure}
		\end{algorithmic}\caption{}\label{alg:logsum} 
\end{algorithm}\vspace{-1mm}
}

Recall that, when $f(\cdot) = \log(|\cdot|+1)$, we are designing an algorithm for computing the inner product $\langle \log(|y|+1), x\rangle$, where $x,y\in \R^{n}$ are two vectors,  $x$ is given to the algorithm for free and $y$ is under updating.
Our full algorithm is Algorithm~\ref{alg:logsum}, which is composed of 3 sub-procedures: 
procedure \textsc{Initialize} is called on initialization with given vector $x$,
procedure \textsc{Update} is called when we go over the update stream of the vector $y$,
and procedure \textsc{Query} is called at the end to report the answer.
The detailed analysis of Algorithm~\ref{alg:logsum},\ can be found in Appendix \ref{sec:additional results}. 
We here sketch the high level ideas for how it works.
For ease of representation, we consider $x$ has no zero coordinates, since otherwise we can simply ignore these coordinates and change our universe $[n]$ to $\supp(x)$ accordingly.
Our algorithm is originated from \cite{bo10a} but it is much simplified in this paper.
From a high level, our algorithm can be viewed as an $\ell_0$-sampler, namely, sample uniformly at random from the support of an updating vector $y$.
Note that the support of $y$ is changing over time.
Thus it is non-trivial to maintain a uniform sample while using only small space.
We also note that it is necessary to sample coordinates from the support of $y$, since otherwise we can always construct worst-case examples for algorithms that sample coordinates uniformly from $[n]$.

We design our algorithm thus by maintaining independently $\Theta(\log n)$ many sub-vectors of the vector $y$.
Each sub-vector is generated by sampling a set of coordinates uniformly from $[n]$ with geometrically decreasing probabilities. 
For instance, in our algorithm, we first generate $\Theta(\log n)$ many hash functions, each defines a set $S_j\subset[n]$.
For each $i\in[n]$, we demand that $i\in S_j$ with probability $2^{-j}$.
Thus if the size of the support of $y$ is of order $\Theta(2^j)$, then we are expected to sample $\Theta(1)$ samples of $y$ using the set $S_j$.
We now describe how to maintain these sampled coordinates in memory.
For convinience we assume $\gamma=1$ in line \ref{eq:alg_gamma} in Algorithm~\ref{alg:logsum}.

For the case of insertion-only stream (once a coordinate of $y$ becomes larger than $0$, it stays so),
maintaining the sub-vector $y_{S_j}$ is a trivial task since the number of coordinates of $y_{S_j}$ is expected to be $O(1)$.
However, for $j'\le j$, the sub-vectors $y_{S_{j'}}$s contain too many coordinates.
We handle this quite straightforwardly: if any of them exceeds our memory budget, we just ignore them.
For the case of general stream, in which coordinates can be $0$ even they were non-zero at some time-point.
We will be using the K-set data structure presented in \cite{g07}. This data structure supports insertion and deletion of data points and can maintain the samples only if the number of final samples is under the memory budget.
The formal guarantee of the $K$-set data structure presented in Theorem~\ref{lem:k-set}.

Suppose now we have collected sufficiently many samples from the support of the vector $y$. 
Suppose the set of samples is collected using set $S_j$.
We can have an empirical estimator for the inner product as $2^{j}\sum_{i\in S_j} x_i\log(|y_i|+1)$.
Notice that this estimator is unbiased.
Also since the variance of the estimator is bounded by  
\begin{align*}
& ~ \sum_{i}2^jx_i^2\log^2(|y_i|+1) \\
= & ~ O(1) \cdot \|x\|_\infty^2\cdot\sum_{i}\log^2 (|y_i|+1) \cdot\log^2 m,
\end{align*}
 where $m$ is the length of the stream and is usually assumed to be of oder $\poly(n)$, thus we only need $\poly\log n$ samples to obtain an accurate estimation.

We summarize the main guarantee in the following theorem, while the formal proof can be found in Section~\ref{sec:additional results}.

\begin{theorem}[approximate inner product of transformed vector and vector]\label{thm:logsum}
Suppose vector $x\in \R^n$ is given without memory cost. There exists a streaming algorithm (data structure \textsc{LogSum} in Algorithm~\ref{alg:logsum}) that makes a single pass over the stream updates  to a vector $y\in \R^n$ and outputs $Z\in \mathbb{R}$, 
	such that, with probability at least $1-\delta$,
\begin{align*}
		|Z - \langle x, \log(|y|+1)\rangle| \le \epsilon\cdot \|x\|_\infty \cdot\sum_{i=1}^n\log(|y_i|+1).
\end{align*}
The algorithm uses space $O(\epsilon^{-2} \poly(\log ( n/\delta)))$ (excluding the space of $x$) has a $\poly(\log n,1/\epsilon)$ query time.
\end{theorem}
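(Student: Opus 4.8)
The plan is to build an unbiased estimator of $\langle x, \log(|y|+1)\rangle$ via an $\ell_0$-style sampling scheme over the support of $y$, control its variance, and then amplify the success probability by a median-of-means argument. Concretely, I would fix a level $j$ and consider the set $S_j=\{i\in[n]: h_j(i)=1\}$, where each coordinate is included independently (up to $\log n$-wise independence) with probability $p_j=2^{-j}\gamma$. The random variable $W_j := 2^j \sum_{i\in S_j} x_i\log(|y_i|+1)$ has expectation (over the hash function, conditioned on success of the K-set structure) equal to $\gamma\cdot\langle x,\log(|y|+1)\rangle$ when $\gamma=1$; more precisely, by linearity of expectation each term $x_i\log(|y_i|+1)$ appears with probability $p_j$, so rescaling by $2^j$ recovers the target (after folding in $\gamma$). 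The first step is therefore to verify unbiasedness rigorously, which only uses $1$-wise marginals of $h_j$.

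The main technical content is the variance bound. I would compute $\Var[W_j]$ using the (limited) independence of $h_j$: since the summands are pairwise independent under $2$-wise (hence $\log n$-wise) independence, the variance decouples into a sum of per-coordinate variances, yielding the bound sketched in the excerpt,
\begin{align*}
\Var[W_j] \le 2^j \sum_{i} x_i^2 \log^2(|y_i|+1) \le 2^j \|x\|_\infty^2 \sum_{i}\log^2(|y_i|+1).
\end{align*}
The crucial point is to choose the \emph{right} level $j^\star$: the \textsc{Query} procedure selects the largest $j$ for which \textsc{KSet}$_j$ does not fail, and by Theorem~\ref{lem:k-set} this level has $2^{j^\star}=\Theta(|\supp(y_{S_{j^\star}})|^{-1}\cdot\text{budget})$ scaled so that the expected number of retained samples is $\Theta(\epsilon^{-2}\poly\log(n/\delta))$. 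Substituting this $2^{j^\star}$ into the variance bound, the factor $2^{j^\star}$ is inversely proportional to the number of samples, so the variance shrinks like $\|x\|_\infty^2\,(\sum_i \log^2(|y_i|+1))/(\text{\#samples})$. Combining with the elementary inequality $\sum_i \log^2(|y_i|+1)\le (\sum_i \log(|y_i|+1))^2$ and a Chebyshev/Chernoff tail bound then gives the additive error $\epsilon\cdot\|x\|_\infty\cdot\sum_i\log(|y_i|+1)$ at a single level with constant probability.

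To upgrade constant probability to $1-\delta$, I would run $\Theta(\log(1/\delta))$ independent copies and take the median, a standard median-of-means boost; this is absorbed into the $\poly(\log n/\delta)$ factor in $\gamma$ and the K-set memory budget. The space accounting is then immediate: each of the $t=\Theta(\log n)$ levels stores one K-set structure of size $\epsilon^{-2}\poly(\log(n/\delta))$ and one $\log n$-wise independent hash function (representable in $O(\log^2 n)$ bits), for a total of $O(\epsilon^{-2}\poly(\log(n/\delta)))$, and the query time is dominated by extracting and summing over the $\poly\log n$ surviving samples.

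The step I expect to be the main obstacle is the interface with the K-set data structure: the \textsc{KSet}$_j$ that does not fail returns a sub-vector $v$ whose support $S_j$ is \emph{correlated} with the value of $2^j$ chosen at query time, so the conditioning that makes $W_{j^\star}$ unbiased must be handled carefully — one cannot naively apply linearity of expectation after conditioning on "this is the largest non-failing level." The clean way around this, which I would pursue, is to argue that the level $j^\star$ is essentially determined by the support size $\|y\|_0$ (which is fixed by the stream, not by the randomness of any single hash function) up to an $O(1)$ additive offset, so that conditioning on $j^\star$ is, with high probability, independent of the $h_{j^\star}$-marginals used in the variance computation; this decoupling, together with the failure probability $\Theta(\delta/t)$ per level union-bounded over the $t$ levels, is what delivers the clean guarantee of Theorem~\ref{thm:logsum}.
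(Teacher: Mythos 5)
Your proposal follows the same skeleton as the paper's proof (unbiased level-$j$ estimator, pairwise-independence variance bound, level matched to the support size of $y$, tail bound, boosting), but it contains a genuine quantitative gap in the variance accounting, and the inequality you plan to use afterwards cannot close it. At the selected level the sampling probability is $p_{j^\star}=\gamma 2^{-j^\star}$ with $\gamma=\epsilon^{-2}\poly\log(n/\delta)$, and the level is chosen so that the expected number of retained samples is $p_{j^\star}|\supp(y)|=\Theta(\gamma)$, i.e.\ $2^{j^\star}=\Theta(|\supp(y)|)$ (your formula $2^{j^\star}=\Theta(|\supp(y_{S_{j^\star}})|^{-1}\cdot\text{budget})$ would give $2^{j^\star}=\Theta(1)$, which already signals the confusion). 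The variance of the rescaled estimator is
\begin{align*}
\Var \le \frac{1}{p_{j^\star}}\sum_i x_i^2\log^2(|y_i|+1)
= \frac{|\supp(y)|}{\#\text{samples}}\cdot\|x\|_\infty^2\sum_i\log^2(|y_i|+1),
\end{align*}
not $\|x\|_\infty^2\sum_i\log^2(|y_i|+1)/\#\text{samples}$ as you claim: you have conflated $1/p_{j^\star}$ with $1/\#\text{samples}$, and these differ by the factor $|\supp(y)|$, which can be as large as $n$. Your compensating inequality $\sum_i\log^2(|y_i|+1)\le(\sum_i\log(|y_i|+1))^2$ is true but cannot repair this: with the correct variance it gives a standard deviation of order $\sqrt{|\supp(y)|/\gamma}\cdot\|x\|_\infty\sum_i\log(|y_i|+1)$, forcing $\gamma\gtrsim|\supp(y)|\epsilon^{-2}$, i.e.\ space linear in the support size. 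Indeed, by Cauchy--Schwarz one always has $|\supp(y)|\sum_i\log^2(|y_i|+1)\ge(\sum_i\log(|y_i|+1))^2$, so no purely algebraic inequality in this direction can absorb the missing factor.

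The missing ingredient --- and the step the paper's proof actually relies on --- is the integrality and polynomial boundedness of the stream: since updates are $\pm 1$ and the stream length is $m=\poly(n)$, every $i\in\supp(y)$ satisfies $\log 2\le\log(|y_i|+1)\le\log(m+1)$. Hence $|\supp(y)|\le\frac{1}{\log 2}\sum_i\log(|y_i|+1)$ and $\sum_i\log^2(|y_i|+1)\le\log(m+1)\sum_i\log(|y_i|+1)$, which together give
\begin{align*}
|\supp(y)|\sum_i\log^2(|y_i|+1)\le O(\log m)\Bigl(\sum_i\log(|y_i|+1)\Bigr)^2 ,
\end{align*}
so the variance is $O(\log n)\|x\|_\infty^2(\sum_i\log(|y_i|+1))^2/\gamma$ and Chebyshev (or the paper's Bernstein step) with $\gamma=\epsilon^{-2}\poly\log(n/\delta)$ finishes; this is precisely where the $\log m$ factor in the paper's chain of inequalities comes from. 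Your other ingredients are sound --- the median boosting in place of the paper's direct Bernstein application is a standard alternative, and your explicit handling of the correlation between the choice of the non-failing level and the hash marginals is in fact more careful than what the paper writes --- but without the integrality argument the variance step, and hence the proof, does not go through.
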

\begin{remark}
We also note that our algorithm naturally works for $f(y):=\log^c(|y|+1)$ for any constant $c$.
To modify our algorithm, we only need to keep slightly larger space and change the final estimation to be $2^j \sum_{i \in S_j} x_i \log^c (|v_i|+1)$. It also enjoys the same relative error guarantee in Theorem~\ref{thm:logsum}.
\end{remark}

\subsection{From Vector Product Sketch to Matrix Product Sketch} 
With the $f$-inner product sketch tools established, we are now ready to present the result for sketching the matrix product, $Z=f(A) B$.
Notice that each entry $Z_{i,j} := \langle f(A_{i}), B_j\rangle$ is an inner product.

Thus our algorithm for the matrix sketch is simply maintaining an $f$-inner product sketch for each $Z_{i,j}$.
In our algorithm, we assume that matrix $B$ is  given to the algorithm for free.
Thus, if $B\in\R^{n\times k}$ for some $k\ll n$, we only need to keep up to $\wt{O}(nk)$ vector product sketches, which cost in total $\wt{O}(nk)$ words of space.
For the ease of representation, we present our guarantee for matrix product for $f(z):=\log^c(|z|+1)$ for some $c$ or for $f(z) = z^{p}$ for $0\le p\le 2$, and for matrix $B\in \{-1, 0, 1\}^{n\times k}$. 
Our results can be generalized to a more general set of functions  and matrix $B$ using the results presented in Section~\ref{sec:more general function}. 
The proof of the following theorem is a straightforward application of Theorem~\ref{thm:logsum} and \ref{thm:sqrt-sketch}.
\begin{theorem}[approximate each coordinate of the transformed matrix]
	Given a matrix $B\in \{-1, 0, 1\}^{n\times k}$, and a function $f(x):=\log^c(|x|+1)$ for some $c$ or $f(x):=|x|^p$ for some $0\le p\le2$, then
	there exists a one-pass streaming algorithm that makes a single pass over the stream updates to an underlying matrix $A\in \R^n$ 
	and outputs a matrix $\wh{Z}$, such that, with probability at least $1-\delta$, for all $i,j$,
	\begin{align*}
	|\wh{Z}_{i,j} - Z_{i,j}| \le \epsilon \sum_{j'=1}^n f(|A_{i,j'}|).
	\end{align*}
	The algorithm uses space $\epsilon^{-2}nk\poly(\log ( n/\delta))$  and has an $nk\poly(\log n, 1/\epsilon)$ query time. 
\end{theorem}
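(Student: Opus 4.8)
The plan is to reduce the matrix product to $nk$ independent instances of the vector-product sketch, one per output entry $Z_{i,j}$, and then combine the per-entry guarantees by a union bound. First I would observe that each output entry is a single vector inner product, $Z_{i,j} = \langle f(A_{i*}), B_{*j}\rangle = \sum_{j'=1}^n f(A_{i,j'}) B_{j',j}$, where $A_{i*}$ is the $i$-th row of $A$ and $B_{*j}$ is the $j$-th column of $B$. Hence for each pair $(i,j)\in[n]\times[k]$ I instantiate one vector sketch — the data structure \textsc{LogSum} of Theorem~\ref{thm:logsum} when $f(x)=\log^c(|x|+1)$, or the data structure of Theorem~\ref{thm:sqrt-sketch} when $f(x)=|x|^p$ — feeding it the fixed vector $x:=B_{*j}\in\{-1,0,1\}^n$ (stored for free) and treating the $i$-th row $A_{i*}$ as the updated streaming vector $y$, so that the sketch approximates exactly $\langle f(y),x\rangle = Z_{i,j}$.

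Next I would specify how the stream is routed into these sketches. A single update $(i_t,j_t,\Delta_t)$ to $A$ changes only the entry $A_{i_t,j_t}$, hence it affects only the $k$ sketches indexed by $(i_t,\cdot)$; for each $j\in[k]$ I forward to sketch $(i_t,j)$ the coordinate update $(j_t,\Delta_t)$ of the vector $y=A_{i_t*}$. This preserves the one-pass property and keeps every sketch consistent with the row/column decomposition above. At query time I invoke the \textsc{Query} routine of all $nk$ sketches to read off the entries $\wh{Z}_{i,j}$.

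For the error bound I would plug in the per-vector guarantee. Since $B\in\{-1,0,1\}^{n\times k}$ we have $\|B_{*j}\|_\infty\le 1$, so Theorem~\ref{thm:logsum} (together with the remark extending it to $\log^c$) gives, for each fixed $(i,j)$ with probability at least $1-\delta'$, that $|\wh{Z}_{i,j}-Z_{i,j}|\le \epsilon\,\|B_{*j}\|_\infty\sum_{j'}f(|A_{i,j'}|)\le \epsilon\sum_{j'}f(|A_{i,j'}|)$; the case $f(x)=|x|^p$ with $0\le p\le 2$ is identical using Theorem~\ref{thm:sqrt-sketch}. To make this hold simultaneously for all $nk$ entries I set $\delta'=\delta/(nk)$ and take a union bound, which succeeds with probability at least $1-\delta$. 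Because $\log(n/\delta')=\log(n^2k/\delta)=O(\log(n/\delta))$ (recall $k\le n$), each sketch still uses only $\epsilon^{-2}\poly(\log(n/\delta))$ space and $\poly(\log n,1/\epsilon)$ query time; summing over the $nk$ instances yields the claimed total space $\epsilon^{-2}nk\,\poly(\log(n/\delta))$ and query time $nk\,\poly(\log n,1/\epsilon)$.

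The only point requiring care — rather than a genuine obstacle — is the bookkeeping around the union bound: I must check that shrinking the per-instance failure probability from $\delta$ to $\delta/(nk)$ inflates the space and query time of each individual sketch by at most the $\poly\log$ factor promised by Theorem~\ref{thm:logsum} and Theorem~\ref{thm:sqrt-sketch}, so that the aggregate bounds remain as stated. Since both vector primitives already depend only poly-logarithmically on the inverse failure probability, this is immediate, which is precisely why the theorem follows as a straightforward application of the two vector-product results.
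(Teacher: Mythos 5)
Your proposal is correct and follows essentially the same route as the paper: the paper likewise maintains one $f$-vector-product sketch per entry $Z_{i,j}=\langle f(A_{i*}),B_{*j}\rangle$ and invokes Theorem~\ref{thm:logsum} and Theorem~\ref{thm:sqrt-sketch}, calling the result a straightforward application of those two theorems. Your writeup merely makes explicit the update-routing and the union bound over the $nk$ entries (with $\delta'=\delta/(nk)$ absorbed into the $\poly\log$ factors), details the paper leaves implicit.
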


\begin{remark}\label{remark:norm}
We note that our sketch in the last theorem can be easily used to approximate the $2$-norm of each row of the matrix $f(A)$.
In this case, we simply choose $B\in \R^{n\times 1}$ as the all-$1$ vector and change $f(\cdot)$ to be $f^2(\cdot)$. 
For $f(x) =\poly\log(|x|+1)$ or $f(x) = |x|^p$ with $0\le p\le 1$, it can be easily verify that our output is a $(1\pm\epsilon)$ approximation to $f^2(A) \cdot \boldmath{1}$, hence the approximation of $2$-norm squared of each row of $f(A)$.
\end{remark}


\section{Application to Low Rank Approximation} \label{sec:lowrank}

\begin{algorithm*}[!t]

\caption{Low rank approximation of $M = \log(|A|+1)$} \label{alg:main_lowrank}
\begin{algorithmic}[1]
\Procedure{\textsc{LowRankApprox}}{$A,k,\epsilon$} \Comment{Theorem~\ref{thm:main_lowrank}}
\State $s \leftarrow O(k \log k)$
\State $d_1 \leftarrow O(k \log^2 k)$
\State $d_2 \leftarrow O(k/\epsilon)$
\State $\eta \leftarrow O( \epsilon \sqrt{d_1} + \epsilon^2 d_1 )$
\State \Comment{Step 1 : Sampling according to generalized leverage scores of $M$}
\State Let $S$ be the CountSketch (SparesJL) matrix of size $s \times n$ \Comment{ Appendix ~\ref{sec:def_count_sketch_gaussian}}   
\State Let $S_+$ and $S_-$ be its positive and negative parts of $S$. 
\State $R \leftarrow [ S_+ ; S_- ]$
\State $\tilde{E} \leftarrow \textsc{LogSum}(RM)$ \Comment{ $\| \wt{E}_i \|_2^2 = (1\pm \epsilon) \| ( RM )_i \|_2 , \forall i $}
\State Sample a set $P$ of $d_1$ columns of $M$ according to the leverage score of $\wt{E}$.\label{alg:sample p} \Comment{Definition~\ref{def:leverage_score_sampling}}
\State \Comment{Step 2 : Adaptive sampling}
\State$[Q_p, \cdot ] \leftarrow \textsc{QRFactorization}(P)$ \label{alg:qrdecomposition} \Comment{$Q_p$ is the basis vectors for $P$}
\State $\tilde\Gamma \leftarrow \textsc{LogSum} ( Q_p^\top M )$ \Comment{ $\| \wt{\Gamma}_i \|_2^2 = (1\pm \epsilon) \| ( Q_p^\top M )_i \|_2^2, \forall i$ }
\State $\tilde{z} \leftarrow \textsc{LogSum} (M)$ \Comment{ $\tilde{z}_i = (1 \pm \epsilon)\|M_i\|_2^2,\forall i$ }
\State $\tilde s_i \leftarrow \tilde{z}_i - \|\tilde\Gamma_i\|_2^2$
\State Sample a set $\tilde{Y}$ of $d_2$ columns from $M$ according to $ p_i = \max ( \tilde s_i, \eta \tilde{z}_i ) $
\State $Y \leftarrow \tilde{Y} \cup P$
\State \Comment{Step 3 : Computing approximation solutions}
\State $[Q_y, \cdot ] \leftarrow \textsc{QRFactorization} ( Y )$ \Comment{$Q_y$ is the basis vectors for $Y$}
\State  $\tilde\Pi \leftarrow \textsc{LogSum} (Q_y^\top M)$ \Comment{ $ \| \wt{\Pi}_i \|_2^2 = (1 \pm \epsilon^2 ) \| (Q_y^\top M)_i \|_2^2 , \forall i$}
\State Compute the top $k$ singular vectors $\tilde{W}$ of $\tilde\Pi$
\State $L \leftarrow Q_y\tilde{W}$
\State \Return $L$
\EndProcedure
\end{algorithmic}
\end{algorithm*}

This section considers the concrete application of rank-$k$ approximation for $M$ where $M_{i,j}=\log(|A_{i,j}|+1)$, i.e., finding $k$ orthonormal vectors $L$ such that $\|M - LL^\top M\|_F$ is minimized. Our algorithm for rank-$k$ approximation is presented in Algorithm \ref{alg:main_lowrank}. Low rank approximation for other functions $f$ follows the same algorithm and  similar analysis.

There exists a large body of work for low rank approximation (see, e.g.,~\cite{halko2011finding,DriMagMahWoo12,w14,cw13,mm13,nn13,cw15focs,rsw16,swz17,cgklpw17,swz18,bw18,kprw19,swz19c,swz19,swz19b,song19,bbbklw19,djssw19,bcw19,ivww19,bwz19} and references therein) but most of them are designed for the case without transformation and thus cannot be directly applied. As mentioned in previous sections, if an algorithm only accesses the transformed matrix via a matrix product, plugging in our sketching method leads to a suitable algorithm.
We design an algorithm that applies generalized leverage score sampling approach~\cite{DriMagMahWoo12,balcan2016communication} for low-rank approximation.
Leverage score sampling is a non-oblivious sketching technique that is widely used in numerical linear algebra and has been successfully
applied to speed up different problems such as linear regression \cite{cw13,psw17,akklns17,swz19,dswy19}, row sampling \cite{ss11,lmp13}, spectral approximation \cite{clmmps15}, low rank approximation \cite{bw14,swz17,swz19}, cutting plane methods \cite{v89,lsw15,jlsw20}, linear programming \cite{blss20}, computing John Ellipsoid \cite{ccly19}. From the perspective of graph problems, leverage score is closely related to random spanning tree \cite{s18,ks18}, graph sparsification and Laplacian system solver \cite{st04,ss11,bss12}. 
Readers may refer to Appendix \ref{sec:leverage_score} for more detailed discussion on leverage score sampling.

On a high level, we would like to sample matrix $M\in \R^{n\times n}$ according to its leverage scores. It turns out it is sufficient to use the leverage scores of $SM$ where $S$ is a sketching matrix. We apply Algorithm~\ref{alg:logsum} to do so and obtain the sampled set $P$ (Step 1). We then apply the technique of adaptive sampling to refine the sampling and obtain $Y$ (Step 2) so that we have better control over the rank, and finally compute the solution using $Y$ by taking projection and computing singular vectors (Step 3). Detailed description and analysis of Algorithm \ref{alg:main_lowrank} can be found in Appendix \ref{sec:proof_sv_1}. Overall we have the following guarantee.

%

\begin{theorem}[low-rank approximation]\label{thm:main_lowrank}
For any parameter $\epsilon \in (0,1)$ and integer $k \geq 1$, 
there is an algorithm (procedure \textsc{LowRankApprox} in Algorithm~\ref{alg:main_lowrank}) that runs in $\wt{O}(n) \cdot k^3\cdot \poly( 1/\epsilon )$ time, takes $\tilde{O}(n) \cdot k^3/\epsilon^2$ spaces, and outputs a matrix $L\in \R^{n\times k}$ such that
\begin{align*}
	\| LL^\top M - M \|_F^2  \leq & ~ 10 \cdot \| M - [M]_k \|_F^2
	  \\
	  & ~ + O\left( \frac{\epsilon^2}{k^3\log^5 k} \right) \cdot \| M \|_{1,2}^2,
\end{align*}
holds with probability at least $9/10$, where $\| M \|_{1,2} = ( \sum_j \| M_{*,j} \|_1^2 )^{1/2}$. 
\end{theorem}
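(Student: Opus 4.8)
The plan is to establish Theorem~\ref{thm:main_lowrank} by tracking the error through the three sampling stages of Algorithm~\ref{alg:main_lowrank}, treating each application of \textsc{LogSum} as an approximate oracle via Theorem~\ref{thm:logsum} (combined with Remark~\ref{remark:norm}, which lets us approximate squared row/column norms of $f(A)$). The core idea is that the algorithm implements generalized leverage-score sampling plus adaptive sampling on $M = \log(|A|+1)$ \emph{without ever materializing $M$}, and the only deviation from the exact (in-memory) versions of these well-understood subroutines is that all the norms driving the sampling probabilities are known only up to a $(1\pm\epsilon)$ multiplicative factor. So the strategy is: (i) recall the exact guarantees of leverage-score sampling and adaptive sampling for low-rank approximation; (ii) show that perturbing the sampling distribution by $(1\pm\epsilon)$ factors only degrades these guarantees by controllable amounts; and (iii) propagate the resulting additive error, which should scale with $\|M\|_{1,2}^2$ because the \textsc{LogSum} error in Theorem~\ref{thm:logsum} is additive in $\sum_i f(|A_{i,j}|) = \|M_{*,j}\|_1$.

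First I would analyze Step 1. Here $R=[S_+;S_-]$ is built from a CountSketch $S$ split into positive and negative parts so that $RM$ has nonnegative entries and $\|(RM)_i\|_2$ is an entrywise-transformable quantity that \textsc{LogSum} can estimate; the output $\wt E$ satisfies $\|\wt E_i\|_2^2 = (1\pm\epsilon)\|(RM)_i\|_2$. I would invoke the standard subspace-embedding / approximate-matrix-product properties of CountSketch to argue that sampling $d_1 = O(k\log^2 k)$ columns according to the leverage scores of $\wt E$ yields a set $P$ whose column span captures a good rank-$k$ subspace, i.e. a constant-factor (here leading to the factor $10$) spectral/Frobenius guarantee up to the sampling approximation. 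The $(1\pm\epsilon)$ error in the scores only inflates the number of samples by a constant and so is absorbed into the $O(k\log^2 k)$ bound.

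Next I would handle Step 2, the adaptive sampling stage. After QR-factorizing $P$ to get $Q_p$, the algorithm uses \textsc{LogSum} to estimate $\|(Q_p^\top M)_i\|_2^2$ and $\|M_i\|_2^2$, forming the residual estimate $\tilde s_i = \tilde z_i - \|\wt\Gamma_i\|_2^2$, a proxy for the squared residual norm $\|M_i - Q_p Q_p^\top M_i\|_2^2$. The subtraction of two approximate quantities is the delicate point: a multiplicative $(1\pm\epsilon)$ error on each of $\tilde z_i$ and $\|\wt\Gamma_i\|_2^2$ becomes an \emph{additive} error on their difference of size $O(\epsilon)\|M_i\|_2^2$, which is why the code thresholds the sampling probability from below by $\eta\,\tilde z_i$ with $\eta = O(\epsilon\sqrt{d_1}+\epsilon^2 d_1)$. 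I would show that sampling $d_2 = O(k/\epsilon)$ columns from this floored distribution still satisfies the adaptive-sampling guarantee of \cite{DriMagMahWoo12,balcan2016communication}, with the flooring contributing the additive $\|M\|_{1,2}^2$ term (since the worst-case mass we may have mis-estimated is bounded by column $\ell_1$ norms through Theorem~\ref{thm:logsum}). Combining $Y = \tilde Y \cup P$ then gives a subspace $Q_y$ whose projection achieves the $(1+\epsilon')$-type bound that collapses into the constant $10$ after accounting for the Step-1 constant factor.

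Finally, Step 3 computes $\wt\Pi = \textsc{LogSum}(Q_y^\top M)$ (now with the sharper $(1\pm\epsilon^2)$ accuracy, which is what forces the finer error budget), takes its top-$k$ right singular vectors $\wt W$, and outputs $L = Q_y\wt W$. I would argue that computing the SVD on the sketched projection $\wt\Pi$ rather than on the true $Q_y^\top M$ costs only the $(1\pm\epsilon^2)$ factor, and that projecting onto $Q_y$ and taking the best rank-$k$ subspace inside that span is within the desired factor of the global optimum by the standard ``project-and-best-fit'' argument. Collecting the three error contributions, the multiplicative factors multiply to the stated constant $10$ and all additive errors combine into a single $O(\epsilon^2/(k^3\log^5 k))\cdot\|M\|_{1,2}^2$ term after choosing the internal accuracy parameters of each \textsc{LogSum} call appropriately; the running time and space follow from maintaining $\wt O(n)\cdot k^3/\epsilon^2$ many \textsc{LogSum} sketches, each of size $\poly(\log(n/\delta))/\epsilon^2$. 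The main obstacle I anticipate is Step 2: controlling how the multiplicative \textsc{LogSum} errors on $\tilde z_i$ and $\|\wt\Gamma_i\|_2^2$ turn into an additive perturbation of the residual scores without breaking the adaptive-sampling bound, and verifying that the thresholding by $\eta\tilde z_i$ exactly compensates so that the final additive error is charged to $\|M\|_{1,2}^2$ rather than to an uncontrolled quantity. This is where the precise choices $d_2 = O(k/\epsilon)$, $\eta = O(\epsilon\sqrt{d_1}+\epsilon^2 d_1)$, and the $(1\pm\epsilon^2)$ accuracy in Step 3 must be reconciled, and a careful union bound over all $\wt O(nk)$ sketch instances is needed to reach the overall $9/10$ success probability.
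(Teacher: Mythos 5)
Your proposal is correct in outline and takes essentially the same route as the paper's own proof: it mirrors the paper's three lemmas — generalized leverage-score sampling (Lemma~\ref{lem:sample_score}), adaptive sampling with the floored probabilities $p_i=\max(\tilde s_i,\eta\tilde z_i)$ analyzed by exactly the case split you anticipate (Lemma~\ref{lem:adaptive}), and the project-then-SVD step (Lemma~\ref{lem:approxLA}) — followed by the same final rescaling of $\epsilon$ to reach the stated bound. The only cosmetic difference is bookkeeping: the paper charges the Step-2 and Step-3 additive errors to $\|M\|_F^2$ and converts everything to $\|M\|_{1,2}^2$ only at the end via $\|M\|_F\le\|M\|_{1,2}$, whereas you charge them to $\|M\|_{1,2}^2$ directly, which changes nothing substantive.
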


For a large $n$ and fixed $\epsilon$, our algorithm uses much less space than storing the full matrix. 
Note that our algorithm still needs to make several passes over the stream of updates. Whether there exists a one-pass algorithm is still an open problem, and is left for future work. 

\section{Experiments} \label{sec:exp}

\begin{figure*}[!th]
\centering
\subfloat[\LOGDATA, $n=10^4$]{\includegraphics[height=0.25\linewidth]{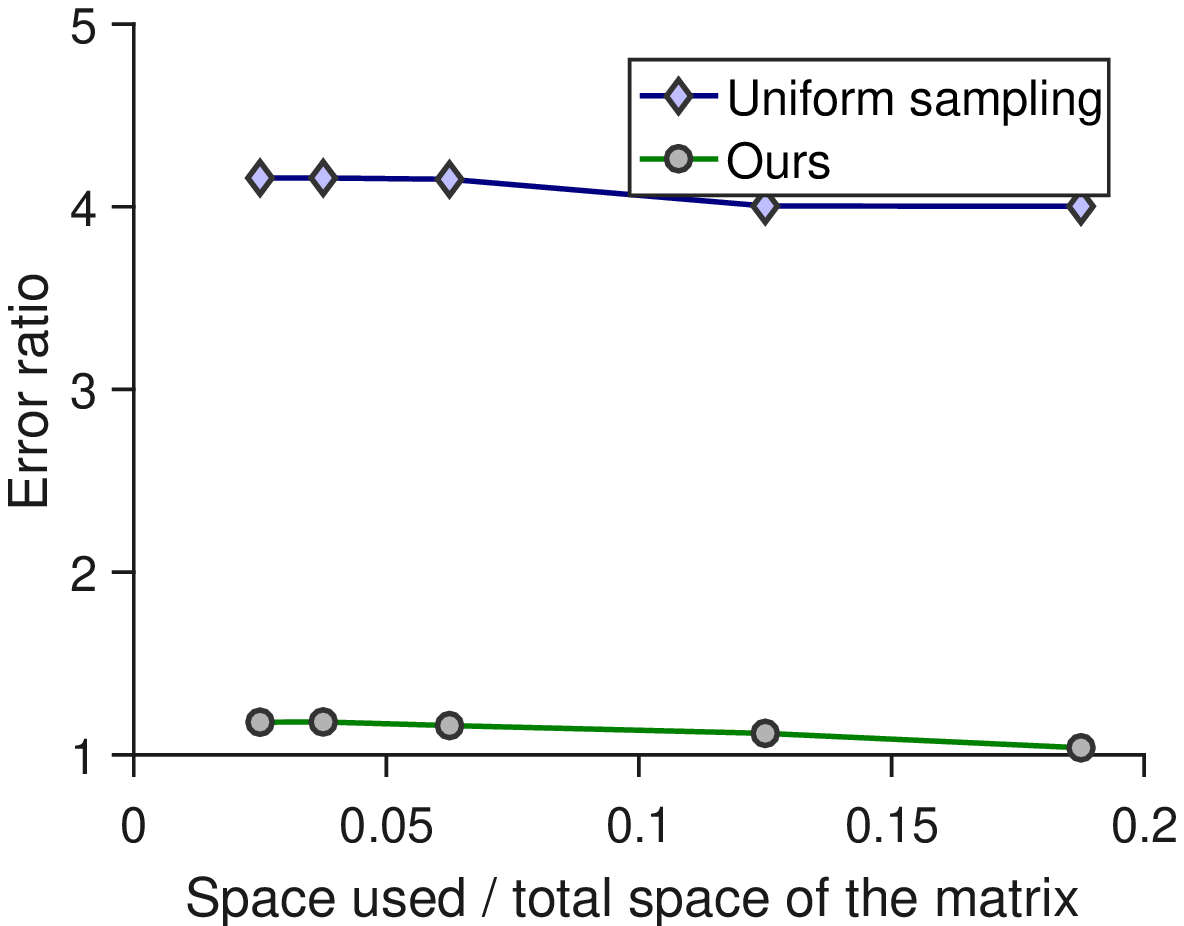}}~
\subfloat[\LOGDATA, $n=3\cdot 10^4$]{\includegraphics[height=0.25\linewidth]{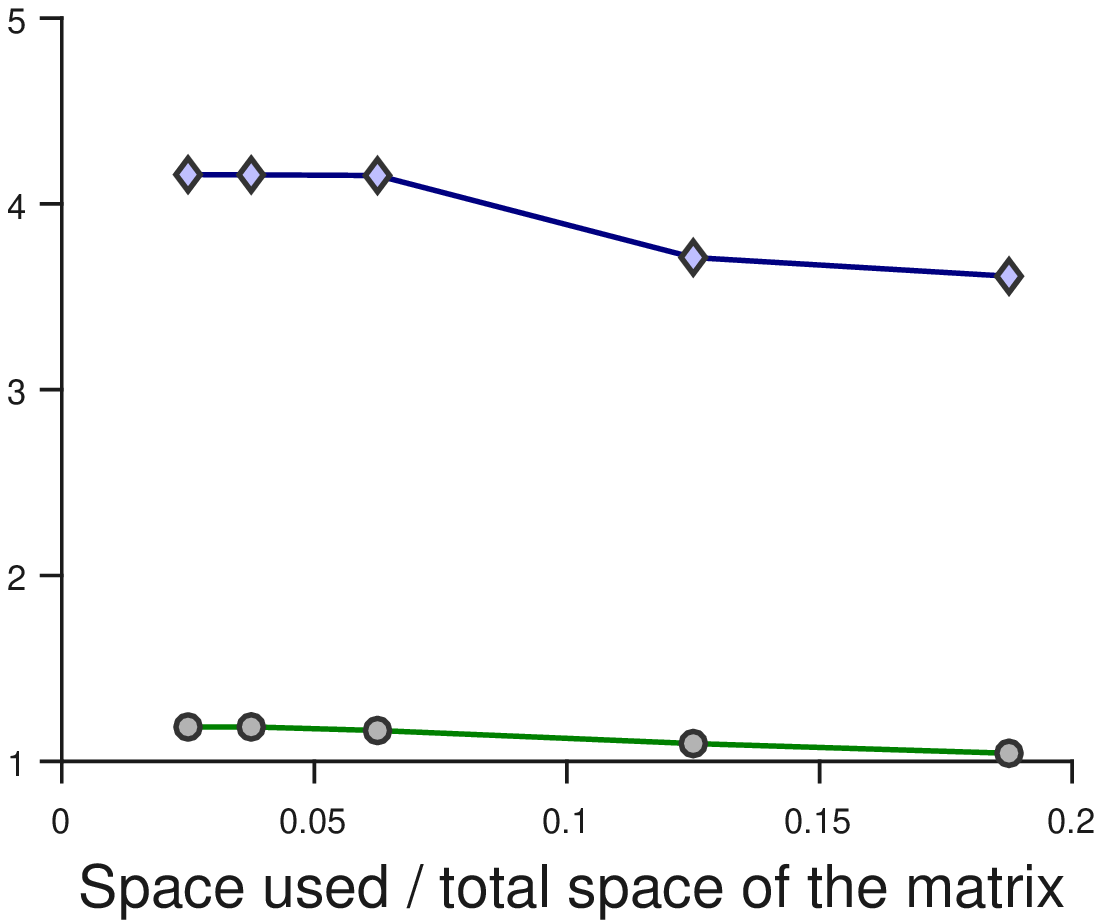}}~
\subfloat[\LOGDATA, $n=5\cdot 10^4$]{\includegraphics[height=0.25\linewidth]{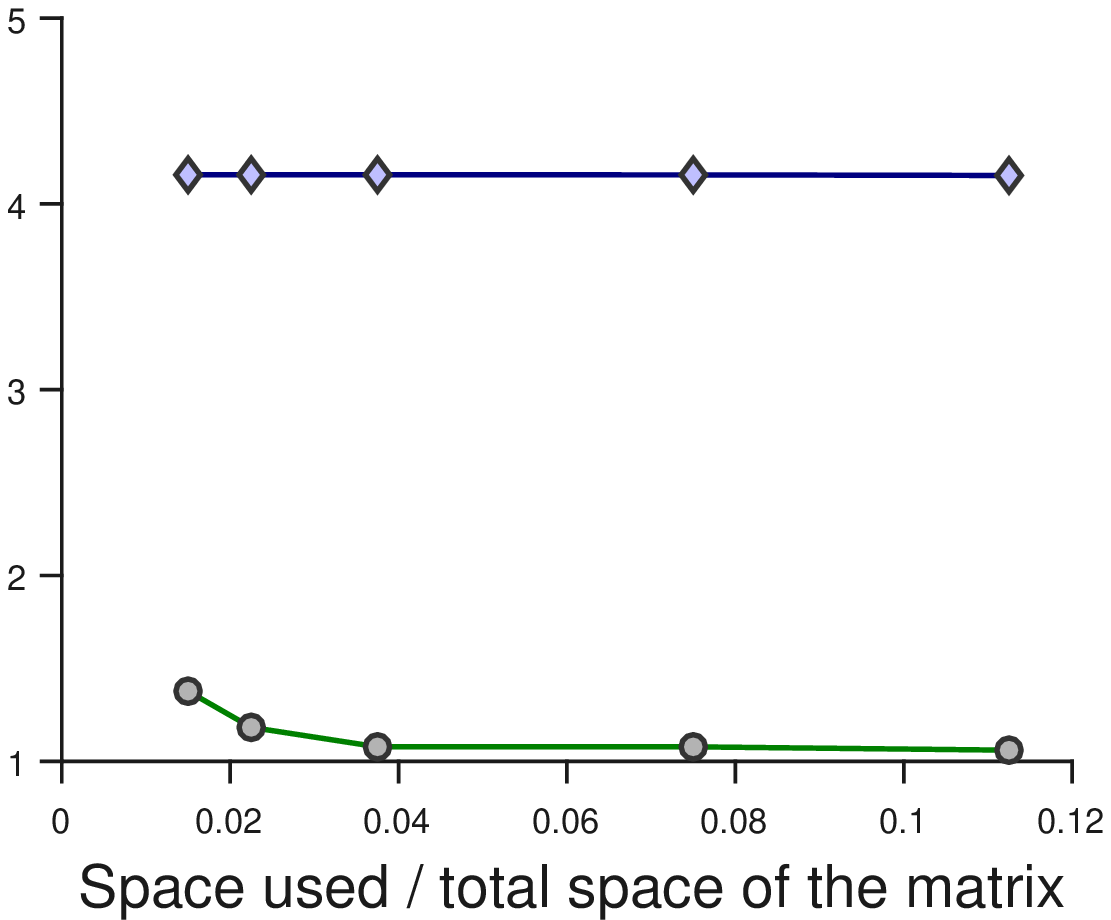}}
\\
\subfloat[Real data, $n=10^4$]{\includegraphics[height=0.25\linewidth]{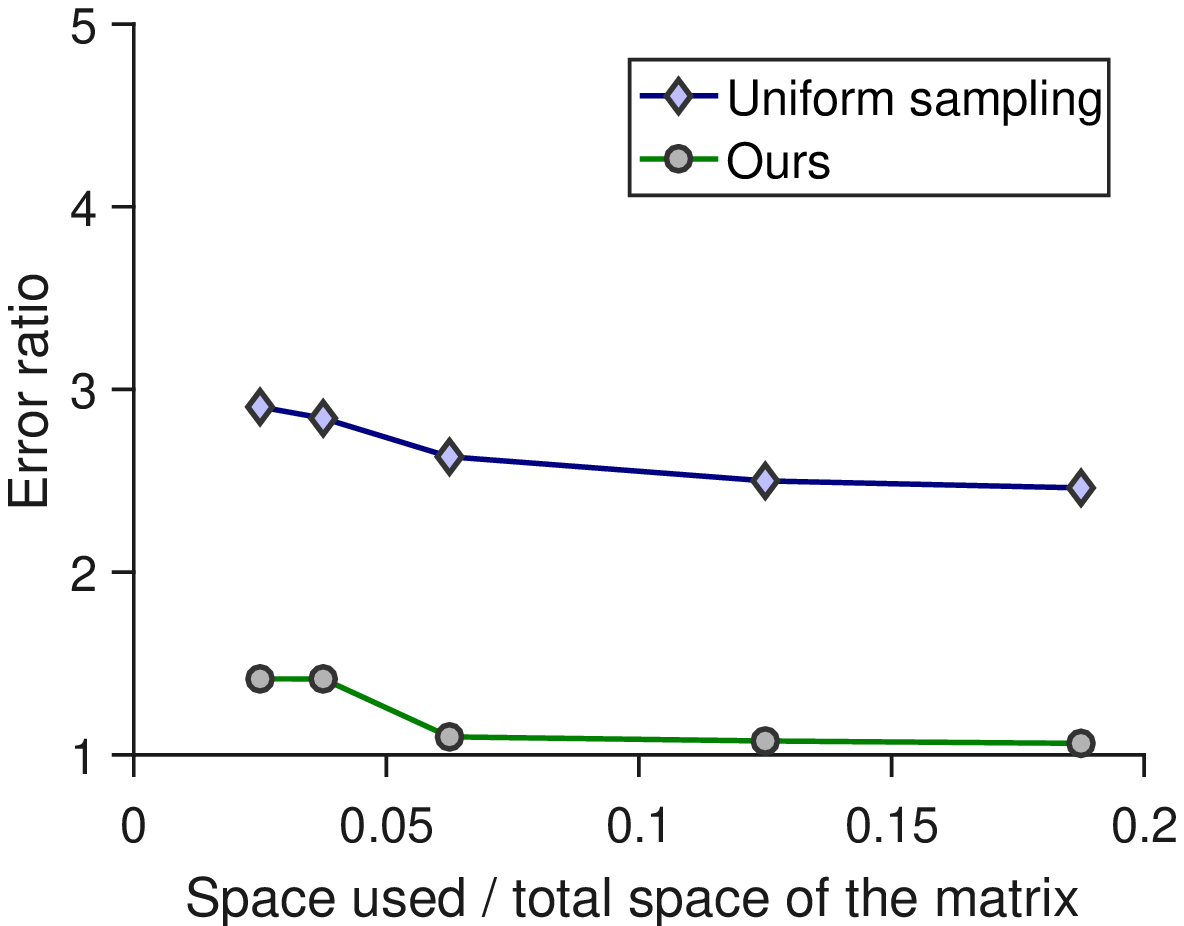}}~
\subfloat[Real data, $n=3\cdot 10^4$]{\includegraphics[height=0.25\linewidth]{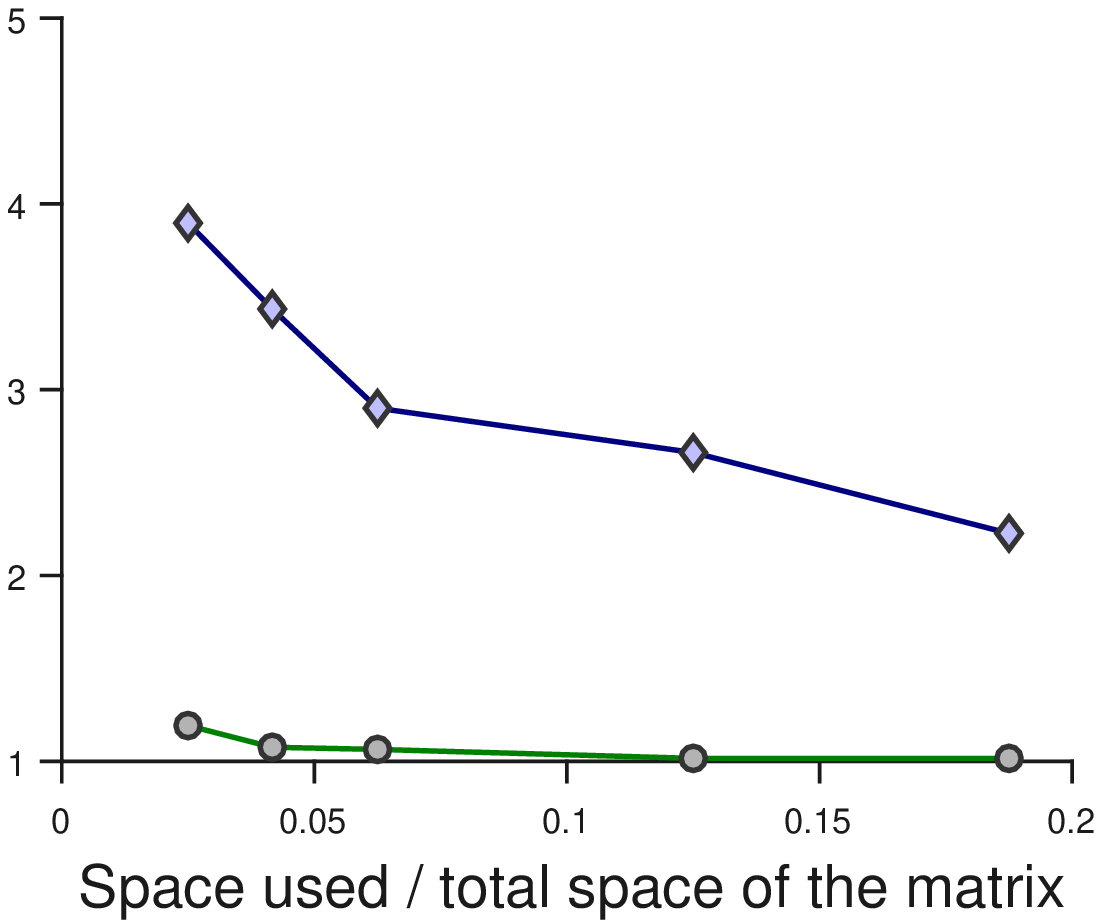}}~
\subfloat[Real data, $n=5\cdot 10^4$]{\includegraphics[height=0.25\linewidth]{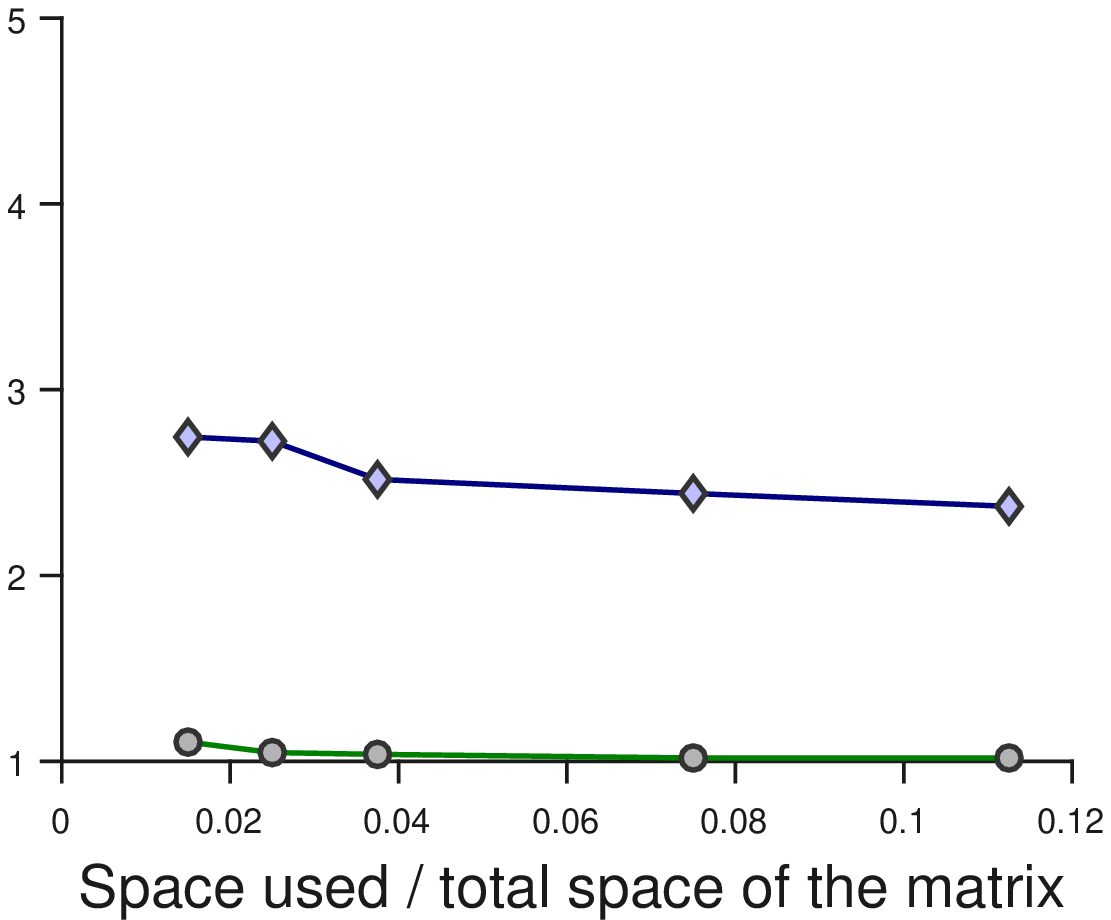}}
\caption{ Error ratios on the synthetic data (top row) and the real data (bottom row). The $x$-axis is the ratio between the amount of space used by the algorithms and the total amount of space occupied by the data matrix. The $y$-axis is the ratio between the error of the solutions output by the algorithms and the optimal error. 
}
\label{fig:syn}
\end{figure*}


To demonstrate the advantage of our proposed method, we complement the theoretical analysis with empirical study on synthetic and real data. We consider the low rank approximation task with $f(x)=\log(|x|+1)$.
We adjust the constant factors in the amount of space used by our method and compare the errors of the obtained solutions. 
In the appendix, we describe more experimental details.
We also provide additional experiments in the appendix to show that the method also works for $f(x)=\sqrt{|x|}$.

We furthre demonstrate the robustness of the parameter selections in the algorithm.

\textbf{Setup.}
Given a data stream in the form of $(i_t,j_t,\delta_t)$, we use the algorithm in Section~\ref{sec:lowrank} to compute the top $k=10$ singular vectors $L$, and then compare the error of this solution to the error of the optimal solution (i.e., the true top $k$ singular vectors). 
Let $A$ denote the accumulated matrix, $M=f(A)$ denote the transformed one, and $U$ denote the top $k$ singular vectors of $M$. 
Then the evaluation criterion is 
\begin{align*}
  \text{error-ratio}(L) = \|M - LL^\top M\|_F / \|M-UU^\top M\|_F.
\end{align*}
Clearly, the error ratio is at least $1$, and a value closer to $1$ means a better solution.  

Besides demonstrating the effectiveness, we also exam the tradeoff between the solution quality and the space used. Recall that there are constant parameters in the sketching methods controlling the amount of space used. We vary its value, and set the parameters in other steps of our algorithm so that the amount of space used is dominated by that of the sketch. We then plot how the error ratios change with the amount of space used.
The plotted results are averages of 5 runs; the variances are too small to plot. 
Finally, we also report the results of a baseline method: uniformly at random sample a subset $T$ of columns from $A$, and then compute the top $k$ singular vectors of $f(T)$. The space occupied by the columns sampled is similar to the space required by our algorithm for fair comparison. 
We choose uniform sampling as baseline because to the best of the authors' knowledge,
our algorithm is the first one to deal with low-rank approximation on transformed matrix in the stream setting,
and we are not aware of any other non-trivial algorithm working in this setting.

\subsection{Synthetic Data}

\paragraph{Data Generation.}

The data sets \LOGDATA~are generated as follows. First generate a matrix $M$ of $n\times n$ where the entries are i.i.d.\ Gaussians. To break the symmetry of the columns, we scale the norm of the $i$-th column to $4/i$. Finally, we generate matrix $A$ with $A_{ij}=\exp(M_{ij})-1$. Each entry $A_{ij}$ is divided into equally into $5$ updates $(i,j,A_{ij}/5)$, and all the updates arrive in a arbitrary order. The size $n$ can be $10000$, $30000$, and $50000$.

\paragraph{Parameter Setting.}

In our algorithm for low rank approximation, an FJLT matrix $S$ is used~\cite{a03,ac06}. For the sketching subroutine, instead of specifying the desired $\epsilon$, we directly set the size of the data structure (line 19 in \textsc{LogSum}), so as to exam the tradeoff between space and accuracy. 
We set $m_c=m_s=m_a$ and set their value so that the space used is at most that used by the sketch method. 

\paragraph{Results.}

Figure~\ref{fig:syn} top row shows the results on the synthetic data. 
In general, the error ratio of our method is much better than that of the uniform sampling baseline: ours is close to 1 while that of uniform sampling is about 4. 
It also shows that our method can greatly reduce the amount of space needed, e.g., by orders of magnitude, but still preserve a good solution. 
This advantage is more significant on larger data sets.  For example, when $n=50000$, to obtain $5\%$  error over the optimum solution, we only needs space corresponding to $5\%$ of the size of the matrix.

\subsection{Real Data}

We experiment our method on the real world data from NLP applications, which are the motivating examples for our approach.
Our method with $f(x)=\log(|x|+1)$ is used. The parameters are set in a similar way as for the synthetic data.

\paragraph{Data Collection.}

The data set is the entire Wikipedia corpus~\cite{enwiki} consisting of about 3 billion tokens. Details can be found in the appendix and only a brief description is provided here. The matrix to be factorized is $M$ 
with $M_{ij} = p_j\log (\frac{N_{ij} N }{N_i N_j}+1)$ where 
$N_{ij}$ is the number of times words $i$ and $j$ co-occur in a window of size $10$, $N_i$ is the number of times word $i$ appears, $N$ is the total number of words in the corpus, and $p_j$ is a weighting factor depending on $N_j$ (putting larger weights on more frequent words). 
Note that $N_i$'s and $N$ can be computed easily, so essentially the only dynamically update part is $\log N_{ij}$. The data stream is generated by considering each window of size 10 along the sentences in the corpus and collecting the co-occurrence counts of the word pairs in that window.
We consider the matrix for the most frequent $n$ words, where $n=10000$, $30000$, and $50000$.

\paragraph{Results.}

Figure~\ref{fig:syn} bottom row shows the results on the real data. The observations are similar to those on the synthetic data: the errors of our method are much smaller than the baseline, and are close to the optimum. 
 These results again demonstrate the accuracy and space efficiency of our methods.
\section{Conclusions} \label{sec:conclusion}

We considered the setting where a large matrix is updated by a data stream and the learning tasks is performed on an element-wise transformation of the matrix. 
We proposed a method for computing the product of its element-wise transformation with another given matrix.
For a large family of transformations, 
our method only needs a single pass over the data and provable guarantees on the error.
Our method uses much smaller space than directly storing the matrix.  
Our approach can be used as a building block for many learning tasks.
We provided a concrete application for low-rank approximation with theoretical analysis and empirical verification, showing the effectiveness of this approach.


\newpage

\bibliographystyle{alpha}
\bibliography{ref}

\onecolumn
\appendix

\section{Preliminaries}

\subsection{CountSketch and Gaussian Transforms}\label{sec:def_count_sketch_gaussian}
\begin{definition}[Sparse embedding matrix or CountSketch transform]\label{def:count_sketch_transform}
A CountSketch transform is defined to be $\Pi=  \Phi D\in \mathbb{R}^{m\times n}$. Here,  $D$ is an $n\times n$ random diagonal matrix with each diagonal entry independently chosen to be $+1$ or $-1$ with equal probability, and $\Phi\in \{0,1\}^{m\times n}$ is an $m\times n$ binary matrix with $\Phi_{h(i),i}=1$ and all remaining entries $0$, where $h:[n]\rightarrow [m]$ is a random map such that for each $i\in [n]$, $h(i) = j$ with probability $1/m$ for each $j \in [m]$. For any matrix $A\in \mathbb{R}^{n\times d}$, $\Pi A$ can be computed in $O(\nnz(A))$ time. 
\end{definition}

\begin{definition}[Gaussian matrix or Gaussian transform]\label{def:gaussian_transform}
Let $S=\frac 1{\sqrt{m}}\cdot G \in \mathbb{R}^{m\times n}$ where  each entry of $G\in \mathbb{R}^{m\times n}$ is chosen independently from the standard Gaussian distribution. For any matrix $A\in \mathbb{R}^{n\times d}$, $SA$ can be computed in $O(m \cdot \nnz(A))$ time. 
\end{definition}

We can combine CountSketch and Gaussian
transforms to achieve the following:
\begin{definition}[CountSketch + Gaussian transform]\label{def:fast_gaussian_transform}
Let $S' = S \Pi$, where $\Pi\in \mathbb{R}^{t\times n}$ is the CountSketch transform (defined in Definition~\ref{def:count_sketch_transform}) and $S\in \mathbb{R}^{m \times t}$ is the Gaussian transform (defined in Definition~\ref{def:gaussian_transform}). For any matrix $A\in \mathbb{R}^{n\times d}$, $S'A$ can be computed in $O(\nnz(A) + dtm^{\omega-2})$ time, where $\omega$ is the matrix multiplication exponent.
\end{definition}

\subsection{Pythagorean Theorem, matrix form}
Here we state a Pythagorean Theorem for matrices.
\begin{theorem}[Pythagorean Theorem]\label{thm:pythagorean}
For any integers $m,n>0$ and matrices $A,B\in \R^{m\times n}$,
if $\Tr[A^\top B]=0$,
then 
\begin{align*}
\|A+B\|_F^2=\|A\|_F^2+\|B\|_F^2
\end{align*}
\end{theorem}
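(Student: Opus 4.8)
The plan is to reduce the statement to the trace identity $\|X\|_F^2 = \Tr[X^\top X]$, which holds for any matrix $X$ since the $(i,i)$ entry of $X^\top X$ equals $\sum_j X_{j,i}^2$ and summing over $i$ recovers $\sum_{i,j} X_{i,j}^2$. Once this is in hand, the claim is no longer an analytic statement about norms but a purely algebraic manipulation of traces, which is exactly the form in which the orthogonality hypothesis $\Tr[A^\top B]=0$ is phrased.

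First I would apply this identity to $A+B$ and expand the product inside the trace:
\begin{align*}
\|A+B\|_F^2 = \Tr[(A+B)^\top(A+B)] = \Tr[A^\top A + A^\top B + B^\top A + B^\top B].
\end{align*}
Then, using linearity of the trace, I would split this into four terms and identify $\Tr[A^\top A] = \|A\|_F^2$ and $\Tr[B^\top B] = \|B\|_F^2$, leaving the two cross terms $\Tr[A^\top B]$ and $\Tr[B^\top A]$ to be dealt with.

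The only step that requires a moment of thought is showing that \emph{both} cross terms vanish, since the hypothesis only supplies $\Tr[A^\top B]=0$ directly. For the remaining term I would invoke invariance of the trace under transposition, so that $\Tr[B^\top A] = \Tr[(B^\top A)^\top] = \Tr[A^\top B] = 0$. Substituting both cross terms gives $\|A+B\|_F^2 = \|A\|_F^2 + \|B\|_F^2$, completing the argument. There is no genuine obstacle here: the proof is a one-line computation, and the only subtlety is recognizing that the single orthogonality condition $\Tr[A^\top B]=0$ already forces the transposed cross term to vanish as well.
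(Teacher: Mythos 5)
Your proof is correct. The paper actually states Theorem~\ref{thm:pythagorean} without any proof (it is presented as a standard fact), and your argument — writing $\|A+B\|_F^2 = \Tr[(A+B)^\top(A+B)]$, expanding by linearity of the trace, and using $\Tr[B^\top A] = \Tr[(B^\top A)^\top] = \Tr[A^\top B] = 0$ to kill both cross terms — is precisely the canonical proof the authors left implicit, including the one small subtlety (handling the transposed cross term) that justifies stating the hypothesis only as $\Tr[A^\top B]=0$.
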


\subsection{Adaptive Sampling}
We described a $t$-round adaptive sampling algorithm. The algorithm is originally proposed in \cite{drvg06}. We will use $\pi_V(A)$ to denote the matrix obtained by projecting each row of $A$ onto a linear subspace $V$. If $V$ is spanned by a subset $S$ of rows, we denote the projection of $A$ onto $V$ by $\pi_{\text{span}(S)}(A)$. We use $\pi_{\text{span}(S),k}(A)$ for the best $\rank$-$k$ approximation to $A$ whose rows lie in $\text{span}(S)$.

\begin{itemize}
\item Start with a linear subspace $V$. Let $E_0 = A - \pi_V (A)$ and $S = \emptyset$

\item For $j=1$ to $t$, do
\begin{itemize}
	\item Pick a sample $S_j$ of $s_j$ rows of $A$ independently from the following distribution : row $i$ is picked with probability $P_i^{(j-1)} \geq c \frac{ \| E_{j-1}^{(i)} \|_2^2 }{ \| E_{j-1} \|_F^2 }$
	\item $S = S \cup S_j$
	\item $E_j = A - \pi_{\text{span}}(V \cup S)(A)$.
\end{itemize}

\end{itemize}

\begin{theorem}[\cite{drvg06}, see also Theorem 3 in \cite{dv06}]\label{thm:theorem_3_in_dv06}
After one round of the adaptive sampling procedure described above,
\begin{align*}
\E_{S_1} [ \| A - \pi_{\text{span}(V \cup S_1) , k} (A) \|_F^2 ] \leq \| A - A_k \|_F^2 + \frac{k}{c s_1} \| E_0 \|_F^2.
\end{align*}
\end{theorem}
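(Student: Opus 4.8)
The plan is to reduce the bound on the best rank-$k$ approximation supported on $\text{span}(V \cup S_1)$ to the error of a single, explicitly constructed random rank-$k$ matrix, and then to control that error by a bias--variance decomposition. Concretely, I would exhibit a random matrix $\hat A$ of rank at most $k$ whose rows all lie in $\text{span}(V \cup S_1)$ and whose expectation is exactly $A_k$. Since $\pi_{\text{span}(V \cup S_1),k}(A)$ is by definition the \emph{best} rank-$k$ approximation of $A$ with rows in that subspace, we automatically get $\|A - \pi_{\text{span}(V \cup S_1),k}(A)\|_F^2 \le \|A - \hat A\|_F^2$, so it suffices to bound $\E_{S_1}[\|A - \hat A\|_F^2]$.

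First I would set up the SVD $A = \sum_i \sigma_i u_i v_i^\top$, so that $A_k = \sum_{j=1}^k \sigma_j u_j v_j^\top$ and $u_j^\top A = \sigma_j v_j^\top$. The key idea is an importance-sampling estimator of each row-space direction $\sigma_j v_j^\top$. For each sampled index $i_t$ (drawn with probability $P_{i_t}^{(0)} \ge c\|E_0^{(i_t)}\|_2^2 / \|E_0\|_F^2$), set $X_j = \frac{1}{s_1}\sum_{t=1}^{s_1} \frac{(u_j)_{i_t}}{P_{i_t}^{(0)}} E_0^{(i_t)}$, then $\hat v_j = X_j + u_j^\top \pi_V(A)$, and finally $\hat A = \sum_{j=1}^k u_j \hat v_j$. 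Two facts need checking: (i) each $\hat v_j$ lies in $\text{span}(V \cup S_1)$, because every sampled residual row $E_0^{(i_t)} = A^{(i_t)} - \pi_V(A^{(i_t)})$ is a difference of a sampled row and a vector in $V$, while $u_j^\top \pi_V(A) \in V$; hence $\hat A$ has rank $\le k$ with rows in the required subspace; and (ii) $\E[X_j] = u_j^\top E_0 = \sigma_j v_j^\top - u_j^\top \pi_V(A)$, so that $\E[\hat v_j] = \sigma_j v_j^\top$ and therefore $\E[\hat A] = A_k$.

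With unbiasedness in hand, the error splits cleanly: writing $A - \hat A = (A - A_k) + (A_k - \hat A)$ and taking expectation, the cross term $\E\langle A - A_k, A_k - \hat A\rangle = \langle A - A_k, A_k - \E[\hat A]\rangle = 0$, so $\E[\|A-\hat A\|_F^2] = \|A - A_k\|_F^2 + \E[\|A_k - \hat A\|_F^2]$. Since $A_k - \hat A = -\sum_j u_j(X_j - \E X_j)$ and the $u_j$ are orthonormal, $\|A_k - \hat A\|_F^2 = \sum_{j=1}^k \|X_j - \E X_j\|_2^2$, whose expectation is $\sum_j \Var(X_j)$. The main work, and the step I expect to be the crux, is the variance bound: because $X_j$ averages $s_1$ i.i.d.\ terms, $\Var(X_j) \le \frac{1}{s_1}\sum_i \frac{(u_j)_i^2 \|E_0^{(i)}\|_2^2}{P_i^{(0)}}$, and here the sampling rule $P_i^{(0)} \ge c\|E_0^{(i)}\|_2^2/\|E_0\|_F^2$ is exactly what forces each ratio $\|E_0^{(i)}\|_2^2/P_i^{(0)} \le \|E_0\|_F^2/c$, giving $\Var(X_j) \le \frac{\|E_0\|_F^2}{c s_1}\sum_i (u_j)_i^2 = \frac{\|E_0\|_F^2}{c s_1}$ by $\|u_j\|_2 = 1$. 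Summing over $j \in [k]$ yields $\E[\|A_k - \hat A\|_F^2] \le \frac{k}{c s_1}\|E_0\|_F^2$, which together with the decomposition completes the proof. The real obstacle is conceptual rather than computational: one must design the estimator so that it is simultaneously unbiased for $A_k$, supported on $\text{span}(V \cup S_1)$, and has its variance controlled by precisely the quantity $\|E_0\|_F^2$ that defines the sampling distribution; once the estimator is chosen, the remaining steps are routine.
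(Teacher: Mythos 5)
Your proof is correct. Note that the paper does not actually prove this theorem—it imports it verbatim from the cited works—and your argument is precisely the standard one from those sources: build the importance-sampling estimator $\hat A = \sum_{j=1}^{k} u_j\bigl(X_j + u_j^\top \pi_V(A)\bigr)$, which is rank $\le k$, has rows in $\text{span}(V\cup S_1)$, and is unbiased for $A_k$; then the bias--variance split $\E\|A-\hat A\|_F^2 = \|A-A_k\|_F^2 + \sum_{j=1}^k \Var(X_j)$ together with the sampling condition $P_i^{(0)} \ge c\,\| E_0^{(i)}\|_2^2/\|E_0\|_F^2$ gives exactly the claimed $\frac{k}{c s_1}\|E_0\|_F^2$ term.
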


\section{Additional Results for Sketching $f$-Matrix Product}
\label{sec:additional results}

\subsection{Proofs of Sketch $\log(|\cdot|+1)$-Vector Product}

\begin{theorem}[\cite{g07}, K-Set]
	\label{lem:k-set}
	There exists a data structure supports updates of the form $(i, \Delta)$ to a vector $v\in\R^n$, where $i\in [n]$ and $\Delta\in \{-1,1\}$, and supports a query operation at any time. The algorithm either returns the current vector $v\in \R^n$  or ``Fail''. 
	If the  $\supp(v)\le k$, then the data structure returns ``Fail'' with probability at most $\delta\in(0,1)$.
	The algorithm uses space $O[k\log n\log (k/\delta)]$ bits.
\end{theorem}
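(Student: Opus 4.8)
The plan is to build the data structure from a \emph{one-sparse recovery} primitive, lift it to $k$-sparse recovery by hashing the universe $[n]$ into $\Theta(k)$ buckets, and then amplify the success probability by independent repetition. The key structural fact to exploit is that every quantity I maintain will be a \emph{linear} function of $v$, so each turnstile update $(i,\Delta)$ with $\Delta\in\{-1,1\}$ translates into a constant number of additive updates to the stored counters, and a query is answered by solving for the nonzero coordinates.

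First I would design the one-sparse primitive. For a sub-vector $w\in\R^n$ promised to satisfy $\supp(w)\le 1$, maintain the three linear counters $s_0=\sum_i w_i$, $s_1=\sum_i i\cdot w_i$, and a verification fingerprint $s_2=\sum_i r^{\,i} w_i \bmod p$ for a uniformly random $r$ in a prime field $\Z_p$ with $p=\poly(n/\delta)$. If $w$ is exactly one-sparse with its single nonzero entry of value $a$ at coordinate $j$, then $j=s_1/s_0$ and $a=s_0$, and the fingerprint certifies consistency. Since $m=\poly(n)$ bounds the magnitude of the entries, each counter fits in $O(\log(n/\delta))$ bits. Next I would reduce the general case to this primitive: using a pairwise-independent hash $h:[n]\to[B]$ with $B=\Theta(k)$ buckets, I instantiate one one-sparse structure per bucket, routing coordinate $i$ to bucket $h(i)$. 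When $\supp(v)\le k$, any fixed nonzero coordinate is \emph{isolated} (alone in its bucket) with at least constant probability once $B=Ck$. Repeating over $R=\Theta(\log(k/\delta))$ independent hash tables, each nonzero coordinate is isolated in at least one table except with probability $\delta/k$, and a union bound over the at most $k$ nonzero coordinates shows that with probability at least $1-\delta$ every coordinate is recovered from some isolated bucket, so the union of all recovered (position, value) pairs reconstructs $v$ exactly.

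The main obstacle is the verification step: a bucket holding two or more nonzero coordinates must not be mistaken for a one-sparse bucket, since such a false positive would inject a spurious coordinate into the output and break correctness (the theorem promises to return either the true $v$ or ``Fail''). The fingerprint $s_2$ is what handles this — a non-one-sparse bucket passes the consistency check only if a random polynomial identity of degree at most $n$ holds, which occurs with probability $O(n/p)$; union-bounding over the $O(kR)$ buckets and choosing $p=\poly(n/\delta)$ large enough drives the total false-positive probability below $\delta$. The structure then returns the reconstructed $v$ when all checks are consistent and returns ``Fail'' otherwise, so that whenever $\supp(v)\le k$ it fails with probability at most $\delta$.

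For the space bound, each bucket costs $O(\log(n/\delta))$ bits, there are $B=\Theta(k)$ buckets per table and $R=\Theta(\log(k/\delta))$ tables, giving $O(k\log n\log(k/\delta))$ bits in total (absorbing the $\log(1/\delta)$ from the field size into the repetition factor in the usual regime $\delta\ge 1/\poly(n)$), matching the claimed bound. I note an alternative route via power sums $\sum_i i^{\ell} v_i$ for $\ell\le 2k$ together with Prony/Berlekamp--Massey recovery would also work, but the hashing-plus-fingerprint scheme above makes the space accounting and the ``Fail'' semantics most transparent.
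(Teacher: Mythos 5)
This statement is imported by the paper as a black box from \cite{g07}; the paper itself contains no proof of it, so there is no internal argument to compare against line by line. Your reconstruction is the standard turnstile exact-sparse-recovery construction --- per-bucket one-sparse recovery via the moments $s_0=\sum_i w_i$, $s_1=\sum_i i\, w_i$ plus a random fingerprint, pairwise-independent hashing into $\Theta(k)$ buckets, $\Theta(\log(k/\delta))$ independent repetitions, and a union bound over the at most $k$ nonzero coordinates --- which is the same flavor as Ganguly's K-set structure and yields the stated space bound; your isolation argument, false-positive analysis, and bit counts are all correct (modulo the regime $\delta \ge 1/\poly(n)$, which you flag yourself).

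One point needs tightening, because the theorem's guarantee is a dichotomy: the structure must return either the \emph{exact} current vector $v$ or ``Fail''. Your fingerprint check only rules out \emph{false positives} (a multi-item bucket masquerading as a singleton). It does not by itself detect \emph{under-recovery}: if some nonzero coordinate of $v$ happens to be non-isolated in every table (an event of probability at most $\delta$, but not zero), then every bucket you actually use for recovery passes its check, every recovered pair is genuine, and yet the union $\hat v$ of recovered pairs is a strict sub-vector of $v$. Returning $\hat v$ in that event violates the dichotomy --- the structure must return ``Fail'' instead, and your phrase ``returns the reconstructed $v$ when all checks are consistent'' leaves unspecified which check would catch this. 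The fix is standard and costs nothing in space: after forming $\hat v$, subtract its contribution from every bucket's counters and fingerprints in every table, and accept only if all residuals vanish. Soundness of the fingerprint then guarantees that an accepted output equals $v$ with high probability, while completeness (every nonzero coordinate isolated in some table when $\supp(v)\le k$) guarantees acceptance with probability at least $1-\delta$. With that residual-verification step made explicit, your construction establishes the theorem.
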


\begin{proof}[Proof of Theorem~\ref{thm:logsum}]


	Firstly, in the algorithm, for the level $j$, we sample the universe with probability $p_j = \min(\frac{\epsilon^{-2}\poly(\log n/\delta)}{2^j}, 1)$.
	Suppose the true support of $x$ satisfies $|\supp(x)| = \Theta(2^{j})$.
	We argue that with high probability, there exists an $j^*\ge j$ such that \textsc{KSet}$_{j^*}$ succeeds. 
	To show this, it is suffice to show that \textsc{KSet}$_{j}$ succeeds with high probability.
	By Chernoff bound, with probability at least $1-\Theta(\delta)$, the number of coordinates sampled in level $j$ is $\Theta(\epsilon^{-2}\poly(\log n/\delta))$. 
	By Theorem~\ref{lem:k-set}, the \textsc{KSet}$_j$ instance succeeds to return the sampled sub vector with probability at least $1-O(\delta)$. 
	Since the coordinates sampled in \textsc{KSet}$_{j^*}$ is with probability at least $p_j$, we can bound the variance of unbiased estimator by 
	\allowdisplaybreaks
	\begin{align*}
	&\E_{S_j}\left[\left(\sum_{i\in S_j}2^jx_i\log(|y_i|+1)\right)^2\right]-\left(\E_{S_j}\left[\sum_{i\in S_j}2^jx_i\log(|y_i|+1)\right]\right)^2\\
	=&\E_{S_j}\left[\sum_{i\in S_j}\sum_{k\in S_j}(2^jx_i\log(|y_i|+1))(2^jx_k\log(|y_k|+1))\right]-\left( \sum_{i=1}^n\Pr[i\in S_j]2^{j}x_i\log(|y_i|+1)\right)^2\\
	=&\sum_{i=1}^n\sum_{k=1}^n\Pr[i\in S_j, k\in S_j](2^jx_i\log(|y_i|+1))(2^jx_k\log(|y_k|+1))-\left( \sum_{i=1}^n\Pr[i\in S_j]2^{j}x_i\log(|y_i|+1)\right)^2\\
	=&\sum_{i=1}^n  \Pr[i\in S_j]2^{2j}x_i^2\log^2(|y_i|+1)-\sum_{i=1}^n  \Pr[i\in S_j]^22^{2j}x_i^2\log^2((|y_i|+1))\\
	\leq &\sum_{i=1}^n  2^{j}x_i^2\log^2(|y_i|+1)\\
	\leq &\max_{i\in [n]}\left(2^{j}x_i^2\log(|y_i|+1)\right)\cdot \sum_{i=1}^n\log (|y_i|+1)  \\
	\leq &\max_{i\in [n]} x_i^2\cdot \max_{i\in [n]}\left(2^{j}\log(|y_i|+1)\right)\cdot \sum_{i=1}^n\log (|y_i|+1)  \\
	\leq & \|x\|_{\infty}^2\cdot (\sum_{i=1}^n\log (|y_i|+1))^2\cdot \log m
	\end{align*}
	where the first step uses the fact \begin{align*}\E_{S_j}\left[\sum_{i\in S_j}2^jx_i\log(|y_i|+1)\right]=\sum_{i=1}^n\Pr[i\in S_j]2^{j}x_i\log(|y_i|+1),\end{align*}
	the second step expands the square,
	the fourth step uses $\Pr[i\in S_j]=2^{-j}$,
	the fifth step uses the fact that $\sum_i a_ib_i\leq (\max_i a_i) \cdot \sum_i b_i$ for $b_i\geq 0$,
	and the last step uses the fact that
	\begin{align*}
	\max_{i\in [n]}\left(2^{j}\log(|y_i|+1)\right)\leq 2^{j}\cdot \log m\cdot \min_{i\in [n]}\log(|y_i|+1)\leq \log m\cdot \sum_{i=1}^n \log(|y_i|+1)
	\end{align*}

	Applying Bernstein's inequality, we conclude the proof.
\end{proof}

\subsection{Sketch $\sqrt{|\cdot|}$-Vector Product}
Our algorithm for sketching $\sqrt{|\cdot|}$-Vector product is based on the algorithm established in \cite{bvwy17}.
The algorithm is formally presented in Algorithm~\ref{alg:sqrt-sum}.
We first present an algorithm that approximates the inner product for only non-negative $x$. 
In the theorem, we will show the inner product for general $x$ can be approximated as well.
The high level idea is similar to the $p$-stable distribution algorithm established in \cite{i00}. 
However this algorithm is much simpler in terms of hashing function chosen and distribution design. 
In this algorithm, we used the distribution called $p$-inverse distribution (\cite{bvwy17}) over positive integers such that $\Pr[X\le z] = 1-1/z^p$, where $X$ is the $p$-inverse random variable. 
Then we scale each coordinate of $|x|^{1/p}y$ by a random variable drawn from the $p$-inverse distribution.
After this, we run a count-sketch to find the largest few coordinates in the updating scaled vector.
It can be shown that the median value of these output coordinates serve as a good estimation for the $p$-norm of the vector $|x|^{1/p}y$.
A similar idea of this kind can be found in \cite{andoni2017high}.
For the $\sqrt{\cdot}$-case, we simply chose $p=1/2$ then $\|y\|_{p}^{p}$ is a good estimation to $\sum_{i}|x_i|\sqrt{|y_i|}$.
\begin{theorem}
	\label{thm:sqrt-sketch}
	Given a fixed vector $x\in \R^n$ and number $p\in(0,2]$.
	There exists an one-pass streaming algorithm that makes a single pass over the stream updates  to an underlying vector $y\in \R^n$, and outputs a number $Z$, such that, with probability at least $1-\delta$,
	\begin{align*}
	|Z - \langle x, y^p\rangle| \le \epsilon \sum_{i=1}^n{|x_i||y_i|^p}.
	\end{align*}
	The algorithm uses space $O(\epsilon^{-2} \poly(\log ( n/\delta)))$ (excluding the space of $x$).
\end{theorem}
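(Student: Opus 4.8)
The plan is to reduce the quantity $\langle x, y^p\rangle = \sum_i x_i |y_i|^p$ to a $p$-th moment estimation problem on a rescaled vector, and then to invoke the extreme-value (max-stability) estimator of \cite{bvwy17}. First I would treat the case of a non-negative weight vector $x \ge 0$. Define the rescaled vector $w$ by $w_i = |x_i|^{1/p} y_i$; since $x$ is stored for free and $y$ arrives as a stream of $\pm 1$ updates, each update $(z_t, \Delta_t)$ to $y$ translates online into the update $w_{z_t} \gets w_{z_t} + |x_{z_t}|^{1/p}\Delta_t$. The key identity is $\|w\|_p^p = \sum_i |x_i|\,|y_i|^p = \langle x, y^p\rangle$, so it suffices to estimate $\|w\|_p^p$ up to a $(1\pm\epsilon)$ multiplicative factor.

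For the $p$-th moment I would use the $p$-inverse scaling described before the statement: draw independent variables $u_i$ with $\Pr[u_i \le z] = 1 - z^{-p}$ and form the further-scaled vector $v$ with $v_i = u_i w_i$. A direct computation of the CDF of the largest coordinate gives $\Pr[\max_i |v_i| \le z] = \prod_i (1 - |w_i|^p/z^p) \approx \exp(-\|w\|_p^p/z^p)$, i.e.\ $\max_i |v_i|$ is approximately Fr\'echet-distributed with scale parameter $\|w\|_p$. Consequently a fixed quantile (e.g.\ the median) of $\max_i |v_i|$ equals a known universal constant times $\|w\|_p$. To recover $\max_i |v_i|$ in small space while $v$ is under streaming updates, I would feed $v$ into a CountSketch (Definition~\ref{def:count_sketch_transform}) and read off its heaviest entry; because the Fr\'echet law is heavy-tailed, with good probability the maximizing coordinate is an $\ell_2$ heavy hitter and is returned by CountSketch up to additive error that is negligible compared to its own magnitude. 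This is essentially the scheme of \cite{i00}, simplified via the $p$-inverse distribution as in \cite{bvwy17}.

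To upgrade the constant-factor estimate to $(1\pm\epsilon)$ accuracy, I would run $R = O(\epsilon^{-2}\poly(\log(n/\delta)))$ independent copies of the scaling-and-CountSketch primitive and take the median of their max-readouts. By the bounded density of the Fr\'echet law near its median, this estimates $\|w\|_p$ to within a $(1\pm\epsilon)$ factor; raising to the $p$-th power then yields $\|w\|_p^p = \langle x, y^p\rangle$ to within $(1\pm O(\epsilon))$ since $p \le 2$. This is also where the space bound comes from: each CountSketch uses $\poly(\log(n/\delta))$ words, and there are $O(\epsilon^{-2}\poly\log)$ of them, giving total space $O(\epsilon^{-2}\poly(\log(n/\delta)))$. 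Finally, to remove the assumption $x \ge 0$, I would split $x = x_+ - x_-$ into positive and negative parts, run the estimator separately on each to obtain $\hat P \approx P := \sum_{x_i>0} x_i|y_i|^p$ and $\hat N \approx N := \sum_{x_i<0} |x_i|\,|y_i|^p$ each with relative error $\epsilon$, and output $Z = \hat P - \hat N$. Since $\langle x, y^p\rangle = P - N$ and $\sum_i |x_i|\,|y_i|^p = P + N$, the triangle inequality gives $|Z - \langle x, y^p\rangle| \le \epsilon(P+N) = \epsilon \sum_i |x_i|\,|y_i|^p$, which is exactly the claimed bound.

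The main obstacle I anticipate is the coupling between the extreme-value analysis and the CountSketch recovery: one must show that the coordinate attaining $\max_i |v_i|$ is simultaneously a faithful proxy for $\|w\|_p$ through the Fr\'echet approximation and sufficiently dominant in the $\ell_2$ sense that CountSketch returns it with additive error $\ll \max_i |v_i|$, and that these events hold jointly across all $R$ repetitions with failure probability foldable into $\delta$. Quantifying the error introduced by the $\exp(-\|w\|_p^p/z^p)$ approximation (as opposed to the exact product) and checking that it contributes only lower-order terms to the final $(1\pm\epsilon)$ guarantee is the delicate part; the remainder follows the template of \cite{bvwy17}, with the extra bookkeeping for the weight rescaling $|x_i|^{1/p}$ and the $\pm$ decomposition above.
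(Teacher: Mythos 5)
Your proposal is correct and follows essentially the same route as the paper: the paper's \textsc{PolySum} algorithm likewise rescales the stream by $|x_i|^{1/p}$, applies $p$-inverse (max-stability) scalings fed into a CountSketch, takes an order statistic over $\Theta(\epsilon^{-2})$ independent copies as in \cite{bvwy17}, and the paper's proof handles general $x$ exactly by your positive/negative split with the same triangle-inequality accounting. In fact your write-up spells out more of the extreme-value/CountSketch mechanics than the paper, which simply invokes \cite{bvwy17} as a black box.
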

\begin{proof}
The proof of the this theorem is a straightforward application of the results in \cite{bvwy17} by splitting $x$ into positive and negative parts.
\end{proof}
\begin{algorithm}[!t]
	\caption{	\textsc{PolySum}$(x,p,\epsilon)$ \label{alg:sqrt-sum}}
	\begin{algorithmic}[1]
		\Procedure{Initialize}{$x, p$}\Comment{$x\ge 0$, $z\in \Z_{\ge 0}, 0< p\le 2$}
		\State Let $p$-inverse distribution be defined as $\Pr[ z < x ] = 1 - \frac{1}{x^p}$
		\State Let ${\cal D}$ denote the pairwise independent $p$-inverse distribution.
		\State Let $k\gets \Theta(\epsilon^{-2})$
		\State Implicitly store $n \times k$ matrix $Z$, where $Z_{i,j} \sim {\cal D}$ \Comment{Only needs $\Theta(\log n)$ bits}
		\State Initialize \textsc{CS} as a count-sketch instance with space $\Theta(\epsilon^{-2}\poly\log n)$
		\EndProcedure
		\Procedure{\textsc{Update}}{$i,\Delta$} \Comment{$i\in [n], \Delta \in \R$}
		\For{$j=1 \to k$}
		\State $\textsc{CS}.\textsc{Update}( (i,j), |x_i|^{1/p}Z_{i,j}\cdot \Delta  )$
		\EndFor
		\EndProcedure
		\Procedure{\textsc{Query}}{}
		\State $\wt{y} \gets$\textsc{CS}.\textsc{Query}$()$
		\State $z \leftarrow (\frac{k}{2})$-th largest coordinates of $|\wt(y)|$
		\State \Return $z / 2^{1/p}$
		\EndProcedure
	\end{algorithmic}
\end{algorithm}

\subsection{More General Functions $f$}
\label{sec:more general function}
Furthermore, our framework can be applied to a more general set of functions. This set of function includes nearly all ``nice'' functions for $n$ variables. 
For the ease of representation, we neglect the formal definition of the this set. It can be understood that a function in this set satisfies three properties: slow-jumping, slow-dropping and predictable. Readers that are interested, please refer to \cite{bcwy16}.
Here we give three examples for the the functions that we are able to approximate. For example, $x^2 \cdot 2^{\sqrt{\log x}}, (2+\sin x) x^2, 1/\log(1+x)$. Using our proposed general framework and \cite{bcwy16}, we have the following result,
\begin{theorem}
	Given a vector $x\in \{-1, 0, 1\}^n$, and a function $f$ that satisfies the above regularity condition, then
	there exists a one-pass streaming algorithm that makes a single pass over the stream updates to an underlying vector $y\in \R^n$, and outputs a number $Z$, such that, with probability at least $1-\delta$,
	\begin{align*}
	|Z - \langle x, f(|y|)\rangle| \le \epsilon \sum_{i=1}^n{f(|y_i|)}.
	\end{align*}
	The algorithm uses space $O(\poly(\epsilon^{-1}\log ( n/\delta)))$ (excluding the space of $x$).
\end{theorem}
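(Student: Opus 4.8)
The plan is to reduce the statement to the general streaming estimator of \cite{bcwy16} by the same positive/negative splitting already used in the proof of Theorem~\ref{thm:sqrt-sketch}. Since $x\in\{-1,0,1\}^n$ is stored for free, I would first write $x = x^+ - x^-$ where $x^+,x^-\in\{0,1\}^n$ are the indicators of the positive and negative coordinates of $x$, so their supports $S^+ = \supp(x^+)$ and $S^- = \supp(x^-)$ are disjoint. Then
\begin{align*}
\langle x, f(|y|)\rangle = \sum_{i\in S^+} f(|y_i|) - \sum_{i\in S^-} f(|y_i|),
\end{align*}
and it suffices to approximate each of the two nonnegative partial sums $\sum_{i\in S} f(|y_i|)$ for $S\in\{S^+,S^-\}$.

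Second, I would observe that each such quantity is precisely a separable sum of the form $\sum_i f(|\cdot|)$ evaluated on the substream restricted to coordinates in $S$, which is exactly the class of functionals for which \cite{bcwy16} gives a one-pass algorithm under the slow-jumping, slow-dropping, and predictable conditions. Because $S$ is known in advance (it is determined by the stored vector $x$), I would simply discard every update $(z_t,\Delta_t)$ with $z_t\notin S$ and feed the remaining updates into one instance of the \cite{bcwy16} estimator. This returns a value $Z_S$ with $|Z_S - \sum_{i\in S} f(|y_i|)| \le \epsilon \sum_{i\in S} f(|y_i|)$ with probability at least $1-\delta/2$, using space $O(\poly(\epsilon^{-1}\log(n/\delta)))$.

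Third, I would run the two instances in parallel (one on $S^+$, one on $S^-$), output $Z = Z_{S^+} - Z_{S^-}$, and take a union bound so that both succeed with probability at least $1-\delta$. On that event the triangle inequality gives
\begin{align*}
|Z - \langle x, f(|y|)\rangle|
\le \epsilon\sum_{i\in S^+} f(|y_i|) + \epsilon\sum_{i\in S^-} f(|y_i|)
= \epsilon\sum_{i\in S^+\cup S^-} f(|y_i|)
\le \epsilon\sum_{i=1}^n f(|y_i|),
\end{align*}
where the equality uses the disjointness of $S^+$ and $S^-$ and the last inequality uses $f\ge 0$. The total space is twice the single-instance bound, hence still $O(\poly(\epsilon^{-1}\log(n/\delta)))$.

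The only genuinely nontrivial step is the second one: I must verify that $\sum_{i\in S} f(|y_i|)$ really lies inside the family of functionals covered by \cite{bcwy16} with the claimed relative error and space, i.e., that restricting the input stream to a fixed coordinate subset $S$ (equivalently, zeroing the coordinates outside $S$) preserves the regularity conditions and does not alter the complexity characterization. This is where I would quote the precise theorem of \cite{bcwy16}; everything else is the routine splitting and union-bound bookkeeping that mirrors Theorem~\ref{thm:sqrt-sketch}.
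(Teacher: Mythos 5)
Your proposal is correct and follows essentially the same route as the paper, which likewise proves this theorem by splitting $x$ into its positive and negative parts and invoking the estimator of \cite{bcwy16} on each part separately. Your write-up simply makes explicit the bookkeeping (restricting the stream to each support, the union bound, and the triangle inequality) that the paper leaves implicit.
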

\begin{proof}
The proof is a straightforward application of \cite{bcwy16} by considering the positive part and negative part of $x$ separately.
\end{proof}
\begin{remark}
	We remark that the algorithm in \cite{bcwy16} is quite complicated but has the potential to be simplified.  
	We also note that $x$ is not necessarily restricted on $\{-1, 0, -1\}$, but the complexity depends on ratio of the absolute values of the maximum non-zero entry and minimum non-zero entry (in absolute value) of $x$.
\end{remark}

\subsection{From Vector Product Sketch to Matrix Product Sketch}
With the $f$-vector product sketch tools established, we are now ready to present the result for sketching the matrix product, $M=f(A) B$.
Notice that each entry $M_{i,j} := \langle f(A_{i}), B_j\rangle$ is an inner product.
Thus our algorithm for the matrix sketch is simply maintaining a $f$-vector product sketch for each $M_{i,j}$.
In our algorithm, we assume that matrix $B$ is given, i.e., hardwired in the algorithm.
Thus, if $B\in\R^{n\times k}$ for some $k\ll n$, we only need to keep up to $\wt{O}(nk)$ inner product sketches, which cost in total $\wt{O}(nk)$ words of space.
For the ease of representation, we present our guarantee for matrix product for $f(x):=\log^c(|x|)$ for some $c$ or for $f(x) = x^{p}$ for $0\le p\le 2$, and for matrix $B\in \{-1, 0, 1\}^{n\times k}$. 
Our results can be generalized to a more general set of functions  and matrix $B$ using the results presented in Section~\ref{sec:more general function}. 
\begin{theorem}
	Given a matrix $B\in \{-1, 0, 1\}^{n\times k}$, and a function $f(x):=\log^c(|x|)$ for some $c$ or $f(x):=|x|^p$ for some $0\le p\le2$, then
	there exists a one-pass streaming algorithm that makes a single pass over the stream updates to an underlying matrix $A\in \R^n$ with updates of absolute value at least $1$\footnote{This gurantees that if $A_{i,j}\neq 0$, then $|A_{i,j}|\ge 1$} and outputs a matrix $\wh{M}$, such that, with probability at least $1-\delta$, for all $i,j$,
	\begin{align*}
	|\wh{M}_{i,j} - M_{i,j}| \le \epsilon \sum_{k=1}^n{f(|A_{i,k}|)}.
	\end{align*}
	The algorithm uses space $O(\epsilon^{-2}nk\poly(\log ( n/\delta)))$.
\end{theorem}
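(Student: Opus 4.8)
The plan is to reduce the matrix problem to $nk$ independent instances of the vector problem and then glue them together with a union bound. The starting observation is that every entry of the target matrix is a single $f$-vector inner product: writing $A_{i*}$ for the $i$-th row of $A$ and $B_{*j}$ for the $j$-th column of $B$, we have $M_{i,j} = \inner{f(A_{i*}), B_{*j}}$. Here $B_{*j}$ is hardwired into the algorithm and plays the role of the free vector $x$ in Theorem~\ref{thm:logsum} and Theorem~\ref{thm:sqrt-sketch}, while $A_{i*}$ plays the role of the updating vector $y$. I would therefore maintain one vector-product sketch for each of the $nk$ pairs $(i,j)$. When the stream delivers an update $(i_t, j_t, \Delta_t)$ to $A$, it is exactly a coordinate update $(j_t, \Delta_t)$ to the row $A_{i_t *}$, so I forward it to the $k$ sketches indexed by row $i_t$. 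This uses only a single pass, since each update touches the $k$ sketches of exactly one row.

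Next I would invoke the appropriate per-entry guarantee. For $f(x) = \log^c(|x|)$ I apply Theorem~\ref{thm:logsum} together with the remark following it (which extends the estimator from $\log(|\cdot|+1)$ to $\log^c$); for $f(x) = |x|^p$ with $0 \le p \le 2$ I apply Theorem~\ref{thm:sqrt-sketch}. The point that makes the bound clean is the restriction $B \in \{-1,0,1\}^{n \times k}$: it forces $\norm{B_{*j}}_\infty \le 1$ and $|(B_{*j})_k| \le 1$ coordinatewise, so the vector-level error $\epsilon \norm{x}_\infty \sum_k f(|A_{i,k}|)$ (respectively $\epsilon \sum_k |x_k| f(|A_{i,k}|)$) collapses to exactly $\epsilon \sum_{k=1}^n f(|A_{i,k}|)$, the claimed per-entry error. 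The footnote hypothesis that nonzero updates have absolute value at least $1$ is what guarantees $|A_{i,k}| \ge 1$ whenever $A_{i,k} \ne 0$, so that the summands $f(|A_{i,k}|) = \log^c(|A_{i,k}|)$ are nonnegative and the relative-error statement is meaningful.

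Finally I would handle probability amplification and space accounting together. Running each of the $nk$ sketches with failure probability $\delta' = \delta/(nk)$ and taking a union bound over all entries yields the stated $1-\delta$ success for all $(i,j)$ simultaneously. Because $k \le n$, we have $\log(1/\delta') = \log(nk/\delta) = O(\log(n/\delta))$, so each sketch still occupies $O(\epsilon^{-2}\poly(\log(n/\delta)))$ space by Theorem~\ref{thm:logsum} and Theorem~\ref{thm:sqrt-sketch}; multiplying by the $nk$ copies gives the total $O(\epsilon^{-2} nk \poly(\log(n/\delta)))$ bound.

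There is no deep obstacle here — the content is essentially bookkeeping — but the steps that most deserve care are, first, verifying that the reduction preserves a \emph{single} pass while correctly partitioning the $A$-stream into the $n$ row-substreams, and second, checking that the union-bound blowup in the failure probability is genuinely absorbed into the $\poly(\log(n/\delta))$ factor rather than inflating the $\epsilon^{-2}$ or the $nk$ terms. The other place I would be explicit is confirming that the $\{-1,0,1\}$ assumption on $B$, and not merely some weaker normalization, is precisely what converts the relative guarantees of the vector sketches into the per-entry matrix guarantee.
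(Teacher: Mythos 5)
Your proposal is correct and follows essentially the same route as the paper: the paper's proof is exactly the observation that each entry $M_{i,j}=\langle f(A_{i*}),B_{*j}\rangle$ is an $f$-vector inner product, so one maintains an instance of the sketch of Theorem~\ref{thm:logsum} (or Theorem~\ref{thm:sqrt-sketch}) per entry and invokes those guarantees with $\|B_{*j}\|_\infty\le 1$. The only difference is that you spell out the bookkeeping (routing each stream update to the $k$ sketches of its row, and the union bound with $\delta'=\delta/(nk)$ absorbed into the $\poly(\log(n/\delta))$ factor) that the paper leaves implicit in its one-line proof.
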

\begin{proof}
	The proof of this theorem is a straightforward application of Theorem~\ref{thm:logsum} and Theorem~\ref{thm:sqrt-sketch}.
\end{proof}

\section{Application in Low Rank Approximations} \label{sec:proof_sv_1}

\subsection{Leverage score and its application on samping}\label{sec:leverage_score}
Classic approaches of low rank approximation first compute the leverage scores of the matrix $M$, and then sample rows of $M$ based these scores.
\begin{definition}[Leverage scores, \cite{w14,bss12}]
Let $U\in \mathbb{R}^{n\times k}$ have orthonormal columns with $n \geq k$. We will use the notation $p_i = u_i^2 / k$, where $u_i^2 = \| e_i^\top U \|_2^2$ is referred to as the $i$-th leverage score of $U$. 
\end{definition}

\begin{definition}[Leverage score sampling, \cite{w14,bss12}]\label{def:leverage_score_sampling}
Given $A\in\mathbb{R}^{n\times d}$ with rank $k$, let $U\in \mathbb{R}^{n\times k}$ be an orthonormal basis of the column span of $A$, and for each $i$ let $p_i$ be the squared row norm of the $i$-th row of $U$. Let $k\cdot p_i$ denote the $i$-th leverage score of $U$. Let $\beta>0$ be a constant and $q=(q_1, \cdots,q_n)$ denote a distribution such that, for each $i\in [n]$, $q_i \geq \beta p_i$. Let $s$ be a parameter. Construct an $n\times s$ sampling matrix $B$ and an $s\times s$ rescaling matrix $D$ as follows. Initially, $B = 0^{n\times s}$ and $D = 0^{s\times s}$. For the same column index $j$ of $B$ and of $D$, independently, and with replacement, pick a row index $i\in [n]$ with probability $q_i$, and set $B_{i,j}=1$ and $D_{j,j}=1/\sqrt{q_i s}$. We denote this procedure \textsc{Leverage score sampling} according to the matrix $A$.
\end{definition}

However approximating these scores is highly non-trivial, especially in the streaming setting. 
Fortunately, it suffices to compute the so-called \emph{generalized leverage scores}, i.e., the leverage scores of a proxy matrix. 
We describe the resulting algorithm (Algorithm~\ref{alg:main_lowrank}) and the intuition here and provide the complete analysis later.

\begin{definition}[generalized leverage score]\label{def:gen_leverage_1}
Consider two accuracy parameters $\alpha \in (0,1) , \delta \in (0,1)$, and two positive integers $q$ and $k$ with $q \geq k$. 
If there is a matrix $E\in \R^{n\times n}$ with rank $q$ and that approximates the row space of $M\in \R^{n\times n}$ as follows, 
\begin{align*}
\exists X,  \|XE - M\|_F \leq (1+\alpha) \| M - [M]_k \|_F + \delta,
\end{align*}
then the leverage scores of $E$ are called a set of $(1+\alpha, \delta, q, k)$-generalized leverage scores of $A$. 
Suppose $E$ has an SVD decomposition $U\Sigma V^\top$, where $U \in \R^{n\times q}, V \in \R^{n\times q}$  are orthonormal matrices, then its leverage scores are $\ell_i = \|V^i\|_2^2$ where $V^i$ is the $i$-th row of $V$, $\forall i \in [n]$. 
\end{definition}

These scores can be computed easier. We first need to find such an matrix $E$.
Let $S$ be a \emph{subspace embedding matrix} (i.e.,
 \begin{align*}
 \|SM x\|_2\in (1\pm\alpha) \|Mx\|_2, ~~~ \forall x\in \R^{n},
 \end{align*}
a sufficient large matrix with random $+1,-1$ entries will have this property). 

Then $E=SM$ satisfies the requirement in Definition~\ref{def:gen_leverage_1}, and thus we can simply use our sketching method to approximate $SM$ and then compute its leverage scores. In Algorithm~\ref{alg:main_lowrank}, we will use the concatenation of the positive and negative parts of $S$, 
since it also satisfies the requirement and empirically has better accuracy than $S$.
The quality of the generalized scores (i.e., $\alpha$ and $\delta$) will depend on the parameter $s$ in the algorithm that are specified in our final Theorem~\ref{thm:main_lowrank}. 

The scores then can be used for sampling.
Let $P$ be a set of columns of $M$ sampled based on these scores (defined in Line~\ref{alg:sample p} of Algorithm~\ref{alg:main_lowrank}). 
It is known that, when the scores are $ ( O(1), 0, q, k)$-generalized leverage scores, then the span of a $P$ with $\Omega(q\log q )$ columns will contain a rank-$q$ matrix which provides a $O(1)$-approximation to $M$~\cite{DriMagMahWoo12,bss12,bw14,cemmp15,swz19}. It is tempting to set $q = k$ to match our final goal of rank-$k$ approximation, but all existing fast methods require $q > k$.
To improve the rank-$q$ to rank-$k$, we use \emph{adaptive sampling}.

Adaptive sampling samples some extra columns from $M$ according to their squared distances to the span of $P$. For a column $M_{*i}$, we thus need to use our sketching method to estimate $\|M_{*i}\|_2^2 - \|\Gamma_{*i}\|_2^2$, where $\Gamma_{*i}$ is its projection on to the span of $P$. This introduces some additive errors but they can be handled by thresholding. 
Let $\tilde Y$ be the sampled columns.
Adaptive sampling ensures that there is a good rank-$k$ approximation in the span of $Y:=\tilde Y \cup P$ as long as we have sampled sufficiently many columns. 
To obtain our final rank-$k$ approximation, it suffices to project $M$ to the span of $Y$ and compute the top $k$ singular vectors. 
The projection can be done by sketching and the errors are, again, small.

\subsection{Proof of Theorem \ref{thm:main_lowrank}}
Recall that there are three steps in computing the top singular vectors (see Algorithm~\ref{alg:main_lowrank}): 
\begin{itemize}
	\item Compute the generalized leverage scores and sample a set $P$ according to the scores,
	\item Adaptive sampling to get a set $Y$,
	\item Project to the span of $Y$ and compute the approximation solution there. 
\end{itemize}
Below we present the complete proofs for each step.

For simplicity, we use the following notion.
\begin{definition}
We say that the span of $P$ has a $(1+ \epsilon, \Delta)$-approximation subspace for $M$ if there exists $C$ such that
\begin{align*}
  \| PC - M\|_F \leq (1+\epsilon) \|M - [M]_k\|_F + \Delta.
\end{align*}
\end{definition}

\subsection{Sampling by Generalized Leverage Scores}

First, recall the definition of generalized leverage scores and related property from~\cite{balcan2016communication}.


\begin{lemma}[Lemma 2 in~\cite{balcan2016communication}] \label{lem:leverage_sample}
Suppose $0< k\leq q \le m \le n$, $\alpha>0$, $\Delta>0$, and $A \in \R^{n \times n}$. 
Let $B \in \R^{n \times b}$ be $b = O( \alpha^{-2} q \log q )$ columns sampled from $A$ according to a set of $(1+\alpha, \Delta, q, k)$-generalized leverage scores of $A$.
Then with probability $\ge 0.99$, the col-span of $B \in \R^{n \times b}$ has a rank-$q$ $(1+ 2\alpha, 2\Delta)$-approximation subspace for $A$. That is, there exists $C \in \R^{b \times n}$ such that
\begin{align*}
  \| BC - A\|_F \leq (1+2\alpha) \|A - [A]_k\|_F + 2\Delta.
\end{align*}{}
\end{lemma}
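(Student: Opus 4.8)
The plan is to exhibit an explicit rank-$q$ matrix lying in $\mathrm{colspan}(B)$ and to control its distance to $A$ through a clean Pythagorean decomposition together with two standard leverage-score sampling guarantees. First I would fix the proxy: let $E = U\Sigma V^\top$ be the SVD defining the $(1+\alpha,\Delta,q,k)$-generalized leverage scores, so $V\in\R^{n\times q}$ has orthonormal columns and the sampling probabilities dominate $\|V_{i*}\|_2^2$ (up to the constant factor $\beta$). By Definition~\ref{def:gen_leverage_1} there is an $X$ with $\|XE - A\|_F \le (1+\alpha)\|A-[A]_k\|_F + \Delta =: \rho$, and the rows of $XE$ lie in the row space of $V^\top$; since $AVV^\top$ is the best Frobenius approximation of $A$ among matrices with rows in that row space, we get $\|A - AVV^\top\|_F \le \rho$. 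This reduces the task to showing $\mathrm{colspan}(B)$ contains something nearly as good as $AVV^\top$.

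Let $S\in\R^{n\times b}$ be the column-sampling-and-rescaling matrix, so $B = AS$. I would propose the candidate $\hat A := AS(V^\top S)^+V^\top = BC$ with $C := (V^\top S)^+V^\top$, which has rank at most $q$ and columns in $\mathrm{colspan}(B)$. Writing $A = AVV^\top + A_\perp$ with $A_\perp := A(I - VV^\top)$, and using $(V^\top S)(V^\top S)^+ = I_q$ (valid once $V^\top S$ has full row rank), a short computation gives $A - \hat A = A_\perp - A_\perp S(V^\top S)^+V^\top$. The rows of $A_\perp$ are orthogonal to the row space of $V^\top$ while the rows of the second term lie inside it, so the cross trace vanishes and the Pythagorean theorem (Theorem~\ref{thm:pythagorean}) yields $\|A - \hat A\|_F^2 = \|A_\perp\|_F^2 + \|A_\perp S(V^\top S)^+\|_F^2$, where I also use $\|A_\perp S(V^\top S)^+V^\top\|_F = \|A_\perp S(V^\top S)^+\|_F$ by orthonormality of $V$.

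The crux is to show the second summand is only an $O(\alpha)$ fraction of $\|A_\perp\|_F$, which is exactly where $b = O(\alpha^{-2}q\log q)$ enters. The subspace-embedding property of leverage-score sampling gives $\|V^\top SS^\top V - I_q\|_2 \le \alpha$, hence $\|(V^\top SS^\top V)^{-1}\|_2 \le (1-\alpha)^{-1}$; rewriting $(V^\top S)^+ = S^\top V(V^\top SS^\top V)^{-1}$ bounds the term by $(1-\alpha)^{-1}\|A_\perp SS^\top V\|_F$. The decisive observation is that the exact product vanishes, $A_\perp V = A(I-VV^\top)V = 0$, so the approximate-matrix-multiplication bound controls a pure variance: $\E\|A_\perp SS^\top V\|_F^2 = (q/b)\|A_\perp\|_F^2 = O(\alpha^2/\log q)\,\|A_\perp\|_F^2$, and Markov makes $\|A_\perp SS^\top V\|_F^2 = O(\alpha^2)\|A_\perp\|_F^2$ hold with large constant probability. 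Combining, $\|A - \hat A\|_F^2 \le (1 + O(\alpha^2))\|A_\perp\|_F^2 \le (1+O(\alpha^2))\rho^2$, and taking square roots while folding lower-order terms into the constants gives $\|BC - A\|_F \le (1+2\alpha)\|A-[A]_k\|_F + 2\Delta$ for $\alpha$ small; a union bound over the two events keeps the success probability above $0.99$.

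I expect the main obstacle to be the sharp leading constant $(1+2\alpha)$ rather than a lossy factor of $2$: a naive triangle-inequality bound on $A - \hat A$ only gives $\|A-\hat A\|_F \lesssim 2\rho$. The gain comes precisely from combining the Pythagorean split (which isolates $\|A_\perp\|_F^2$ additively) with the identity $A_\perp V = 0$ (which turns the residual into a pure-variance AMM quantity of order $\alpha^2$), so the second summand becomes negligible against $\|A_\perp\|_F^2$ instead of comparable to it. Care with the failure-probability constants in the Markov step and the $\beta$-factor in the sampling probabilities is routine and only affects the hidden constant in $b$.
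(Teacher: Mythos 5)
You should know at the outset that the paper contains no proof of this statement: it is imported verbatim as Lemma~2 of \cite{balcan2016communication} and used as a black box, so the only meaningful comparison is against the argument in that cited reference rather than anything in this paper. Your blind proof is correct and is essentially that standard argument: take $V$ from the SVD of the proxy $E$, observe $\|A - AVV^\top\|_F \le \|A - XE\|_F \le \rho$ because row-space projection is optimal among matrices with rows in $\mathrm{rowspan}(V^\top)$, propose $\hat{A} = AS(V^\top S)^+ V^\top = BC$ of rank at most $q$ with columns in $\mathrm{colspan}(B)$, kill the cross term via the Pythagorean identity, control $(V^\top S)^+$ through the leverage-score subspace embedding $\|V^\top S S^\top V - I_q\|_2 \le \alpha$, and exploit $A_\perp V = 0$ so that approximate matrix multiplication bounds $\|A_\perp S S^\top V\|_F^2$ as a pure-variance term of order $\alpha^2\|A_\perp\|_F^2$ --- which, as you correctly emphasize, is exactly what preserves the sharp $(1+2\alpha)$ constant that a naive triangle inequality would degrade. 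Two bookkeeping corrections: the AMM display should be the inequality $\E\|A_\perp S S^\top V\|_F^2 \le \frac{q}{\beta b}\|A_\perp\|_F^2$ (not an equality, and with the $\beta$ of Definition~\ref{def:leverage_score_sampling} retained); and the closing step $(1+\alpha)\bigl(1+O(\alpha^2)\bigr) \le 1+2\alpha$, together with the $(1-\alpha)^{-1}$ factor, needs either the restriction $\alpha \in (0,1)$ that Definition~\ref{def:gen_leverage_1} already imposes plus an adjustable constant in $b = O(\alpha^{-2} q \log q)$, or, more cleanly, a subspace embedding run at fixed distortion $1/2$ with the AMM term alone carrying the $\alpha$ dependence. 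Neither is a gap; your proof is sound and matches how this lemma is established in the source it is quoted from.
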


We need the following result about subspace embedding.
\begin{lemma}[Lemma 3 in \cite{balcan2016communication}]\label{lem:lemma3b16}
We say $S$ is a $(1+\epsilon,\Delta)$-good subspace embedding if it satisfies the following. \\
(Subspace Embedding). For any orthonormal $V \in \R^{n \times k}$ (i.e. $V^\top V = I$ ), 
\begin{align*}
(1-c) \| V x\| \leq \| S V x \| \leq (1+c) \| V x \|
\end{align*}
where $c \in (0,1)$ is a sufficiently small constant. \\
(Approximate Matrix Product). For any fixed $A \in \R^{n \times n}$ and $B \in \R^{n \times k}$
\begin{align*}
\| A^\top S^\top S B - A^\top B \|_F^2 \leq \frac{\epsilon}{k} \| A \|_F^2 \cdot \| B \|_F^2 + \Delta.
\end{align*}
\end{lemma}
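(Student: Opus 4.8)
The plan is to exhibit a random sketching matrix $S \in \R^{m \times n}$ --- for concreteness a rescaled random sign matrix, or a CountSketch-plus-Gaussian construction as in Definitions~\ref{def:count_sketch_transform}--\ref{def:fast_gaussian_transform} --- and to verify that, for a suitable sketch dimension $m$, it satisfies both the subspace embedding guarantee and the approximate matrix product guarantee with high probability. The two properties are established by separate, standard arguments and then combined by a union bound, taking $m$ to be the larger of the two required dimensions.

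For the subspace embedding property, note that since $V^\top V = I$ we have $\norm{Vx} = \norm{x}$, so it suffices to show $\norm{SVx} = (1 \pm c)\norm{x}$ uniformly over $x$ on the unit sphere $\mathbb{S}^{k-1}$, equivalently $\norm{(SV)^\top(SV) - I} \le c$ in operator norm. First I would fix an $x$ and invoke the distributional Johnson--Lindenstrauss property of $S$: for sub-Gaussian (or CountSketch) entries, $\norm{SVx}^2$ concentrates around $\norm{x}^2$ with failure probability $\exp(-\Omega(c^2 m))$. Then I would take a $\tfrac12$-net $\mathcal{N}$ of $\mathbb{S}^{k-1}$, of size at most $5^k$, union-bound the per-point concentration over all net points, and extend from the net to the whole sphere by the usual operator-norm-on-a-net argument. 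Choosing $m = \Theta(c^{-2}(k + \log(1/\delta)))$ makes this hold with probability $1 - \delta$.

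For the approximate matrix product property, I would use that $S$ is normalized so $\E[S^\top S] = I$, making $A^\top S^\top S B$ an unbiased estimator of $A^\top B$. The key quantity is the second moment: for the chosen $S$ one has the per-pair bound $\E[(a^\top S^\top S b - a^\top b)^2] \le \tfrac{2}{m}\norm{a}^2\norm{b}^2$. Summing over all column pairs $(A_{*i}, B_{*j})$ gives $\E[\norm{A^\top S^\top S B - A^\top B}_F^2] \le \tfrac{2}{m}\norm{A}_F^2\norm{B}_F^2$. Setting $m = \Theta(k/\epsilon)$ yields the $\tfrac{\epsilon}{k}\norm{A}_F^2\norm{B}_F^2$ term in expectation; applying Markov's inequality (or a higher-moment version of the JL-moment property for a sharper tail) converts this to a high-probability statement, and any slack from the tail bound is absorbed into the additive $\Delta$ term.

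The main obstacle I anticipate is the uniform control (over all $x$ in the subspace) demanded by the subspace embedding property: the per-point JL bound is easy, but passing from the net to the entire unit sphere while keeping the operator-norm deviation below $c$ requires the approximation step to be done carefully so that the accumulated net error does not blow up the constant. By contrast, the approximate matrix product part is comparatively routine, being a fixed-matrix second-moment computation; the only care needed there is choosing $m$ large enough for the $\epsilon/k$ scaling and handling the expectation-to-probability conversion, which is exactly where the additive $\Delta$ enters.
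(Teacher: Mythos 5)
Your proposal cannot be matched against an in-paper proof because the paper does not prove this statement at all: it is imported verbatim as Lemma~3 of the cited work of Balcan et al.\ and used as a black box in the proof of Lemma~\ref{lem:sample_score}. Note also that, as phrased, the statement is really a \emph{definition} (``We say $S$ is a $(1+\epsilon,\Delta)$-good subspace embedding if \dots'') rather than an existence assertion, so there is strictly speaking nothing to prove; what you have written is a proof of the substantive fact lying behind the citation, namely that standard random sketching matrices (rescaled sign matrices, Gaussians, or CountSketch-based constructions) satisfy both properties for a suitable sketch dimension $m$.

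Taken on those terms, your argument is the standard and essentially correct one: a distributional JL bound plus a net argument over the unit sphere of the $k$-dimensional subspace gives the subspace embedding property with $m = \Theta(c^{-2}(k+\log(1/\delta)))$, and the second-moment (JL-moment) computation $\E\bigl[\|A^\top S^\top S B - A^\top B\|_F^2\bigr] \le \tfrac{2}{m}\|A\|_F^2\|B\|_F^2$ followed by Markov gives approximate matrix product with $m = \Theta(k/\epsilon)$; a union bound combines the two. Two small corrections. First, the additive $\Delta$ in the lemma is not ``slack from the tail bound'': for the random-matrix construction you describe one can simply take $\Delta = 0$, and in the source paper $\Delta$ exists to absorb quantization and communication error (here, the error introduced because the algorithm only has access to the \textsc{LogSum} approximation $\wt{E}$ of $E = RM$, cf.\ the bound on $\|E - \wt{E}\|_F$ in Lemma~\ref{lem:sample_score}). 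Second, Markov's inequality only yields constant failure probability for the matrix product property, not ``high probability''; that is adequate for how the lemma is used in the paper (the final guarantee of Theorem~\ref{thm:main_lowrank} holds with probability $9/10$), but if you claim high probability you need the higher-moment or median-based amplification you allude to, stated explicitly.
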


We are going to show that in Algorithm~\ref{alg:main_lowrank}, the span of $P$ has a good approximation subspace.  Intuitively, $E = R \cdot M \in \R^{2 s \times n}$ approximates the row space of $M \in \R^{n \times n}$ and $\tilde E \in \R^{2s \times n}$ approximates $E \in \R^{2s \times n}$, so by the definition, the leverage scores $\{\ell_i\}$ of $\tilde E \in \R^{2s \times n}$ are the generalized leverage scores of $M$. Then the conclusion follows from Lemma~\ref{lem:leverage_sample}. Formally, we have the following lemma.

\begin{lemma}[Sampling leverage scores]\label{lem:sample_score}
Let $s = O(k\log k)$. Let $d_1 = O( k \log^2 k )$. 
Recall that $P\in \R^{n\times d_1}$ is the matrix sampled with the leverage score of $(S \cdot M) \in \R^{s \times n}$, as constructed in Line \ref{alg:sample p} as in Algorithm \ref{alg:main_lowrank}.

There exists matrix $S\in \mathbb{R}^{s\times n}$, such that there exists $C$ satisfying
\begin{align*}
  \| PC - M\|_F^2 \leq 5 \|M - [M]_k\|_F^2 + \Delta_1.
\end{align*}
where $\Delta_1 = O(\epsilon^2 /s) \|M\|_{1,2}^2$. 
\end{lemma}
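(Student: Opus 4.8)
The plan is to realize the three conceptual steps sketched after Definition~\ref{def:gen_leverage_1}: exhibit a subspace embedding so that $E = RM$ is a low-rank proxy of $M$ whose (approximate) leverage scores are legitimate generalized leverage scores of $M$, and then invoke Lemma~\ref{lem:leverage_sample} to pass from sampling to an approximation subspace. First I would fix $S$ to be the CountSketch (SparseJL) matrix of Algorithm~\ref{alg:main_lowrank} with $s = O(k\log k)$ rows and set $R = [S_+ ; S_-]$. By Lemma~\ref{lem:lemma3b16}, $R$ is a $(1+\alpha, \Delta)$-good subspace embedding for a constant $\alpha < 0.3$ once $s$ is a large enough multiple of $k\log k$, so it enjoys both the subspace-embedding and the approximate-matrix-product properties. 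Applying the approximate-matrix-product bound to $A = M - [M]_k$ together with the projection onto the top-$k$ right singular space, I would establish that $\min_X \|X(RM) - M\|_F \leq (1+\alpha)\|M - [M]_k\|_F + \delta_0$ with $\delta_0^2 = O(\epsilon^2/s)\|M\|_{1,2}^2$. By Definition~\ref{def:gen_leverage_1}, this certifies that the exact leverage scores of $E = RM \in \R^{2s\times n}$, which has rank $q \leq 2s = O(k\log k)$, form a set of $(1+\alpha, \delta_0, q, k)$-generalized leverage scores of $M$.

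Next I would account for the fact that the algorithm does not have $E$ but only the sketched $\tilde E \leftarrow \textsc{LogSum}(RM)$. Because $R$ has entries in $\{0,1\}$, each entry of $RM = R f(A)$ is exactly an $f$-inner product, so Theorem~\ref{thm:logsum} (and Remark~\ref{remark:norm}) guarantees that the column norms, and the quantities feeding the leverage-score computation, are preserved up to a $(1\pm\epsilon)$ multiplicative factor plus an additive term controlled by $\|M\|_{1,2}$. I would argue that the scores computed from $\tilde E$ are therefore within a constant factor of the true leverage scores of $E$, hence are $\beta$-approximate (for a constant $\beta$) generalized leverage scores in the sense required by Definition~\ref{def:leverage_score_sampling}. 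Feeding these into Lemma~\ref{lem:leverage_sample} with $q = O(k\log k)$ and $d_1 = O(\alpha^{-2} q\log q) = O(k\log^2 k)$ columns yields, with probability $\geq 0.99$, a rank-$q$ $(1+2\alpha, 2\delta_0)$-approximation subspace in the column span of $P$, i.e. $\|PC - M\|_F \leq (1+2\alpha)\|M-[M]_k\|_F + 2\delta_0$ for some $C$.

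Finally I would convert this to the squared form of the statement. Squaring and using $(a+b)^2 \leq 2a^2 + 2b^2$ gives $\|PC-M\|_F^2 \leq 2(1+2\alpha)^2 \|M - [M]_k\|_F^2 + 8\delta_0^2$; choosing the constant $\alpha$ small enough that $2(1+2\alpha)^2 \leq 5$ and absorbing $8\delta_0^2 = O(\epsilon^2/s)\|M\|_{1,2}^2$ into $\Delta_1$ yields the claim.

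The main obstacle is the second step: rigorously showing that the \emph{sketched} leverage scores of $\tilde E$ remain valid constant-factor generalized leverage scores, and propagating the additive sketch error — which Theorem~\ref{thm:logsum} measures against $\sum_{i} f(|A_{i,j}|)$, i.e. against the column $\ell_1$-norms of $M$ — into an additive term of the correct shape $O(\epsilon^2/s)\|M\|_{1,2}^2$ rather than something scaling with $\|M\|_F^2$. Keeping the multiplicative distortion from the sketch below the slack allowed by the constant $5$, while simultaneously tying the additive budget to $\|M\|_{1,2}$, is where the parameter choices $s = O(k\log k)$ and the $\gamma$ inside \textsc{LogSum} must be balanced carefully.
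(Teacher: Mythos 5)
Your high-level scaffolding (subspace embedding $\Rightarrow$ generalized leverage scores $\Rightarrow$ Lemma~\ref{lem:leverage_sample} $\Rightarrow$ square and absorb constants) matches the paper's, but the step you yourself flag as ``the main obstacle'' is a genuine gap, and it is precisely the step the paper's proof is engineered to avoid. You propose to show that the leverage scores computed from $\tilde E = \textsc{LogSum}(RM)$ are within a constant factor of the true leverage scores of $E = RM$, and then to exploit the $\beta$-approximation slack in Definition~\ref{def:leverage_score_sampling}. But Theorem~\ref{thm:logsum} and Remark~\ref{remark:norm} only give entrywise additive error (measured against column $\ell_1$-norms) and $(1\pm\epsilon)$ approximation of row/column norms; neither controls leverage scores, which depend on the row space (the right singular vectors) of $E$ and are not stable under small entrywise perturbations. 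A row of $E$ that is nearly zero can, after an entrywise perturbation of size $\epsilon$, point in a direction orthogonal to all other rows and acquire leverage score $1$, so no constant-factor comparison between the scores of $\tilde E$ and those of $E$ follows from the sketching guarantees. There is also a misplacement earlier in your argument: at the level of the exact $E=RM$, the subspace-embedding argument gives $\min_X \|XE - M\|_F \le 1.1\,\|M-[M]_k\|_F$ with \emph{no} additive term; the $O(\epsilon^2/s)\|M\|_{1,2}^2$ term cannot arise from the approximate-matrix-product property applied to $M-[M]_k$ --- in the paper it comes solely from the LogSum sketch error $\|E - \tilde E\|_F$.

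The paper's proof never compares the two sets of scores. Instead it shows that $\tilde E$ \emph{itself} is a valid proxy matrix in the sense of Definition~\ref{def:gen_leverage_1}: writing $X=[Z,Z]$ with $Z = [M]_k(S[M]_k)^\dagger = [M]_k[M]_k^\dagger S^\dagger$, so that $\|X\|_2 \le 2\|Z\|_2 = O(1)$, it bounds $\|M - X\tilde E\|_F^2 \le 2\|M - XE\|_F^2 + 2\|X\|_2^2\|E-\tilde E\|_F^2 \le 3\|M-[M]_k\|_F^2 + O(\epsilon^2/s)\|M\|_{1,2}^2$, where $\|E - \tilde E\|_F \le O(\epsilon/\sqrt{s})\|M\|_{1,2}$ follows by applying Theorem~\ref{thm:logsum} entrywise and using the magnitude of the CountSketch entries. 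Since the algorithm stores the small matrix $\tilde E$ explicitly and computes its leverage scores \emph{exactly}, those exact scores are themselves $(1+\alpha,\Delta,q,k)$-generalized leverage scores of $M$, and Lemma~\ref{lem:leverage_sample} applies directly --- no perturbation theory for leverage scores is needed anywhere. To repair your write-up, replace your second step by this argument: shift the sketch error from ``scores of $E$ versus scores of $\tilde E$'' to ``$M$ is well approximated in the row space of $\tilde E$,'' which is exactly what the additive $\Delta_1$ in the lemma is accounting for.
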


\begin{proof}
First, $s$ is large enough so that $S \in \R^{s \times n}$ is a $0.1$-subspace embedding matrix for subspace of dimension $k$; see~\cite{balcan2016communication,w14}. Then it is known that there exists $Z\in \mathbb{R}^{n\times s}$ satisfying
\begin{align*}
  \| ZSM - M\|_F \leq (1+0.1) \|M - [M]_k\|_F.
\end{align*}

Clearly, there exists $X=[Z, Z] \in \R^{n\times 2s}$ such that 
\begin{align}\label{eq:XRM}
  \| XRM - M\|_F \leq (1+0.1) \|M - [M]_k\|_F.
\end{align}
Let $E=RM \in \R^{2s\times n}$. Then 
\begin{align*}
\| M - X \tilde E\|_F^2 
& \le 2\| M - X E \|_F^2 + 2\| X E - X \tilde E\|_F^2 \\
& \le 3\|M - [M]_k\|_F^2 + 2\| X E - X \tilde E\|_F^2.
\end{align*}
where the first step follows from the inequality $\|A+B\|_F^2\leq (\|A\|_F+\|B\|_F)^2\leq 2\|A\|^2_F+2\|B\|^2_F$, the second step follows from Eq.~\eqref{eq:XRM} and the definition $E=RM \in \R^{2s \times n}$.

Consider the second term.
\begin{align*}
\| X E - X \tilde E\|_F
\le & ~ \| X \|_2 \|E - \tilde E\|_F \\
\le & ~ 2\| Z \|_2 \|E - \tilde E\|_F.
\end{align*}
where the last step we use $X=[Z, Z]\in \mathbb{R}^{n\times 2s}$.

Hence we have 
\begin{align*}
\|E - \tilde E\|_F 
\le & ~ O(\epsilon) \max_{i\in [2s],j\in [n]} |R_{i,j}|\|M\|_{1,2}  \\
\le & ~ O(\epsilon/\sqrt{s})\|M\|_{1,2}.
\end{align*}
where the first step follows from  our guarantee on our sketching method in Theorem~\ref{thm:logsum}, the second step follows from the construction of $R$, i.e. the range of each entry of the CountSketch matrix.

By Lemma \ref{lem:lemma3b16}, we can rewrite $Z$ as follows:
\begin{align*}
Z = & ~ [M]_k (S[M]_k)^\dagger \\
= & ~ [M]_k[M]_k^{\dagger}S^{\dagger},
\end{align*}

We can upper bound $\| Z \|_2$ by $O(1)$,
\begin{align*}
\|Z\|_2 \leq & ~ \|[M]_k[M]_k^{\dagger}\|_2\cdot\|S^{\dagger}\|_2 \\
= & ~ \|S^{\dagger}\|_2 \\
= & ~ O(1).
\end{align*} 
Putting it all together, we have
\begin{align*}
\| M - X \tilde E\|_F^2 
 \le 3\|M - [M]_k\|_F^2 + O(\epsilon^2 /s) \|M\|_{1,2}^2.
\end{align*}
This satisfies the definition of generalized leverage scores. Then the statement follows from Lemma~\ref{lem:leverage_sample}.
\end{proof}

\subsection{Adaptive Sampling}

\begin{lemma}[Adaptive]\label{lem:adaptive}
Let  $d_1=O(k\log^2 k)$ and $ d_2 = O(k/\epsilon)$.
With probability $\ge 0.99$, 
there exists $C\in \R^{(d_1+d_2)\times n}$ such that $YC$ is rank-$k$ and 
\begin{align*}
 \|YC - M\|_F^2  \le 5\| M - [M]_k\|_F^2 + \Delta_1 + \Delta_2
\end{align*}
where $\Delta_1$ is defined as Lemma~\ref{lem:sample_score}, and $\Delta_2 = O(\epsilon \sqrt{d_1} +  \epsilon^2 d_1)\|M\|^2_F$.
\end{lemma}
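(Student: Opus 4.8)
The plan is to combine the guarantee from Lemma~\ref{lem:sample_score} (that $\spn(P)$ already contains a rank-$q$ matrix with $5$-approximation plus additive error $\Delta_1$) with one round of the adaptive sampling procedure of Theorem~\ref{thm:theorem_3_in_dv06}, where the subspace $V$ is taken to be $\spn(P)$. Concretely, I would instantiate the adaptive sampling bound with $A = M$, $V = \spn(P)$, sample size $s_1 = d_2$, and the sampled columns $\tilde Y$; then $Y = \tilde Y \cup P$ and the residual $E_0 = M - \pi_{\spn(P)}(M)$. Theorem~\ref{thm:theorem_3_in_dv06} gives, in expectation over $\tilde Y$,
\begin{align*}
\E_{\tilde Y}\left[ \| M - \pi_{\spn(Y),k}(M) \|_F^2 \right] \le \| M - [M]_k \|_F^2 + \frac{k}{c\, d_2} \| E_0 \|_F^2.
\end{align*}
The first step is thus to bound $\|E_0\|_F^2 = \|M - \pi_{\spn(P)}(M)\|_F^2$. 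By optimality of the projection, $\|E_0\|_F^2 \le \|PC - M\|_F^2$ for the $C$ produced by Lemma~\ref{lem:sample_score}, so $\|E_0\|_F^2 \le 5\|M-[M]_k\|_F^2 + \Delta_1$. Plugging $d_2 = O(k/\epsilon)$ makes the coefficient $\frac{k}{c\,d_2} = O(\epsilon)$, so the extra term becomes $O(\epsilon)\cdot(5\|M-[M]_k\|_F^2 + \Delta_1)$, which folds into the existing $5\|M-[M]_k\|_F^2$ term (up to adjusting the constant $c$) and contributes an $O(\epsilon)\Delta_1$ piece absorbed into $\Delta_1$.

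The main subtlety, and the reason the additive error $\Delta_2 = O(\epsilon\sqrt{d_1} + \epsilon^2 d_1)\|M\|_F^2$ appears, is that we cannot sample exactly according to the true squared distances $\|E_0^{(i)}\|_2^2/\|E_0\|_F^2$. Instead the algorithm estimates, for each column $i$, the quantity $\tilde s_i \approx \|M_{*i}\|_2^2 - \|\Gamma_{*i}\|_2^2$ using \textsc{LogSum} sketches ($\tilde z_i$ for $\|M_{*i}\|_2^2$ and $\|\tilde\Gamma_i\|_2^2$ for the projection onto $\spn(Q_p)=\spn(P)$), and then samples according to $p_i = \max(\tilde s_i, \eta\,\tilde z_i)$ with $\eta = O(\epsilon\sqrt{d_1}+\epsilon^2 d_1)$. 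The core of the proof is therefore to show that this approximate distribution still satisfies the hypothesis $P_i^{(j-1)} \ge c\,\|E_0^{(i)}\|_2^2/\|E_0\|_F^2$ of Theorem~\ref{thm:theorem_3_in_dv06}, up to the additive slack. I would argue that the sketching guarantees give $\tilde z_i = (1\pm\epsilon)\|M_{*i}\|_2^2$ and $\|\tilde\Gamma_i\|_2^2 = (1\pm\epsilon)\|\Gamma_{*i}\|_2^2$, so $\tilde s_i$ approximates the true residual $\|E_0^{(i)}\|_2^2$ up to multiplicative $(1\pm\epsilon)$ error \emph{plus} an additive error of order $\epsilon\,\|M_{*i}\|_2^2$ (since the error on each of the two squared norms scales with the norm itself, and a difference of two large approximate quantities has additive error proportional to their magnitude, with $d_1$ entering through the dimension of the projected space $Q_p$). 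The $\max(\cdot, \eta\tilde z_i)$ flooring is what guarantees every column retains at least an $\eta$-fraction of sampling mass, ensuring the lower bound on $P_i^{(j-1)}$ holds even when $\tilde s_i$ underestimates a small true residual.

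The hard part will be controlling exactly how the additive sketching errors propagate into the residual-distance estimates and aggregate across all $n$ columns to yield precisely the stated $\Delta_2 = O(\epsilon\sqrt{d_1}+\epsilon^2 d_1)\|M\|_F^2$ form; this is where the $\sqrt{d_1}$ and $d_1$ factors come from, reflecting the accumulation of projection-error over the $d_1$-dimensional span of $P$ and the Cauchy--Schwarz step relating $\sum_i \epsilon\|M_{*i}\|_2^2$-type terms to $\|M\|_F^2$. Once the approximate distribution is shown to dominate the true leverage-distance distribution up to these additive terms, I would apply Theorem~\ref{thm:theorem_3_in_dv06} to get the expectation bound, convert expectation to a high-probability ($\ge 0.99$) statement via Markov's inequality (adjusting constants), and finally note that $\pi_{\spn(Y),k}(M)$ is realized as $YC$ for some $C\in\R^{(d_1+d_2)\times n}$ with $YC$ of rank $k$, giving the claimed inequality with the additive terms collected into $\Delta_1+\Delta_2$.
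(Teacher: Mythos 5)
Your skeleton matches the paper's: invoke Lemma~\ref{lem:sample_score} to bound $\|E_0\|_F^2 = \|M - \pi_{\mathrm{span}(P)}(M)\|_F^2 \le 5\|M-[M]_k\|_F^2 + \Delta_1$, bound the per-column estimation error of $\tilde s_i$ via the \textsc{LogSum} guarantee (this indeed yields $|\tilde s_i - s_i| \le \delta_i := O(\epsilon\sqrt{d_1}+\epsilon^2 d_1)\|M_i\|_2^2$, with $\sqrt{d_1}$ from Cauchy--Schwarz over the $d_1$ coordinates of $\Gamma_i$), and then feed the approximate distribution into Theorem~\ref{thm:theorem_3_in_dv06}. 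However, there is a genuine gap at the step you flag as ``the hard part'' and then assert away: showing that the approximate distribution ``dominates the true leverage-distance distribution'' is not merely a bookkeeping exercise, and in fact the domination you need is \emph{false} in general. Theorem~\ref{thm:theorem_3_in_dv06} requires the \emph{normalized} probabilities to satisfy $p_i/\sum_j p_j \ge c\, s_i/\sum_j s_j$. Your argument (flooring by $\eta\tilde z_i$ plus $|\tilde s_i - s_i|\le \delta_i$) only controls the numerator, giving $p_i \ge s_i/2$. You also need the denominator control $\sum_j p_j \le O\bigl(\sum_j s_j\bigr)$, and this can fail badly: when $M$ is already nearly contained in $\mathrm{span}(P)$, every true residual $s_i$ is tiny, while the floor terms alone contribute $\sum_i \eta\tilde z_i \approx \eta\|M\|_F^2$, so the ratio $\sum_j p_j / \sum_j s_j$ is unbounded and the constant $c$ in the adaptive-sampling hypothesis degenerates, making the conclusion vacuous.

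The paper resolves this with a dichotomy that your proposal is missing. Let $C=\{i : \tilde s_i \ge \delta_i\}$. Case (i): $\sum_{i\in C}\tilde s_i \ge 2\sum_{i}\delta_i$. Then one shows $\sum_i p_i \le 4\sum_i s_i$, the normalized probabilities dominate with a fixed constant, and Theorem~\ref{thm:theorem_3_in_dv06} applies -- note that in this case \emph{no} $\Delta_2$ term is incurred. Case (ii): $\sum_{i\in C}\tilde s_i < 2\sum_i\delta_i$. Then $\sum_i s_i \le 4\sum_i \delta_i = O(\epsilon\sqrt{d_1}+\epsilon^2 d_1)\|M\|_F^2 = \Delta_2$, i.e.\ the projection onto $\mathrm{span}(P)\subseteq \mathrm{span}(Y)$ is already within additive error $\Delta_2$ of $M$, and adaptive sampling is not needed at all. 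So $\Delta_2$ is not, as you describe it, the accumulated propagation of sketching error through the sampling theorem; it is the fallback cost of the regime in which the sampling theorem cannot be invoked. Without this case split your proof stalls exactly at the domination claim, and the stated form of $\Delta_2$ cannot be recovered from the route you outline.
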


\begin{proof}
If $p_i$'s are larger than a constant times the true square distances $s_i$'s, then the statement follows from Theorem~\ref{thm:theorem_3_in_dv06}. So consider the difference between $\tilde s_i$ and $s_i$. 

Let $\Gamma = Q_p^\top M\in \R^{d_1\times n}$ where $Q_p$ is obtained from QR-decomposition as in Line \ref{alg:qrdecomposition} of Algorithm \ref{alg:main_lowrank}. 
\begin{align*}
|\tilde s_i - s_i| 
& = \Big| \| \Gamma_i \|_2^2 - \| \tilde\Gamma_i \|_2^2 + \| \tilde M_i \|_2^2 - \| M_i \|_2^2 \Big|.
\end{align*}
By our guarantee in Theorem~\ref{thm:logsum}, 
\begin{align}\label{eq:gammaji}
|\Gamma_{ji}  - \tilde\Gamma_{ji}  |
& \le \epsilon \| (Q_p)_j\|_F \|M_i\|_F = \epsilon  \|M_i\|_2
\end{align}
where the last inequality follows since $(Q_p)_j$'s are basis vectors and have length $1$. 

So 
\begin{align}\label{eq:gammaji_square}
\left|\Gamma_{ji}^2  - \tilde\Gamma_{ji}^2 \right|
= & ~ \left| ( \Gamma_{ji} - \tilde\Gamma_{ji} ) \cdot ( \Gamma_{ji}  + \tilde\Gamma_{ji} ) \right| \notag\\
= & ~ \left|\Gamma_{ji} - \tilde\Gamma_{ji} \right| \cdot \left|\Gamma_{ji}  + \tilde\Gamma_{ji} \right| \notag\\
\leq & ~ \left|\Gamma_{ji} - \tilde\Gamma_{ji} \right| \cdot \left( \left|\Gamma_{ji}  - \tilde\Gamma_{ji} \right|+2|\Gamma_{ji}| \right) \notag\\
= & ~  2  \left|\Gamma_{ji} - \tilde\Gamma_{ji} \right| \cdot |\Gamma_{ji}| + \left|\Gamma_{ji} - \tilde\Gamma_{ji} \right|^2 \notag \\
\le & ~ 2\epsilon \|M_i\|_2 |\Gamma_{ji}| + \epsilon^2  \|M_i\|_2^2
\end{align}
where the third step follows from triangle inequality, and the last step follows Eq.~\eqref{eq:gammaji}.

And
\begin{align*}
\Big| \|\Gamma_{i}\|_2^2  - \|\tilde\Gamma_{i}\|_2^2  \Big|
&\leq \sum_{j\in [d_1]} |\Gamma_{ji}^2  - \tilde\Gamma_{ji}^2|\\
& \le 2\epsilon  \|M_i\|_2 \sum_{j\in [d_1]} |\Gamma_{ji}| + O(\epsilon^2 d_1) \|M_i\|_2^2  \\
& \le O(\epsilon  \sqrt{d_1} ) \|M_i\|_2 \|\Gamma_{i}\|_2 + O(\epsilon^2 d_1) \|M_i\|_2^2  \\
& \le O(\epsilon  \sqrt{d_1}) \|M_i\|^2_2 + O(\epsilon^2 d_1) \|M_i\|_2^2 \\
& = O(\epsilon  \sqrt{d_1} + \epsilon^2 d_1) \|M_i\|_2^2. 
\end{align*}
where the first step follows from triangle inequality, the second step follows from \eqref{eq:gammaji_square}, the third step follows form Cauchy-Swartz inequality, the fourth step follows $\Gamma_{i}=Q_p^\top M_i\in \R^{d_1}$ and $Q_p\in \R^{d_1\times n}$ is an orthonormal matrix. 

Therefore,
\begin{align*}
|\tilde s_i - s_i| \le O(\epsilon  \sqrt{d_1} + \epsilon^2 d_1) \|M_i\|^2_2 := \delta_i.
\end{align*}

Suppose that the algorithm sets $\eta:=O(\epsilon \sqrt{d_1} +  \epsilon^2 d_1)$ such that the threshold $\delta_i\le \eta \tilde z_i \le 2\delta_i$. 

If $\tilde s_i \le \delta_i$, then $s_i \le 2 \delta_i$, and thus $p_i \ge s_i/2$. If $\tilde s_i > \delta_i$, then $s_i \le \tilde s_i + \delta_i \le 2 \tilde s_i \le 2 p_i$, and thus $p_i \ge s_i/2$. 
Now, if $\sum_i p_i$ is not too large compared to $\sum_i s_i$, then we are done by applying Theorem~\ref{thm:theorem_3_in_dv06}. 

Let $C\subseteq [n]$ denote the set of indices $i$ such that $\tilde s_i \ge \delta_i$. 
If $\sum_{i \in C} \tilde s_i \ge 2 \sum_{i \in [n]} \delta_i $, then 
\begin{align}\label{eq:s_vs_tilde_s}
\sum_{i \in C} s_i \ge \sum_{i \in C} \tilde s_i - \sum_{i \in [n]} \delta_i \ge \frac{1}{2}\sum_{i \in C} \tilde s_i 
\end{align}
and thus 
\begin{align*}
  \sum_{i \in [n]} p_i 
  \le & ~ \sum_{i \in C} \tilde s_i + 2 \sum_{i \in [n]} \delta_i \notag \\
  \le & ~ 2 \sum_{i \in C} \tilde s_i \notag \\
  \le & ~ 4 \sum_{i \in C} s_i \notag \\
  \le & ~ 4\sum_{i \in [n]} s_i.
\end{align*}
where the first step follows from the definition of $p_i$,
i.e. $p_i=\max\{ \tilde s_i,\eta\tilde z_i\}$ and  the assumption $\eta\tilde z_i \le 2\delta_i$, the second step follows from the assumption $\sum_{i \in C} \tilde s_i \ge 2 \sum_{i \in [n]} \delta_i $, the third step uses \eqref{eq:s_vs_tilde_s}, the fourth step  is because $s_i\geq 0$ and $C\subset [n]$.

So we are done in this case.

In the other case when $\sum_{i \in C} \tilde s_i < 2 \sum_{i \in [n]} \delta_i $, we have 
\begin{align*}
  \sum_{i \in [n]}  s_i
  \leq & ~ \sum_{i \in [n]} \tilde s_i+\delta_i \\
  \leq & ~ \sum_{i\in C} 2s_i+\sum_{i\not\in C}2\delta_i \\
  \leq & ~ 4 \sum_{i \in [n]} \delta_i  \\
  = & ~ O(\epsilon \sqrt{d_1} +  \epsilon^2 d_1)\|M\|^2_F :=\Delta_2.
\end{align*}
where the first step follows from $\delta_i=|s_i-\tilde s_i|$ and triangle inequality, the second step uses the construction of the set $C$ and the third step uses the assumption $\sum_{i \in C} \tilde s_i < 2 \sum_{i \in [n]} \delta_i$.

This means that $\Gamma$ is close to $M$, and thus $[\Gamma]_k$ (the best rank-$k$ approximation to $\Gamma$) will be the desired approximation in the span of $P$ (and thus the span of $Y$ since $P\subseteq Y$). This completes the proof.
\end{proof}

\subsection{Computing Approximation Solutions}

\begin{lemma} \label{lem:approxLA}

Let $d_1=O(k\log^2k)$  and $d_2=O(k/\epsilon)$.
There is an algorithm that outputs a matrix $L$ , such that 
\[
	\|LL^\top M - M\|_F^2  \leq 10\| M - [M]_k \|_F^2 + 2\Delta_1 + 2\Delta_2 + \Delta_3
\]
where $\Delta_1$ is defined as Lemma~\ref{lem:sample_score}, $\Delta_2$ is defined as Lemma~\ref{lem:adaptive} and $\Delta_3 = O(\epsilon^2 (d_1+d_2))\| M\|_F^2$.
\end{lemma}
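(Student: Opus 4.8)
The plan is to decompose the error via the Pythagorean Theorem (Theorem~\ref{thm:pythagorean}) and then charge the two pieces separately. Write $\Pi := Q_y^\top M \in \R^{(d_1+d_2)\times n}$ for the exact projected matrix, and recall from Algorithm~\ref{alg:main_lowrank} that the output is $L = Q_y\tilde W$, where $\tilde W$ are the top-$k$ left singular vectors of the sketch $\tilde\Pi \approx \Pi$. First I would note $L$ is orthonormal (since $L^\top L = \tilde W^\top Q_y^\top Q_y \tilde W = I$), and that $LL^\top M = Q_y\tilde W\tilde W^\top\Pi$ has columns in $\mathrm{span}(Q_y)$ while $(I-Q_yQ_y^\top)M$ is orthogonal to that span. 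Hence $\Tr[\,((I-Q_yQ_y^\top)M)^\top (Q_yQ_y^\top M - LL^\top M)\,]=0$, and Theorem~\ref{thm:pythagorean} gives the clean split
\begin{align*}
\|LL^\top M - M\|_F^2 = \|M - Q_yQ_y^\top M\|_F^2 + \|\Pi - \tilde W\tilde W^\top \Pi\|_F^2,
\end{align*}
using $\|Q_y X\|_F = \|X\|_F$ for the orthonormal $Q_y$.

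The next step handles the in-subspace term $\|\Pi - \tilde W\tilde W^\top \Pi\|_F$, which is where the approximate SVD enters. Since $I-WW^\top$ is a contraction for any orthonormal $W$, the map $W\mapsto \|(I-WW^\top)\cdot\|_F$ is $1$-Lipschitz in its matrix argument, so $\bigl|\,\|(I-WW^\top)\Pi\|_F - \|(I-WW^\top)\tilde\Pi\|_F\,\bigr| \le \|\Pi-\tilde\Pi\|_F$. Combining this with the Eckart--Young optimality of $\tilde W$ for $\tilde\Pi$ (applied against the true optimizer $W^\ast$, the top-$k$ left singular vectors of $\Pi$), I would chain
\begin{align*}
\|\Pi - \tilde W\tilde W^\top \Pi\|_F
&\le \|\tilde\Pi - \tilde W\tilde W^\top \tilde\Pi\|_F + \|\Pi-\tilde\Pi\|_F \\
&\le \|\tilde\Pi - W^\ast (W^\ast)^\top \tilde\Pi\|_F + \|\Pi-\tilde\Pi\|_F \\
&\le \|\Pi - [\Pi]_k\|_F + 2\|\Pi-\tilde\Pi\|_F,
\end{align*}
and then square, using $(s+t)^2\le 2s^2+2t^2$, to obtain $\|\Pi - \tilde W\tilde W^\top \Pi\|_F^2 \le 2\|\Pi-[\Pi]_k\|_F^2 + 8\|\Pi-\tilde\Pi\|_F^2$.

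To close the argument I would connect back to the adaptive-sampling guarantee. Since $Q_y$ is an orthonormal basis for $\mathrm{span}(Y)$ and $L^\ast := Q_y W^\ast$ realizes the best rank-$k$ approximation of $M$ inside $\mathrm{span}(Y)$, the same Pythagorean identity gives $\|M-Q_yQ_y^\top M\|_F^2 + \|\Pi-[\Pi]_k\|_F^2 = \|L^\ast (L^\ast)^\top M - M\|_F^2 \le \|YC - M\|_F^2$, and Lemma~\ref{lem:adaptive} bounds the right side by $5\|M-[M]_k\|_F^2 + \Delta_1 + \Delta_2$. Plugging the two displays above into the split and bounding $\|M-Q_yQ_y^\top M\|_F^2 + 2\|\Pi-[\Pi]_k\|_F^2 \le 2\bigl(\|M-Q_yQ_y^\top M\|_F^2 + \|\Pi-[\Pi]_k\|_F^2\bigr)$ yields $10\|M-[M]_k\|_F^2 + 2\Delta_1 + 2\Delta_2$, and it remains to set $\Delta_3 := 8\|\Pi-\tilde\Pi\|_F^2$ and control it. Here $\tilde\Pi = \textsc{LogSum}(Q_y^\top M)$, so the entrywise matrix-product guarantee from Theorem~\ref{thm:logsum} bounds each column error and hence $\|\Pi-\tilde\Pi\|_F^2 = O(\epsilon^2(d_1+d_2))\|M\|_F^2$, matching $\Delta_3$.

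I expect the main obstacle to be the careful constant and error bookkeeping rather than any single deep inequality: one must track how the factor $5$ in Lemma~\ref{lem:adaptive} becomes $10$ (through the squared triangle inequality on the approximate singular subspace), verify that $\Delta_1,\Delta_2$ are charged exactly twice, and confirm that the perturbation argument only ever loses $\|\Pi-\tilde\Pi\|_F$ (never a multiplicative factor on the optimal residual). A secondary subtlety is ensuring the norm in the sketch bound for $\tilde\Pi$ is the intended one (the additive per-entry guarantee of Theorem~\ref{thm:logsum} naturally produces a $\|M\|_{1,2}$-type quantity, which must be reconciled with the stated $\|M\|_F^2$ form of $\Delta_3$).
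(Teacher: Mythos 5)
Your proof is correct and follows the same architecture as the paper's: the Pythagorean split into $\|M - Q_yQ_y^\top M\|_F^2$ plus an in-span term, the observation that $Q_y[\Pi]_k$ is the best rank-$k$ approximation of $M$ inside $\mathrm{span}(Y)$ and hence beats $YC$ from Lemma~\ref{lem:adaptive}, and the sketch guarantee of Theorem~\ref{thm:logsum} summed entrywise to charge $\|\Pi - \tilde\Pi\|_F^2 \le O(\epsilon^2(d_1+d_2))\|M\|_F^2$. The one genuine difference is the perturbation step: the paper bounds $\|(\tilde W\tilde W^\top - I)Q_y^\top M\|_F^2$ by repeatedly applying the squared triangle inequality $\|A+B\|_F^2 \le 2\|A\|_F^2 + 2\|B\|_F^2$ (its Eq.~\eqref{eq:inW}), while you chain unsquared triangle inequalities, using that $X \mapsto \|(I-WW^\top)X\|_F$ is $1$-Lipschitz for any fixed orthonormal $W$, together with Eckart--Young, and square only once at the end. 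Your variant is actually tighter: the third inequality in the paper's Eq.~\eqref{eq:inW} implicitly needs $\|A-B\|_F^2 \le \|A\|_F^2 + \|B\|_F^2$, which is false in general without an extra factor of $2$, so as written the paper's coefficient on $\|[Q_y^\top M]_k - Q_y^\top M\|_F^2$ should be $4$ rather than $2$, propagating to $20\|M-[M]_k\|_F^2 + 4\Delta_1 + 4\Delta_2 + \Delta_3$ in the final bound; your Lipschitz chaining delivers exactly the coefficient $2$ and hence the stated $10\|M-[M]_k\|_F^2 + 2\Delta_1 + 2\Delta_2 + \Delta_3$. Your closing concern about reconciling the entrywise sketch guarantee with the $\|M\|_F^2$ form of $\Delta_3$ is resolved exactly as the paper does it: the per-entry bound $|\Pi_{ji} - \tilde\Pi_{ji}| \le \epsilon\|M_i\|_2$ (the analogue of the paper's Eq.~\eqref{eq:gammaji}), summed over all $(d_1+d_2)\cdot n$ entries, gives $\|\Pi - \tilde\Pi\|_F^2 \le \epsilon^2(d_1+d_2)\|M\|_F^2$, so nothing further is needed there.
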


\begin{proof}
Since $Q\in \mathbb{R}^{n\times (d_1+d_2)}$ is orthonormal,
$Q^\top Q=I_{d_1+d_2}$.
We need the following auxiliary result: for any $A\in \R^{(d_1+d_2)\times n}$,
\begin{align}\label{eq:error} 
  \| QAM - M \|_F^2  = \| QA  M - QQ^\top M \|_F^2  +  \| M - Q Q^\top M \|_F^2. 
\end{align}
This is because
\begin{align*}
& ~ \left(QAM - QQ^\top M\right)^\top \left(M - Q Q^\top M\right)\\
= & ~ M^\top A^\top Q^\top M-M^\top QQ^\top M-M^\top A^\top Q^\top Q Q^\top M+MQQ^\top QQ^\top M\\
= & ~ M^\top A^\top Q^\top M-M^\top QQ^\top M-M^\top A^\top  Q^\top M+MQQ^\top M\\
= & ~ 0
\end{align*}
where the second step uses  $Q^\top Q=I_{d_1+d_2}$.

Then \eqref{eq:error} simply follows from Theorem \ref{thm:pythagorean}.

We also need the following result: for any $A\in \R^{(d_1+d_2)\times n}$,
\begin{align}\label{eq:Q_keep_norm}
\|QA\|_F^2=\|A\|_F^2
\end{align}
This is because
\begin{align*}
\|QA\|_F^2=\Tr[A^\top Q^\top Q A]=\Tr[A^\top A]=\|A\|_F^2
\end{align*}
where the second step we uses the fact that $Q^\top Q=I_{d_1+d_2}$.

Let $X\in \R^{n\times n}$ denote the $YC\in \R^{n\times n}$ in Lemma~\ref{lem:adaptive}.
Recall that $Q_y$ is obtained from QR-decomposition of $Y\in \R^{n\times (d_1+d_2)}$,
so we can write $Y=Q_yR_y$.
For simplicity, let $Q$ denote $Q_y$ and $R$ denote $R_y$. 
Then we have
\begin{align}\label{eq:QQMk}
\| Q[Q^\top M]_k   - M \|^2_F 
 = & \| Q[Q^\top M]_k - QQ^\top M \|^2_F  + \| QQ^\top M - M \|_F^2 \notag\\
 =&\|[Q^\top M]_k - Q^\top M \|^2_F  + \| QQ^\top M - M \|_F^2 \notag\\
\le & \|RC - Q^\top M \|^2_F  + \| QQ^\top M - M \|_F^2 \notag\\
=& \|QRC - QQ^\top M \|^2_F  + \| QQ^\top M - M \|_F^2 \notag\\
=& \|QRC - M \|^2_F \notag\\
= & \|X - M \|^2_F
\end{align}
where the first step uses \eqref{eq:error} by setting $A=[Q^\top M]_k$, 
the second step uses \eqref{eq:Q_keep_norm},
the third step uses the fact that $\rank(RC)\leq \rank(QRC)=\rank(Y)\leq k$ and $[Q^\top M]_k\in \R^{(d_1+d_2)\times n}$ is the best rank-$k$ approximation for $Q^\top M\in \R^{(d_1+d_2)\times n}$, 
the fourth step again uses Eq.~\eqref{eq:Q_keep_norm},
the fifth step uses the Eq.~\eqref{eq:error} by setting $A=RC$,
and the last step uses that $QRC=YC=X \in \R^{n \times n}$.

Therefore,
\begin{align} \label{eq:inQ}
\|Q[Q^\top M]_k  - M\|_F^2  
\leq & ~ \|M - X\|^2_F  \notag\\
\leq & ~ 5\|M - [M]_k\|^2_F + \Delta_1+\Delta_2. 
\end{align}
where the first step follows from \eqref{eq:QQMk}, the second step follows from Lemma \ref{lem:adaptive}.

Let $W\in\R^{(d_1+d_2)\times k}$ denote the top $k$ singular vectors of $Q^\top M\in\R^{(d_1+d_2)\times n}$. Since $\tilde W\in\R^{(d_1+d_2)\times k}$ are the top $k$ singular vectors of $\tilde\Pi\in \R^{(d_1+d_2)\times n}$, we have
\begin{align}\label{eq:inW}
 & ~ \|(\tilde{W}\tilde{W}^\top - I) Q^\top M\|_F^2 \\
 \leq& ~ 2\|(\tilde{W}\tilde{W}^\top - I) \tilde\Pi\|_F^2 +  2\|(\tilde{W}\tilde{W}^\top - I) (Q^\top M - \tilde\Pi)\|_F^2 \notag\\
\leq & ~ 2\|(WW^\top - I) \tilde \Pi\|_F^2 +  2\|(\tilde{W}\tilde{W}^\top - I) (Q^\top M - \tilde\Pi)\|_F^2 \notag\\
\le & ~  2\|(WW^\top - I) Q^\top M\|_F^2 + 2\|(WW^\top - I) (Q^\top M - \tilde \Pi)\|_F^2 +  2\|(\tilde{W}\tilde{W}^\top - I) (Q^\top M - \tilde\Pi)\|_F^2\notag  \\
\le & ~ 2\|[Q^\top M]_k - Q^\top M\|_F^2 + 2\|WW^\top - I\|_2^2\| (Q^\top M - \tilde \Pi)\|_F^2  + 2\|\tilde {W}\tilde{W}^\top - I\|_2^2\| (Q^\top M - \tilde \Pi)\|_F^2\notag \\
\le & ~ 2\|[Q^\top M]_k - Q^\top M\|_F^2 +  4\| (Q^\top M - \tilde \Pi)\|_F^2 \notag \\
\le & ~ 2\|[Q^\top M]_k - Q^\top M\|_F^2 +  O(\epsilon^2)\|Q\|_F^2\|M\|_F^2 \notag \\
\le & ~ 2\|[Q^\top M]_k - Q^\top M\|_F^2 +  O(\epsilon^2 ) \cdot ( d_1+d_2)\|M\|_F^2 \notag \\
=& ~ 2\|[Q^\top M]_k - Q^\top M\|_F^2 +  O(\epsilon^2 (d_1+d_2))\|M\|_F^2 
\end{align}
where the first step uses the fact that $\|A+B\|_F^2\leq 2\|A\|_F^2+2\|B\|_F^2$,
the second step uses the fact that $(\tilde{W}\tilde{W}^\top - I) Q^\top M=[Q^\top M]_k-Q^\top M$ and $[Q^\top M]_k$ is the best rank $k$ approximation of $Q^\top M$,
the third step uses the fact that $\|AB\|_F\leq \|A\|_2\cdot \|B\|_F$ for any matrices $A,B$,
and $WW^\top Q^\top M=[Q^\top M]_k$, since $W$ are the top $k$ singular vectors of $Q^\top M$,
the fourth step uses $\| AA^\top - I \|_2\leq 1$ for all orthonormal matrix $A\in \R^{(d_1+d_2)\times k}$ since $(AA^\top - I)^2=I-AA^\top$,
the fifth step uses convergence grantee in Theorem \ref{thm:logsum},
the sixth step follows from $Q$ is an orthonormal matrix with $d_1+d_2$ columns.

We now bound the error using the above two claims.

Noting $L=Q\tilde W\in \R^{(d_1+d_2)\times k}$, hence  by \eqref{eq:Q_keep_norm} we have
\begin{eqnarray}\label{eq:LLM}
  \|LL^\top  M - QQ^\top M\|_F^2 = \|\tilde{W}\tilde{W}^\top Q^\top M - Q^\top M\|_F^2 
\end{eqnarray}
Therefore
\begin{eqnarray*}
\|LL^\top M - M\|_F^2  
& = & \|LL^\top  M - QQ^\top M\|_F^2  +  \|M - Q Q^\top M\|_F^2\\
& \leq &  2\|Q[Q^\top M]_k  - QQ^\top M\|_F^2 +  O(\epsilon^2 (d_1+d_2))\|M\|_F^2
  +  \|M - Q Q^\top M\|_F^2  \\
& \leq &  2\|Q[Q^\top M]_k   - M \|_F^2 +   O(\epsilon^2 (d_1+d_2))\|M\|_F^2\\
&\leq & 10 \|M - [M]_k\|^2_F + 2\Delta_1 + 2\Delta_2 +  O(\epsilon^2 (d_1+d_2))\|M\|_F^2\\
&= &10 \|M - [M]_k\|^2_F + 2\Delta_1 + 2\Delta_2 +\Delta_3,
\end{eqnarray*}

where the first step uses the fact that $L=Q\tilde W$ and \eqref{eq:error} with $A=\tilde W L^\top$,
the second step uses \eqref{eq:LLM} and \eqref{eq:inW},
the third step uses \eqref{eq:error} with $A=[Q^\top M]_k$,
the fourth step uses Lemma \ref{lem:adaptive},
and the last step is the definition of $\Delta_3$.
\end{proof}

\subsection{Main result}

\begin{table}\caption{Table of parameters}\label{tab:para}
\centering
\begin{tabular}{ |l|l|l|l| } 
\hline
{\bf Notation} & {\bf Choice} & {\bf Location} & {\bf Comment} \\\hline
$s$ & $O(k\log k)$ & Lemma~\ref{lem:sample_score} & size of oblivious sketching matrix\\ \hline

$d_1$ & $O(k \log^2 k)$ & Lemma~\ref{lem:sample_score} & size of column sampling matrix \\ \hline

$d_2$ & $O(k/\epsilon)$ & Lemma~\ref{lem:adaptive} & size of adaptive column sampling \\ \hline
$\Delta_1$ & $O(\epsilon^2/s) \| M \|_{1,2}^2$ & Lemma~\ref{lem:sample_score} & error from oblivious sketching matrix \\ \hline
$\Delta_2$ & $O(\sqrt{\epsilon^2d_1} + \epsilon^2 d_1) \| M \|_F^2$ & Lemma~\ref{lem:adaptive} & error from column sampling matrix  \\ \hline
$\Delta_3$ & $O(\epsilon^2 (d_1+d_2)) \|M\|_F^2$ & Lemma~\ref{lem:approxLA} & error from adaptive column sampling\\ \hline

\end{tabular}
\end{table}

\begin{theorem}
There exists an algorithm (procedure \textsc{LowRankApprox} in Algorithm~\ref{alg:main_lowrank}) that with parameter settings as in Table \ref{tab:para},
runs in query time $nk\poly(\log n,1/\epsilon)$ and space $\tilde O(nk/\epsilon^2)$, 
outputs a matrix $L\in \R^{n\times k}$ so that
\[
	\|LL^\top M - M\|_F^2  \leq 10\| M - [M]_k \|_F^2 + O(\epsilon d_1)\|M\|_F^2 + O(\epsilon^2 /s) \|M\|_{1,2}^2.
\]
holds with probability at least $9/10$. 
\end{theorem}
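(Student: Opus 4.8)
The plan is to assemble the three lemmas of this section into the final guarantee, since each isolates the error incurred at one stage of Algorithm~\ref{alg:main_lowrank}. First I would invoke Lemma~\ref{lem:sample_score} to certify that, after Step~1, the sampled columns $P$ span an approximation subspace for $M$ with error $\Delta_1 = O(\epsilon^2/s)\|M\|_{1,2}^2$; this is where the generalized-leverage-score argument (built on the subspace-embedding property of $R$ and the sketch guarantee of Theorem~\ref{thm:logsum}) enters. Next I would apply Lemma~\ref{lem:adaptive} to the set $Y=\tilde Y\cup P$ produced by the adaptive sampling in Step~2, which upgrades the rank-$q$ approximation to a rank-$k$ one at the cost of the additional term $\Delta_2 = O(\epsilon\sqrt{d_1}+\epsilon^2 d_1)\|M\|_F^2$. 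Finally Lemma~\ref{lem:approxLA} controls the projection-and-truncation in Step~3 and directly yields
\[
\|LL^\top M - M\|_F^2 \le 10\,\|M-[M]_k\|_F^2 + 2\Delta_1 + 2\Delta_2 + \Delta_3,
\]
with $\Delta_3 = O(\epsilon^2(d_1+d_2))\|M\|_F^2$. So the core error estimate is essentially already in hand; what remains is to simplify the additive terms.

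The second step is to substitute the parameter choices of Table~\ref{tab:para} and collapse the three error terms into the two claimed ones. I would keep $2\Delta_1 = O(\epsilon^2/s)\|M\|_{1,2}^2$ unchanged, and merge $2\Delta_2+\Delta_3$ into $O(\epsilon d_1)\|M\|_F^2$ using only elementary inequalities valid for $\epsilon\in(0,1)$ and $d_1\ge k\ge 1$: the term $\epsilon\sqrt{d_1}$ is dominated by $\epsilon d_1$, each $\epsilon^2 d_1$ is dominated by $\epsilon d_1$, and $\epsilon^2 d_2 = O(\epsilon^2\cdot k/\epsilon)=O(\epsilon k)=O(\epsilon d_1)$ since $d_1=O(k\log^2 k)\ge k$. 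This reproduces exactly the right-hand side of the theorem.

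For the success probability I would take a union bound. Lemma~\ref{lem:sample_score} and Lemma~\ref{lem:adaptive} each hold with probability $\ge 0.99$, and every invocation of \textsc{LogSum} must succeed simultaneously across all of its rows and all $n$ columns. By Theorem~\ref{thm:logsum} a single sketched inner product fails with probability $\delta$, so I would set $\delta = 1/\poly(n)$ and union bound over the $O((d_1+d_2)\,n)=\poly(n)$ sketched entries, keeping their total failure probability below a small constant; the dependence on $\delta$ is only polylogarithmic and is absorbed into the $\tilde O$ notation. Combining the two constant-probability sampling events with the high-probability sketch events leaves overall success at least $9/10$.

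The remaining work is bookkeeping: counting the \textsc{LogSum} sketches maintained for $RM$, $Q_p^\top M$, $M$, and $Q_y^\top M$ — i.e.\ $\tilde O(d_1+d_2)$ sketched rows, each consisting of $n$ inner-product sketches of $\tilde O(\epsilon^{-2})$ words — and folding polylogarithmic factors into $\tilde O$ gives the stated space $\tilde O(nk/\epsilon^2)$ and query time $nk\,\poly(\log n,1/\epsilon)$. I expect the only delicate point to be this probability-and-space accounting across the $\Theta(nk)$ sketched entries: one must drive the per-entry $\delta$ small enough for the union bound while verifying that the resulting $\poly\log(n/\delta)$ overhead stays inside the claimed budget. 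The error arithmetic itself is routine once the three lemmas of this section are assumed.
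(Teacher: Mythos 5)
Your proposal is correct and follows essentially the same route as the paper: invoke Lemma~\ref{lem:approxLA} (which already chains Lemmas~\ref{lem:sample_score} and~\ref{lem:adaptive}) to get $\|LL^\top M - M\|_F^2 \le 10\|M-[M]_k\|_F^2 + 2\Delta_1 + 2\Delta_2 + \Delta_3$, then substitute the Table~\ref{tab:para} parameters and absorb $2\Delta_2+\Delta_3$ into $O(\epsilon d_1)\|M\|_F^2$ exactly as the paper does (the paper phrases it as $d_1+d_2 = O(d_1/\epsilon)$, you as $\epsilon^2 d_2 = O(\epsilon k)$ — the same arithmetic). Your probability and space bookkeeping is if anything slightly more explicit than the paper's, which simply cites the four \textsc{LogSum} invocations and Theorem~\ref{thm:logsum}.
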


\begin{proof}[Proof of guarantee]
\begin{align*}
	&~\|LL^\top M - M\|_F^2  \\
	\leq& ~ 10\| M - [M]_k \|_F^2 + 2\Delta_1 + 2\Delta_2 + \Delta_3\\
	= &~10\| M - [M]_k \|_F^2 + O(\epsilon^2/s)\|M\|_{1,2}^2 + O(\sqrt{\epsilon^2d_1} + \epsilon^2 d_1) \| M \|_F^2 + O(\epsilon^2 (d_1+d_2))\|M\|_F^2\\
	= &~10\| M - [M]_k \|_F^2 + O(\epsilon^2/s)\|M\|_{1,2}^2 + O(\sqrt{\epsilon^2d_1} + \epsilon^2 d_1) \| M \|_F^2 + O(\epsilon d_1)\|M\|_F^2\\
	= &~10\| M - [M]_k \|_F^2 + O(\epsilon^2/s)\|M\|_{1,2}^2 + O(\epsilon d_1)\|M\|_F^2
\end{align*}
where the first step uses Lemma \ref{lem:approxLA},
the third step uses the definition of $d_1=O(k\log^2 k)$ and $d_2=O(k/\epsilon)$ so $d_1+d_2=O(d_1/\epsilon)$.
\end{proof}
\begin{proof}[Proof of time and space]
The largest matrix we ever need to store during the process has size $n\times (d_1+d_2)=O(\epsilon^{-1}nk\log^2 k)$.
The space needed by $\textsc{LogSum}$ is bounded by $\tilde O(\epsilon^{-2} nk)$ by Theorem \ref{thm:logsum}.
So the overall space used is at most $\tilde O(\epsilon^{-2} nk)$.

Since we only call $\textsc{LogSum}$ 4 times in the whole process, the query time hence follows from Theorem \ref{thm:logsum}.
\end{proof}

Notice that $\|M\|_{1,2}\geq \|M\|_F \geq \| M - [M]_k \|_F$,
so we can rescale $\epsilon$ to get Theorem \ref{thm:main_lowrank}.

\section{Examples Demonstrating the Differences Between $A$ and $\log(A)$} \label{sec:diff}

\subsection{$\rank (A) \gg \rank ( \log A)$}
In this section, we provide a matrix $A \in \R^{n \times n}$ with $\rank$-$n$, however, the $\rank(\log A) = 1$.

Recall the definition of Vandermonde matrix.
\begin{definition}
An $m \times n$ Vandermonde matrix usually is defined as follows
\begin{align*}
V = \begin{bmatrix}
1 & \alpha_1 & \alpha_1^2 & \cdots & \alpha_1^{n-1} \\
1 & \alpha_2 & \alpha_2^2 & \cdots & \alpha_2^{n-1} \\
1 & \alpha_3 & \alpha_3^2 & \cdots & \alpha_3^{n-1} \\
\vdots & \vdots & \vdots & \ddots & \vdots \\
1 & \alpha_m & \alpha_m^2 & \cdots & \alpha_m^{n-1} \\
\end{bmatrix}
\end{align*}
or $V_{i,j} = \alpha_i^{j-1}, \forall i \in [m], j \in [n]$
\end{definition}

\begin{theorem}\label{thm:rankA_is_big_ranklogA_is_small}
Let $A$ denote a $n \times n$ Vandermonde matrix with $\alpha_i \neq \alpha_j, \forall i\neq j$. Then $\rank(A) = n$ and $\rank ( \log (A) ) = 1$.
\end{theorem}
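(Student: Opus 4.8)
The plan is to treat the two rank assertions separately, since despite involving the same matrix they rest on entirely different facts: the full rank of $A$ comes from the multiplicative structure of its entries, while the degeneracy of $\log A$ comes from turning those products into sums. I would state at the outset the standing assumption that the $\alpha_i$ are positive reals, so that the entrywise logarithm is well-defined.

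For $\rank(A)=n$, I would simply invoke the classical Vandermonde determinant identity
\[
\det(A)=\prod_{1\le i<j\le n}(\alpha_j-\alpha_i).
\]
Because the $\alpha_i$ are pairwise distinct by hypothesis, every factor $\alpha_j-\alpha_i$ is nonzero, so $\det(A)\neq 0$ and $A$ is invertible; hence $\rank(A)=n$. This is the short half of the argument and presents no real difficulty.

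For $\rank(\log A)=1$, the key observation is that the entrywise logarithm converts the powers $\alpha_i^{\,j-1}$ into a product of an index-$i$ term and an index-$j$ term. Concretely,
\[
(\log A)_{i,j}=\log\bigl(\alpha_i^{\,j-1}\bigr)=(j-1)\log\alpha_i,
\]
so that $\log A=u v^\top$ is an outer product with $u_i=\log\alpha_i$ and $v_j=j-1$. This immediately gives $\rank(\log A)\le 1$. To upgrade $\le 1$ to $=1$ I would verify that neither factor vanishes: $v\neq 0$ whenever $n\ge 2$ (its last coordinate is $n-1$), and $u\neq 0$ because the $\alpha_i$ are distinct, so at most one of them can equal $1$ and thus at least one $\log\alpha_i\neq 0$.

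I expect the only genuine subtlety — the ``main obstacle,'' such as it is — to be the well-definedness of the entrywise logarithm: the factorization $\log(\alpha_i^{\,j-1})=(j-1)\log\alpha_i$ requires the $\alpha_i$ to be positive, which is also precisely what rules out the degenerate possibility $\rank(\log A)=0$. Making this positivity assumption explicit at the start ensures that both the rank-one factorization and the ``exactly rank one'' conclusion are unambiguous, after which the two halves combine directly into the claimed statement.
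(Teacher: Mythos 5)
Your proof takes essentially the same route as the paper's: the Vandermonde determinant $\det(A)=\prod_{1\le i<j\le n}(\alpha_j-\alpha_i)$ gives $\rank(A)=n$, and the entrywise identity $\log(\alpha_i^{\,j-1})=(j-1)\log\alpha_i$ exhibits $\log(A)$ as an outer product, giving rank one. Your additional care --- stating the positivity assumption on the $\alpha_i$ explicitly and checking that neither factor of the outer product vanishes (so the rank is exactly $1$ rather than $0$) --- is a minor refinement that the paper's proof omits; it strengthens rather than changes the argument.
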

\begin{proof}
By definition of $A$, we have,
\begin{align*}
\begin{bmatrix}
1 & \alpha_1 & \alpha_1^2 & \cdots & \alpha_1^{n-1} \\
1 & \alpha_2 & \alpha_2^2 & \cdots & \alpha_2^{n-1} \\
1 & \alpha_3 & \alpha_3^2 & \cdots & \alpha_3^{n-1} \\
\vdots & \vdots & \vdots & \ddots & \vdots \\
1 & \alpha_n & \alpha_n^2 & \cdots & \alpha_n^{n-1} \\
\end{bmatrix}
\end{align*}
Note that, we can compute the determinant of matrix $A$,
\begin{align*}
\det(A) = \prod_{1 \leq i < j \leq n} (\alpha_j - \alpha_i).
\end{align*}
Since $\alpha_j \neq \alpha_i, \forall j \neq i$, thus $\det(A) \neq 0$ which implies $\rank(A) = n$.

By definition of $\log (A)$, we have,
\begin{align*}
\log (A) = & ~ 
\begin{bmatrix}
0 & \log(\alpha_1) & 2\log(\alpha_1) & \cdots & (n-1)\log(\alpha_1) \\
0 & \log(\alpha_2) & 2\log(\alpha_2) & \cdots & (n-1)\log(\alpha_2) \\
0 & \log(\alpha_3) & 2\log(\alpha_3) & \cdots & (n-1)\log(\alpha_3) \\
\vdots & \vdots & \vdots & \ddots & \vdots \\
0 & \log(\alpha_n) & 2\log(\alpha_n) & \cdots & (n-1)\log(\alpha_n) \\
\end{bmatrix} \\
= & ~ 
\begin{bmatrix}
\log(\alpha_1) \\
\log(\alpha_2) \\
\log(\alpha_3) \\
\vdots \\
\log(\alpha_n) \\
\end{bmatrix}
\cdot
\begin{bmatrix}
0 & 1 & 2 & \cdots & (n-1)
\end{bmatrix}.
\end{align*}
Therefore $\rank(\log (A)) = 1$.
\end{proof}

\subsection{$\rank(A) \ll \rank(\log A)$}
In this section, we provide a matrix $A \in \R^{n\times n}$ with $\rank$-$n/2$, however, the $\rank(\log A) = n$.
\begin{theorem}\label{thm:rankA_is_small_ranklogA_is_big}
There is a matrix $A\in \R^{n\times n}$ such that $\rank(A) = n/2$ and $\rank(\log(A)) = n$.
\end{theorem}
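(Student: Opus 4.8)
The plan is to exhibit a positive matrix $A\in\R^{n\times n}$ (so that the entrywise $\log$ is well defined, exactly as in the Vandermonde example of Theorem~\ref{thm:rankA_is_big_ranklogA_is_small}) that has rank $n/2$ but whose entrywise logarithm has full rank $n$. The key idea is a \emph{tropical / large-parameter limit}: if $A_\lambda$ is built from only a few positive rank-one terms but with exponentially separated scales, then $\tfrac1\lambda\log A_\lambda$ converges to a max-plus combination, and such a limit can be an ordinary full-rank matrix even though $A_\lambda$ stays low rank. So I would first produce a matrix of rank at most $2$ whose $\log$ is already full rank, and then boost the rank of $A$ up to exactly $n/2$ by a harmless perturbation.

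For the base construction I would take, for a large parameter $\lambda>0$,
\[
A^{(0)}_{ij} \;=\; 1 + e^{\lambda\,(j-i+\frac12)} \;=\; \mathbf 1\mathbf 1^\top + u v^\top,\qquad u_i=e^{-\lambda i},\ v_j=e^{\lambda(j+\frac12)},
\]
so that $A^{(0)}>0$ entrywise and $\rank(A^{(0)})=2$ (the vectors $\mathbf 1$ and $u$ are independent). Writing $q_{ij}=j-i+\tfrac12$, the elementary bound
\[
\max(0,q_{ij}) \;\le\; \tfrac1\lambda\log\bigl(1+e^{\lambda q_{ij}}\bigr) \;\le\; \max(0,q_{ij})+\tfrac{\log 2}{\lambda}
\]
shows that $\tfrac1\lambda\log A^{(0)}\to P$ uniformly, where $P_{ij}=\max(0,\,j-i+\tfrac12)$. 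Since $q_{ij}$ is a half-integer it never vanishes, so $P_{ij}=0$ for $j<i$ and $P_{ij}=j-i+\tfrac12>0$ for $j\ge i$; thus $P$ is upper triangular with diagonal entries $\tfrac12$ and $\det P=(1/2)^n\neq 0$. By continuity of the determinant, $\det\bigl(\tfrac1\lambda\log A^{(0)}\bigr)\to\det P\neq0$, hence $\det(\log A^{(0)})=\lambda^n\det\bigl(\tfrac1\lambda\log A^{(0)}\bigr)\neq0$ for all sufficiently large $\lambda$, i.e.\ $\rank(\log A^{(0)})=n$.

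It remains to raise $\rank(A)$ from $2$ to $n/2$. For $n\ge 4$ set $s=n/2-2\ge0$ and pick a rank-$s$ matrix $R=\sum_{l=1}^s g_l h_l^\top$ whose left (resp.\ right) factors, together with the two-dimensional column (resp.\ row) space of $A^{(0)}$, are linearly independent; then $\rank(A^{(0)}+\epsilon R)\le 2+s=n/2$ always, while some fixed $(n/2)\times(n/2)$ minor is a nonzero polynomial in $\epsilon$, so $\rank(A^{(0)}+\epsilon R)=n/2$ for all but finitely many $\epsilon$. Because $A^{(0)}>0$ and $\det(\log A^{(0)})\neq0$, both positivity of $A=A^{(0)}+\epsilon R$ and $\det(\log A)\neq0$ persist for all sufficiently small $\epsilon$ by continuity. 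Choosing a small $\epsilon>0$ outside the finite exceptional set gives $\rank(A)=n/2$ and $\rank(\log A)=n$ simultaneously. (The case $n=2$ is trivial: $A_{ij}=a_ib_j$ with $a_1\neq a_2$, $b_1\neq b_2$ has rank $1$ and, as the $2\times2$ determinant $(\log a_1-\log a_2)(\log b_2-\log b_1)\neq0$ shows, $\log$-rank $2$.)

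The main obstacle is conceptual rather than computational: it is the coexistence of small $\rank(A)$ with full $\rank(\log A)$, which I resolve through the tropical-limit observation that the logarithm of a ``sharply scaled'' rank-two positive matrix approximates a triangular matrix. Once that is in hand, the verification reduces to the uniform $\log$-sum-exp estimate and a determinant-continuity argument, and the rank boost is a routine general-position perturbation; I expect no serious difficulty there beyond bookkeeping of the three ``sufficiently small $\epsilon$'' conditions.
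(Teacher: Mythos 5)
Your proof is correct, but it takes a genuinely different route from the paper's. The paper's construction is a two-line block argument: take the $2\times 2$ rank-one matrix $B=\left[\begin{smallmatrix}1&2\\2&4\end{smallmatrix}\right]$, whose entrywise logarithm $\left[\begin{smallmatrix}0&\log 2\\ \log 2&2\log 2\end{smallmatrix}\right]$ is nonsingular, and place $n/2$ copies of $B$ on the diagonal of $A$; ranks then add across blocks, giving $\rank(A)=n/2$ and $\rank(\log A)=n$. (As an aside, the paper states the two ranks of $B$ in the reversed order --- evidently a typo --- and its block matrix has zero entries off the blocks, where the entrywise $\log$ is strictly undefined unless one adopts a convention such as $\log 0:=0$ or replaces $\log(\cdot)$ by $\log(|\cdot|+1)$.) Your tropical-limit construction is heavier: you build a strictly positive rank-$2$ matrix $\mathbf 1\mathbf 1^\top+uv^\top$ with exponentially separated scales, show via the log-sum-exp estimate that $\tfrac1\lambda\log A^{(0)}$ converges to an upper-triangular matrix with diagonal $\tfrac12$ (hence $\log A^{(0)}$ is nonsingular for large $\lambda$), and then do a general-position perturbation to raise the rank to exactly $n/2$. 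What your approach buys is twofold: every entry of your matrix is strictly positive, so the entrywise logarithm is genuinely well defined without any convention; and you establish the stronger separation that even a rank-$2$ positive matrix can have full log-rank, the perturbation serving only to meet the exact value $n/2$ demanded by the statement. What the paper's approach buys is brevity and transparency. Both arguments are sound; your perturbation step (rank $\le 2+s$ always, equality off a finite set of $\epsilon$ by a nonvanishing minor, and persistence of positivity and of $\det(\log A)\neq 0$ for small $\epsilon$) is complete as written.
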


\begin{proof}
Let $B$ denote a $2 \times 2$ matrix as follows
\begin{align*}
B = \begin{bmatrix}
1 & 2 \\
2 & 4
\end{bmatrix}
\end{align*}
It is not hard to see that $\rank(B) = 2$ and $\rank(\log (B))=1$. We define matrix $A$ by copying $B$ by $n/2$ times on $A$'s diagonal blocks, 
\begin{align*}
A = \begin{bmatrix}
B & 0 & 0 & \cdots & 0 \\
0 & B & 0 & \cdots & 0 \\
0 & 0 & B & \cdots & 0 \\
\vdots & \vdots & \vdots & \ddots & \vdots \\
0 & 0 & 0 & \cdots & B
\end{bmatrix}.
\end{align*}
Then we have $\rank(A) = n/2$ and $\rank(\log (A)) = n$.
\end{proof}

Due to the following fact, copying $\rank$-$1$ matrix several times won't give a better theorem~\ref{thm:rankA_is_small_ranklogA_is_big}.
\begin{fact}
For any $\rank$-$1$ matrix $A$, the $\rank(\log (A)) \leq 2$.
\end{fact}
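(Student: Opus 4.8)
The plan is to exploit the rank-1 structure of $A$ directly, turning the multiplicative structure into an additive one via the logarithm. Since $\rank(A)=1$, I would first write $A = uv^\top$ for some vectors $u,v\in\R^n$, so that $A_{i,j}=u_iv_j$. Assuming the entries are nonzero so that the entrywise logarithm is well defined (consistent with the convention used in the other examples of this section, where one works with $\log|A_{i,j}|$), the key observation is that the logarithm sends the product $u_iv_j$ to a sum: $\log(A)_{i,j}=\log|u_iv_j|=\log|u_i|+\log|v_j|$.

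The next step is to recognize that this additive form splits $\log(A)$ into two pieces, one depending only on the row index and one depending only on the column index. Concretely, letting $a\in\R^n$ be the vector with $a_i=\log|u_i|$ and $b\in\R^n$ be the vector with $b_j=\log|v_j|$, I would write
\begin{align*}
\log(A) = a\mathbf{1}^\top + \mathbf{1} b^\top,
\end{align*}
where $\mathbf{1}\in\R^n$ is the all-ones vector. Each of the two summands is an outer product and hence has rank at most $1$, so by subadditivity of rank one concludes $\rank(\log(A))\le 2$, completing the argument.

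There is essentially no serious obstacle here; the only point requiring a word of care is the well-definedness of the entrywise logarithm, which is handled by the standing assumption that the matrix entries are nonzero. I would note in passing that this fact is precisely the reason why the gap construction in Theorem~\ref{thm:rankA_is_small_ranklogA_is_big} must start from a rank-$2$ block $B$ rather than a rank-$1$ block: copying a rank-$1$ matrix along the diagonal cannot inflate the rank of $\log(A)$ beyond what this bound permits on each block.
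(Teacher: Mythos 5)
Your proof is correct and follows essentially the same route as the paper's: write the rank-$1$ matrix as an outer product, use $\log(u_iv_j)=\log|u_i|+\log|v_j|$ to express $\log(A)$ as $a\mathbf{1}^\top+\mathbf{1}b^\top$, and conclude $\rank(\log(A))\le 2$ by subadditivity of rank. Your treatment is in fact slightly more careful than the paper's (which has typos such as $\log(\beta_i)$ for $\log(\beta_j)$ and omits absolute values), but the underlying argument is identical.
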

\begin{proof}
Without loss of generality, let's assume $A$ can be written as
\begin{align*}
A = \alpha^\top \beta = \begin{bmatrix}
\alpha_1 \\
\alpha_2 \\
\vdots \\
\alpha_n
\end{bmatrix}
\cdot
\begin{bmatrix}
\beta_1 & \beta_2 & \cdots \beta_n
\end{bmatrix}
\end{align*}
Let $B$ denote $\log(A)$, then it is easy to that $B_{i,j} = \log(\alpha_i) + \log(\beta_i)$. Therefore matrix $B$ can be decomposed into the following case
\begin{align*}
B = \log (\alpha) \cdot {\bf 1} + {\bf 1}^\top \log (\beta)^\top
\end{align*}
Thus, $\rank(B) \leq 2$.
\end{proof}


\section{Application of $f$-Matrix Product Sketch in Linear Regression} \label{sec:more application}
In this section, we consider the application to linear regression. Linear regression is a fundamental problem in machine learning, and there is a long line of work using sketching/hashing idea to speed up the running time \cite{cw13,mm13,psw17,lhw17,alszz18,dssw18,swz19,cww19}.

\begin{algorithm}
\begin{algorithmic}[1]\caption{Linear regression by $f$-matrix product sketch}\label{alg:linear_regression}
\Procedure{\textsc{LinearRegression}}{$M,b,n,d,$} \Comment{Theorem~\ref{thm:linear_regression}}
	\State Implicitly form $A = \log M$ \Comment{$A\in \R^{n \times d}$}
	\State $s \leftarrow \poly(d/\epsilon)$
	\State Choosing a sketching matrix $S \in \R^{s \times n}$
	\State $\wt{SA} \leftarrow \textsc{SketchLog}(S,M)$
	\State $\wt{x} \leftarrow \min_{x \in \R^d } \| \wt{SA} x - S b \|_2$
	\State \Return $\wt{x}$
\EndProcedure
\end{algorithmic}
\end{algorithm}
Recall that for a matrix $M \in \R^{n\times d}$, we use $\log (M)$ to denote the $n \times d$ matrix where the entry at $i$-th row and $j$-th column of matrix $\log M$ is $\log (M_{i,j})$.

\begin{theorem}[Linear regression]\label{thm:linear_regression}
Given matrix $M \in \R^{n \times d}$ and vector $b \in \R^d$ where $n \gg d$. Let $A = \log M \in \R^{n\times d}$. 
There is an one-pass algorithm (Algorithm~\ref{alg:linear_regression}) that uses $\poly(d, \log n, 1/\epsilon)$ space, receives the update of $M$ in the stream, and outputs vector $\wt{x} \in \R^d$ such that 
\begin{align*}
\| A \wt{x} - b \|_2 \leq ( 1 + \epsilon ) \min_{x \in \R^d } \| A x - b \|_2 + \tau,
\end{align*}
holds with probability at least $9/10$ and where $\tau = \| b \|_2 / \poly(d/\epsilon)$.
\end{theorem}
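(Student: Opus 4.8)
The plan is to follow the classical ``sketch-and-solve'' template for overdetermined least squares, treating the inexactness of the sketched design matrix $\wt{SA}$ as an additive perturbation that I drive below $\tau$ by tuning the accuracy of the \textsc{LogSum} primitive. First I would fix $S\in\R^{s\times n}$ to be a CountSketch$+$Gaussian matrix (Definition~\ref{def:fast_gaussian_transform}) with $s=\poly(d/\epsilon)$ large enough that $S$ is a $(1\pm\epsilon)$-subspace embedding for the at-most-$(d+1)$-dimensional column space of $[A\ b]$ and also satisfies the approximate-matrix-product property, exactly as in Lemma~\ref{lem:lemma3b16}. Since $Ax-b$ lies in that subspace for every $x$, we get $(1-\epsilon)\|Ax-b\|_2\le\|S(Ax-b)\|_2\le(1+\epsilon)\|Ax-b\|_2$. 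If $SA$ could be formed exactly, the minimizer $x^\star=\arg\min_x\|SAx-Sb\|_2$ would already satisfy $\|Ax^\star-b\|_2\le\frac{1+\epsilon}{1-\epsilon}\OPT$ with $\OPT=\min_x\|Ax-b\|_2$; this is the standard reduction and I would invoke it directly.

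The point of the algorithm is that we never form $SA$: instead the $f$-matrix product sketch returns $\wt{SA}$ with a controllable \emph{entrywise} error, $|(\wt{SA})_{i,j}-(SA)_{i,j}|\le\epsilon'\,\|S_{i,*}\|_\infty\sum_{l}\log(|M_{l,j}|+1)$, by Theorem~\ref{thm:logsum}. Writing $E:=\wt{SA}-SA$ and using that $m=\poly(n)$ forces every $\log(|M_{l,j}|+1)=O(\log n)$, I would bound $\|E\|_F\le\epsilon'\cdot\poly(n,d)$. Letting $\tilde x$ be the output and $x_{\OPT}$ the true optimum, the lower bound of the embedding applied to $A\tilde x-b$ together with $SA=\wt{SA}-E$ and the optimality of $\tilde x$ for $\min_x\|\wt{SA}x-Sb\|_2$ yields the chain
\begin{align*}
\|A\tilde x-b\|_2\ \le\ \tfrac{1}{1-\epsilon}\big((1+\epsilon)\OPT+\|Ex_{\OPT}\|_2+\|E\tilde x\|_2\big).
\end{align*}
Everything now reduces to showing the two perturbation terms are at most $\tau=\|b\|_2/\poly(d/\epsilon)$.

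This last bound is the crux, and I expect it to be the main obstacle: $E$ is multiplied by solution vectors whose norm is not bounded a priori, so the argument must route through the conditioning of $\wt{SA}$. For $\tilde x$ I would use feasibility of $x=0$ to get $\|\wt{SA}\tilde x-Sb\|_2\le\|Sb\|_2$, hence $\|\tilde x\|_2\le 2\|Sb\|_2/\sigma_{\min}(\wt{SA})$; since the subspace embedding preserves singular values up to $(1\pm\epsilon)$ and $\|E\|_2$ is tiny, $\sigma_{\min}(\wt{SA})=\Theta(\sigma_{\min}(A))$, and similarly $\|x_{\OPT}\|_2=\|A^+b\|_2=O(\|b\|_2/\sigma_{\min}(A))$. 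Thus $\|Ex\|_2\le\|E\|_F\cdot O(\|b\|_2/\sigma_{\min}(A))$, and choosing $\epsilon'$ a sufficiently small inverse polynomial in $d,\log n,1/\epsilon$ (exploiting the polynomial boundedness of the stream to keep the $\poly(n)$ factors from $\|E\|_F$ under control) pushes both terms below $\tau$. The per-entry space $\epsilon'^{-2}\poly\log(n/\delta)$ over the $sd=\poly(d/\epsilon)$ coordinates of $\wt{SA}$ then totals $\poly(d,\log n,1/\epsilon)$, the one-pass property and the $\poly(d,\log n,1/\epsilon)$ query time follow from Theorem~\ref{thm:logsum}, and rescaling $\epsilon$ recovers the stated $(1+\epsilon)$ factor.
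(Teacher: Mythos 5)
Your proposal proves the statement by a genuinely different route than the paper. The paper exploits the closed forms of both sketched solutions, $x' = (SA)^\dagger Sb$ and $\wt{x} = (\wt{SA})^\dagger Sb$, writes $\| A\wt{x} - b \|_2 \leq \| Ax' - b\|_2 + \| A\bigl( (\wt{SA})^\dagger - (SA)^\dagger \bigr) Sb \|_2$, and controls the pseudoinverse difference with Wedin's perturbation inequality (Lemma~\ref{lem:w73_lemma}), combined with Lemma~\ref{lem:perturbation_lemma} (to relate $\| (\wt{SA})^\dagger \|_2$ to $\| (SA)^\dagger \|_2$) and the subspace-embedding property (to relate $\| (SA)^\dagger \|_2$ to $\| A^\dagger \|_2$), under the normalization $\|A\|_2=1$. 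You never touch pseudoinverses: you push the optimality of $\wt{x}$ for the perturbed objective through a subspace embedding of $[A \ b]$, pay the two perturbation terms $\| E x_{\OPT} \|_2$ and $\| E \wt{x} \|_2$, and bound them by controlling the solution norms via $\sigma_{\min}$ (feasibility of $x=0$ for $\wt{x}$, and $x_{\OPT} = A^\dagger b$). Your chain of inequalities is valid, is more elementary (no Wedin, no Latala), and is more transparent about exactly where the conditioning of $A$ enters; the paper's route instead concentrates all the perturbation analysis into one black-box operator-norm inequality.

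One caveat, which you partly anticipate and which applies equally to the paper's own proof: both arguments ultimately require the \textsc{LogSum} accuracy $\epsilon'$ to satisfy roughly $\epsilon' \cdot \max_j \| A_{*,j} \|_1 \cdot \kappa(A) \leq 1/\poly(d/\epsilon)$. In your write-up this is the requirement $\|E\|_F / \sigma_{\min}(A) \leq 1/\poly(d/\epsilon)$; in the paper it is the step $\| \wt{SA} - SA \|_2 \leq 1/\bigl( \kappa^2 \poly(d/\epsilon) \bigr)$, justified only by ``size of $S$'' and Lemma~\ref{lem:latala_lemma}. Your parenthetical claim that the polynomial boundedness of the stream keeps the $\poly(n)$ factors in $\|E\|_F$ under control is too optimistic: it only yields $\log(|M_{l,j}|+1) = O(\log n)$ per entry, and the sum over the $n$ rows in the guarantee of Theorem~\ref{thm:logsum} still contributes a factor of $n$, i.e.\ $\| A_{*,j} \|_1$ can be $\Theta(n \log n)$. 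Hence $\epsilon'$ must be taken inverse-polynomial in $n$ and in $\kappa(A)$, which inflates the per-entry space $\epsilon'^{-2} \poly(\log n)$ beyond the claimed $\poly(d, \log n, 1/\epsilon)$ unless one implicitly assumes these quantities are suitably bounded. Since the paper's proof buries exactly the same assumption, your proposal matches the paper's level of rigor on this point, but neither argument fully closes the accounting between the additive error $\tau$ and the space bound.
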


\begin{proof}
Without loss of generality, we assume that $\| A \|_2 = 1$ in the proof.

Let $x^* \in \R^d$ denote the optimal solution of this problem,
\begin{align*}
\min_{x\in \R^d} \| A x - b \|_2.
\end{align*}
Let $\OPT$ denote $\| A x^* - b \|_2$.
Let $x' \in \R^d$ denote the optimal solution of this problem,
\begin{align*}
\min_{x\in \R^d} \| S A x -  S b \|_2.
\end{align*}
By property of sketching matrix, we have
\begin{align*}
 \| A x' - b \|_2 \leq (1+\epsilon) \| A x - b \|_2.
\end{align*}
Note that $x' = (SA)^\dagger Sb$. Let $\wt{x} \in \R^d$ denote the optimal solution of
\begin{align*}
\min_{x \in \R^d} \| \wt{SA} x - S b \|_2.
\end{align*}
It means $\wt{x} = ( \wt{SA} )^\dagger Sb $. We have
\begin{align}\label{eq:bound_A_wtx_b}
 & ~ \| A \wt{x} - b \|_2 \notag \\
\leq & ~ \| A x' - b \|_2 + \| A \wt{x} - A x' \|_2 \notag \\
\leq & ~ (1+\epsilon)\OPT + \| A \wt{x} - A x' \|_2 \notag \\
= & ~  (1+\epsilon)\OPT + \underbrace{ \| A (\wt{SA})^\dagger Sb - A (SA)^\dagger Sb \|_2 }_{C_1} 
\end{align}
where the first step follows by triangle inequality, the second step follows by $\| Ax'-b \|_2 \leq (1+\epsilon) \OPT$, the third step follows by definition of $x'$ and $x^*$.

Now the question is how to bound the term $C_1$ in Eq.~\eqref{eq:bound_A_wtx_b}. We can upper bound $C$ in the following way,
\begin{align*} 
C_2 = & ~ \| A (\wt{SA})^\dagger Sb - A (SA)^\dagger Sb \|_2 \\
 \leq & ~ \| A \|_2 \| (\wt{SA})^\dagger - (SA)^\dagger \|_2 \| S b\|_2 \\
 = & ~  \| (\wt{SA})^\dagger - (SA)^\dagger \|_2  \| S b \|_2  \\
 \lesssim & ~  \underbrace{ \| (\wt{SA})^\dagger - (SA)^\dagger \|_2 }_{C_2} \| b \|_2 
\end{align*}
the third step follows by $\|A \|_2=1$, and the the last step follows by $\| S b \|_2 = O(1) \cdot \| b \|_2$.
Next, we show how to bound the term $C_2$ in the above equation, using Lemma~\ref{lem:w73_lemma}, \ref{lem:perturbation_lemma} and \ref{lem:latala_lemma}, we have
\begin{align*}
C_2 = & ~ \| (\wt{SA})^\dagger - (SA)^\dagger \|_2  \\
\lesssim & ~ \max ( \| (\wt{SA})^\dagger \|_2^2, \| (SA)^\dagger \|_2^2 ) \cdot \| \wt{SA} - SA \|_2 \\
\lesssim & ~ \| (SA)^\dagger \|_2^2 \cdot \| \wt{SA} - SA \|_2 \\
\lesssim & ~ \| A^\dagger \|_2^2 \cdot \| \wt{SA} - SA \|_2 \\
\leq & ~ \| A^\dagger \|_2^2 /  \kappa^2 \poly(d/\epsilon) \\
= & \| A \|_2^2 / \poly(d/\epsilon) \\
\lesssim & ~ 1/ \poly(d/\epsilon)
\end{align*}
where the second step follows by Lemma~\ref{lem:w73_lemma}, the third step follows by Lemma~\ref{lem:perturbation_lemma}, the fourth step follows by property of sketching matrix $S$, the fifth step follows by size of $S$ and Lemma~\ref{lem:latala_lemma} the last step follows by $\| A \|_2=1$.
\end{proof}

\section{Tools}
In this section, we introduce several basic perturbation results.

\cite{w73} presented a perturbation bound of Moore-Penrose inverse the spectral norm,
\begin{lemma}[\cite{w73}, Theorem 1.1 in \cite{mz10}]\label{lem:w73_lemma}
Given two matrices $A,B \in \R^{d_1 \times d_2}$ with full column rank, we have
\begin{align*}
\| A^\dagger - B^\dagger \|_2 \lesssim \max ( \| A^\dagger \|_2^2, \| B^\dagger \|_2^2 ) \cdot \| A - B \|_2.
\end{align*}
\end{lemma}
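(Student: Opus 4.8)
The plan is to derive an exact algebraic identity for $B^\dagger - A^\dagger$ in terms of the perturbation $E := B - A$, and then bound each piece in spectral norm. First I would insert the two decompositions $B^\dagger = B^\dagger A A^\dagger + B^\dagger(I - AA^\dagger)$ and $A^\dagger = B^\dagger B A^\dagger + (I - B^\dagger B)A^\dagger$ and subtract them, obtaining
\begin{align*}
B^\dagger - A^\dagger = -B^\dagger E A^\dagger + B^\dagger(I - AA^\dagger) - (I - B^\dagger B)A^\dagger.
\end{align*}
This identity holds for arbitrary $A,B$; the full-column-rank hypothesis enters next.

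Since $B$ has full column rank, $B^\dagger B = I_{d_2}$, so the third term vanishes identically. The remaining difficulty is the middle term $B^\dagger(I - AA^\dagger)$, which at first glance contains no factor of $E$ and so is not obviously controlled by $\|E\|_2$. The key step is to observe that $I - AA^\dagger$ is the orthogonal projection onto the orthogonal complement of the column space of $A$, so that $A^\top(I - AA^\dagger) = 0$ (equivalently $A^\top A A^\dagger = A^\top$, using that $AA^\dagger$ is symmetric). Writing $B^\top = A^\top + E^\top$ therefore gives $B^\top(I - AA^\dagger) = E^\top(I - AA^\dagger)$, and since full column rank yields $B^\dagger = (B^\top B)^{-1}B^\top$, we obtain $B^\dagger(I - AA^\dagger) = (B^\top B)^{-1} E^\top (I - AA^\dagger)$, which now carries the desired factor of $E$.

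With this rewriting the identity becomes $B^\dagger - A^\dagger = -B^\dagger E A^\dagger + (B^\top B)^{-1} E^\top(I - AA^\dagger)$, and I would finish by taking spectral norms and applying submultiplicativity. For the first term, $\|B^\dagger E A^\dagger\|_2 \le \|B^\dagger\|_2 \|A^\dagger\|_2 \|E\|_2$. For the second, $I - AA^\dagger$ is an orthogonal projection so $\|I - AA^\dagger\|_2 \le 1$, and $\|(B^\top B)^{-1}\|_2 = \|B^\dagger\|_2^2$ (since $\|B^\dagger\|_2 = 1/\sigma_{\min}(B)$), giving a bound of $\|B^\dagger\|_2^2 \|E\|_2$. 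Combining the two and using $\|A^\dagger\|_2 \|B^\dagger\|_2 \le \max(\|A^\dagger\|_2^2, \|B^\dagger\|_2^2)$ yields
\begin{align*}
\|A^\dagger - B^\dagger\|_2 \le 2\max(\|A^\dagger\|_2^2, \|B^\dagger\|_2^2)\,\|A - B\|_2,
\end{align*}
which is exactly the claimed $\lesssim$ bound with implicit constant $2$.

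I expect the main obstacle to be precisely the handling of the $E$-free middle term $B^\dagger(I - AA^\dagger)$: recognizing that the projection identity $A^\top(I - AA^\dagger)=0$ lets one trade the factor $B^\top$ for $E^\top$ is the crux of the whole argument, and it is also where full column rank is genuinely used (to write $B^\dagger = (B^\top B)^{-1}B^\top$ and to kill the $(I-B^\dagger B)A^\dagger$ term). Everything after that is routine bookkeeping with submultiplicativity and the fact that orthogonal projections have spectral norm at most one.
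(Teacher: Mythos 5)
Your proof is correct. Note first that the paper itself gives no proof of this lemma at all: it is stated in the ``Tools'' section purely as a citation to Wedin's 1973 perturbation theorem (via Theorem 1.1 of the cited survey) and then used as a black box in the linear-regression analysis. Your derivation is therefore a genuine addition rather than a parallel to anything in the paper. Checking it step by step: the three-term splitting
\begin{align*}
B^\dagger - A^\dagger = -B^\dagger E A^\dagger + B^\dagger(I - AA^\dagger) - (I - B^\dagger B)A^\dagger
\end{align*}
is an exact identity; the third term vanishes since full column rank of $B$ gives $B^\dagger B = I$; and your handling of the middle term is the right move --- the identity $A^\top(I-AA^\dagger)=0$ (which follows from $AA^\dagger A = A$ and symmetry of $AA^\dagger$, so it needs no rank assumption on $A$) lets you replace $B^\top$ by $E^\top$, after which $\|(B^\top B)^{-1}\|_2 = \|B^\dagger\|_2^2$ and $\|I-AA^\dagger\|_2\le 1$ close the bound with explicit constant $2$. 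This is essentially the classical Wedin decomposition argument, so your proof and the cited source are morally the same technique; what your version buys is that the result becomes self-contained with an explicit constant, and it even shows the hypothesis can be weakened (only $B$ needs full column rank), whereas the paper's citation buys brevity at the cost of opacity about where the $\max(\|A^\dagger\|_2^2,\|B^\dagger\|_2^2)$ factor comes from.
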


\begin{lemma}[Latala's theorem \cite{l05}, Theorem 5.37 in \cite{v10}]\label{lem:latala_lemma}
Let $A$ be a random $n\times d$ matrix whose entries $A_{i,j}, \forall (i,j) \in [n] \times [d]$ are independent centered random variables with finite fourth moment. Then
\begin{align*}
\E[ \| A \|_2 ] \lesssim \max_{i\in [n] } \left( \sum_{j=1}^d \E [ A_{i,j}^2 ] \right)^{1/2}  + \max_{j\in [d]} \left( \sum_{i=1}^n \E [ A_{i,j}^2 ] \right)^{1/2} + \left( \sum_{i=1}^n\sum_{j=1}^d \E [ A_{i,j}^4 ] \right)^{1/4} .
\end{align*}
\end{lemma}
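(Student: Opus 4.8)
This statement is Latala's theorem, a classical result in random matrix theory, so the paper's ``proof'' is simply to invoke the cited sources \cite{l05,v10}; no new argument is developed here. Nonetheless, if one wanted a self-contained derivation, the natural plan is the moment (trace) method. The starting point is the elementary bound $\|A\|_2^{2p} \le \Tr[(AA^\top)^p]$, valid for every positive integer $p$ since $\Tr[(AA^\top)^p] = \sum_i \sigma_i(A)^{2p} \ge \sigma_{\max}(A)^{2p}$. Combining this with Jensen's inequality gives $\E\|A\|_2 \le (\E\Tr[(AA^\top)^p])^{1/(2p)}$, and I would take $p$ of order $\log(n+d)$ so that the $2p$-th root only costs a polylogarithmic factor that can be absorbed into the implied constant. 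Everything then reduces to estimating $\E\Tr[(AA^\top)^p]$.

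Expanding the trace, $\E\Tr[(AA^\top)^p]$ becomes a sum over closed walks of length $2p$ alternating between the row index set $[n]$ and the column index set $[d]$, where each edge contributes a factor equal to an expectation of a product of entries of $A$. Because the entries are independent and centered, only walks in which every edge is traversed at least twice survive, since any edge traversed exactly once produces a factor $\E A_{ij}=0$. First I would apply a symmetrization step, inserting independent random signs, to reduce to symmetric entries and simplify the bookkeeping of the surviving moment factors.

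The core of the argument is then combinatorial: one organizes the surviving walks by the shape of the multigraph they trace out, and bounds the number of walks of each shape against the product of the surviving moment factors $\E A_{ij}^r$ with $r \ge 2$. The three terms in the conclusion arise as the dominant contributions of qualitatively different walk shapes --- walks concentrated along a single row yield the $\max_i (\sum_j \E A_{ij}^2)^{1/2}$ term, walks concentrated along a single column yield the $\max_j (\sum_i \E A_{ij}^2)^{1/2}$ term, and walks that isolate single edges traversed four times yield the $(\sum_{ij}\E A_{ij}^4)^{1/4}$ term. Taking the $2p$-th root with $p \asymp \log(n+d)$ then produces the stated bound.

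The hard part is precisely this combinatorial counting: matching the number of walks of each shape to the correct power of the three moment quantities so that, after the $2p$-th root, the three terms emerge cleanly without spurious dimension factors. This delicate estimate is exactly what is carried out in \cite{l05}, which is why in practice I would cite the result rather than reprove it.
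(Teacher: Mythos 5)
The paper itself gives no proof of this lemma: it is imported as a classical result, cited to \cite{l05} and to Theorem 5.37 of \cite{v10}, which is exactly what you do, so your treatment matches the paper's. Your supplementary trace-method sketch is also a faithful outline of how Latala's theorem is actually proved, with one minor imprecision worth flagging: a genuinely polylogarithmic multiplicative factor could \emph{not} be absorbed into the absolute constant implicit in $\lesssim$; the point of choosing $p \asymp \log(n+d)$ is rather that dimension-dependent counting factors of the form $(nd)^{O(1)}$ become $O(1)$ after taking the $2p$-th root.
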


\begin{lemma}\label{lem:perturbation_lemma}
Let $B = A + E$, if $|E_{i,j}| \leq \epsilon'$, where $\epsilon' = \epsilon / (d_1 d_2 \kappa(A) 10 )$, then 
\begin{align*}
(1-\epsilon) \| A^\dagger \| \leq \| B^\dagger \| \leq (1+\epsilon) \| A^\dagger \|.
\end{align*}
\end{lemma}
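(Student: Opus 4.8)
The plan is to reduce the statement about pseudoinverse norms to a statement about the smallest singular values of $A$ and $B$, and then control the latter by a standard singular-value perturbation bound. The underlying observation is that for a matrix of full column rank one has $\|A^\dagger\|_2 = 1/\sigma_{\min}(A)$, where $\sigma_{\min}$ denotes the smallest nonzero singular value, and the same identity holds for $B$ provided the perturbation does not drop the rank. So it suffices to show that $\sigma_{\min}(B)$ and $\sigma_{\min}(A)$ are within a $(1\pm\epsilon)$ factor of each other.

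First I would convert the entrywise bound on $E$ into a spectral-norm bound. Since $|E_{i,j}| \le \epsilon'$ for all $i,j$, the Frobenius norm satisfies $\|E\|_F \le \sqrt{d_1 d_2}\,\epsilon'$, hence $\|E\|_2 \le \|E\|_F \le \sqrt{d_1 d_2}\,\epsilon' \le d_1 d_2\, \epsilon'$. Substituting $\epsilon' = \epsilon/(10\, d_1 d_2\, \kappa(A))$ collapses the $d_1 d_2$ factors and gives $\|E\|_2 \le \epsilon/(10\,\kappa(A))$. Recalling that $\kappa(A) = \|A\|_2\,\|A^\dagger\|_2 = \sigma_{\max}(A)/\sigma_{\min}(A)$, and using the normalization $\|A\|_2 = \sigma_{\max}(A) = 1$ under which this lemma is invoked (see the proof of Theorem~\ref{thm:linear_regression}), this rearranges to $\|E\|_2 \le \tfrac{\epsilon}{10}\,\sigma_{\min}(A)$, i.e.\ the perturbation is a small multiple of the smallest singular value.

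Second I would invoke Weyl's inequality for singular values: for $B = A + E$ one has $|\sigma_{\min}(B) - \sigma_{\min}(A)| \le \|E\|_2$. Combined with the bound above this yields $(1-\tfrac{\epsilon}{10})\,\sigma_{\min}(A) \le \sigma_{\min}(B) \le (1+\tfrac{\epsilon}{10})\,\sigma_{\min}(A)$. Since $\|E\|_2 < \sigma_{\min}(A)$, the matrix $B$ retains full column rank, so $\|B^\dagger\|_2 = 1/\sigma_{\min}(B)$ is well defined and taking reciprocals is legitimate. Using $1/(1-\tfrac{\epsilon}{10}) \le 1+\epsilon$ and $1/(1+\tfrac{\epsilon}{10}) \ge 1-\epsilon$ for $\epsilon \in (0,1)$ then converts the singular-value sandwich into $(1-\epsilon)\,\|A^\dagger\|_2 \le \|B^\dagger\|_2 \le (1+\epsilon)\,\|A^\dagger\|_2$, which is exactly the claim.

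The step I expect to require the most care is the passage from $\|A^\dagger\|$ to $1/\sigma_{\min}$ and its validity for $B$: this identity presupposes that $A$, and hence $B$, have full column rank so that the pseudoinverse acts as a left inverse with norm governed by the smallest singular value. Verifying that the rank is preserved is precisely where $\|E\|_2 < \sigma_{\min}(A)$ is used. A secondary point worth making explicit is the implicit normalization $\|A\|_2 = 1$ hidden in how $\kappa(A)$ absorbs the $d_1 d_2$ factors; without it the clean cancellation producing $\|E\|_2 \le \tfrac{\epsilon}{10}\,\sigma_{\min}(A)$ instead reads $\|E\|_2 \le \tfrac{\epsilon}{10}\,\sigma_{\min}(A)/\|A\|_2$, and one should note that the unit-norm regime is exactly the one in which the lemma is applied.
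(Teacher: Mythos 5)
Your proof is correct, but it takes a genuinely different route from the paper's. The paper never invokes Weyl's inequality; instead it expands the Gram matrix $BB^\top = AA^\top + EA^\top + AE^\top + EE^\top$, bounds the perturbation terms by $3\|E\|_F\|A\|_2 \le \tfrac{3}{10}\epsilon\,\sigma_{\min}(A)$ (note that here the factor $\|A\|_2=\sigma_1(A)$ coming from $\|EA^\top\|_2\le\|E\|_F\|A\|_2$ cancels the $\kappa(A)$ in $\epsilon'$ via the identity $\sigma_1(A)/\kappa(A)=\sigma_{\min}(A)$, so this step needs no normalization), derives the Loewner sandwich $AA^\top - \tfrac{\epsilon}{3}\sigma_{\min}(A) I \preceq BB^\top \preceq AA^\top + \tfrac{\epsilon}{3}\sigma_{\min}(A) I$, and then reads off a bound on $\sigma_{\min}(B)$; it also only writes out the upper bound on $\|B^\dagger\|$. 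Your singular-value argument buys two things: it treats both directions symmetrically, and it avoids a dimensional inconsistency in the paper's last step --- the Loewner bound controls $\lambda_{\min}(BB^\top)=\sigma_{\min}(B)^2$ only up to an additive term that is \emph{linear} in $\sigma_{\min}(A)$, yet the paper concludes $\sigma_{\min}(B)\ge(1-\epsilon/3)\sigma_{\min}(A)$ as if the perturbation were of order $\sigma_{\min}(A)^2$, an inference that is valid only when $\sigma_{\min}(A)=\Omega(1)$. The price you pay is the one you correctly flag: your reduction $\|E\|_2\le \epsilon/(10\kappa(A)) \le \tfrac{\epsilon}{10}\sigma_{\min}(A)$ needs $\|A\|_2\ge 1$, which holds under the normalization $\|A\|_2=1$ assumed where the paper applies Lemma~\ref{lem:perturbation_lemma} (the proof of Theorem~\ref{thm:linear_regression}), whereas the paper's Gram-matrix step gets the compensating $\sigma_1(A)$ factor for free. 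Net, your argument is shorter and, under the stated normalization, more rigorous than the paper's.
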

\begin{proof}
Given the definition of $B$, we can rewrite $B B^\top$ into four terms,
\begin{align*}
BB^\top = (A+E) (A+E)^\top = AA^\top + E A^\top + A E^\top + E E^\top .
\end{align*}
Since we can bound
\begin{align*}
 \| E A^\top + A E^\top + E E^\top \|_2
\leq & ~ 3 \| E \|_F \| A \|_2 \\
\leq & ~ 3 \epsilon' d_1 d_2 \sigma_1(A) \\
\leq & ~ \epsilon \sigma_{\min}(A) /3
\end{align*}
Thus,
\begin{align*}
AA^\top - \epsilon \sigma_{\min}(A)/3\cdot I  \preceq BB^\top \preceq AA^\top + \epsilon \sigma_{\min}(A)/3 \cdot I 
\end{align*}
Thus,
\begin{align*}
\| B^\dagger \| = & ~ \frac{1}{\sigma_{\min}(B)} \\
\leq & ~ \frac{1}{\sigma_{\min}(A) - \sigma_{\min}(A) \epsilon /3 } \\
\leq & ~ (1+\epsilon) \frac{1}{\sigma_{\min}(A)} \\
= & ~ (1+\epsilon) \| A^\dagger \|_2.
\end{align*}
This completes the proof.
\end{proof}

\begin{theorem}[Generalized rank-constrained matrix approximations, Theorem 2 in \cite{ft07}]\label{thm:theorem_2_ft07}
Given matrices $A\in \R^{n\times d}$, $B\in \R^{n\times p}$, and $C\in \R^{q\times d}$, let the SVD of $B$ be $B=U_B \Sigma_B V_B^\top$ and the SVD of $C$ be $C= U_C \Sigma_C V_C^\top$. Then, 
\begin{align*}
B^\dagger ( U_B U_B^\top A V_C V_C^\top )_k C^\dagger = \underset{\rank-k~ X\in \R^{p\times q}}{\arg\min} \| A - B X C \|_F
\end{align*}
where $(U_B U_B^\top A V_C V_C^\top)_k \in \R^{p \times q}$ is of rank at most $k$ and denotes the best rank-$k$ approximation to $U_B U_B^\top A V_C V_C^\top \in \R^{n \times d}$ in Frobenius norm.
\end{theorem}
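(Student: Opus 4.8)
The plan is to reduce the constrained problem $\min_{\rank\text{-}k\,X}\|A-BXC\|_F$ to an unconstrained Eckart--Young--Mirsky truncation, after peeling off the parts of $A$ that no choice of $X$ can affect. Throughout I would work with the compact SVDs $B=U_B\Sigma_B V_B^\top$ and $C=U_C\Sigma_C V_C^\top$, so that $U_B,V_B,U_C,V_C$ have orthonormal columns and $\Sigma_B,\Sigma_C$ are square and invertible; then $B^\dagger=V_B\Sigma_B^{-1}U_B^\top$ and $C^\dagger=V_C\Sigma_C^{-1}U_C^\top$. Let $P_B=U_BU_B^\top$ and $P_C=V_CV_C^\top$ be the orthogonal projectors onto the column space of $B$ and the row space of $C$. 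The key structural fact is that $P_B(BXC)P_C=BXC$ for every $X$, because $U_BU_B^\top B=B$ and $CV_CV_C^\top=C$.

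First I would split $A-BXC$ along these projectors. Writing $I-P_B$ and $I-P_C$ for the complementary projectors, I decompose it into the four blocks $P_B(\cdot)P_C$, $P_B(\cdot)(I-P_C)$, $(I-P_B)(\cdot)P_C$, $(I-P_B)(\cdot)(I-P_C)$, where only the first contains the term $BXC$. These four blocks are pairwise orthogonal in the Frobenius inner product, since every cross term is a trace of a product containing a factor $P_C(I-P_C)=0$ or $(I-P_B)P_B=0$. Hence Theorem~\ref{thm:pythagorean} gives $\|A-BXC\|_F^2=\|P_BAP_C-BXC\|_F^2+\|P_BA(I-P_C)\|_F^2+\|(I-P_B)AP_C\|_F^2+\|(I-P_B)A(I-P_C)\|_F^2$, and only the first term depends on $X$, so the problem collapses to minimizing $\|P_BAP_C-BXC\|_F$.

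Next I would factor the orthonormal bases out of the residual. Since $P_BAP_C-BXC=U_B\big(U_B^\top A V_C-\Sigma_B V_B^\top X U_C\Sigma_C\big)V_C^\top$ and $U_B,V_C$ have orthonormal columns, its Frobenius norm equals $\|\,\wh A-\Sigma_B V_B^\top X U_C\Sigma_C\,\|_F$ with $\wh A:=U_B^\top A V_C$. Introducing $Y:=\Sigma_B V_B^\top X U_C\Sigma_C$, I would verify the rank correspondence: because $\Sigma_B,\Sigma_C$ are invertible and $V_B^\top,U_C$ are surjective onto their coordinate spaces, $Y$ ranges over all matrices of size $r_B\times r_C$ as $X$ does, with $\rank(Y)\le\rank(X)$, while $X=V_B\Sigma_B^{-1}Y\Sigma_C^{-1}U_C^\top$ realizes any target $Y$ with $\rank(X)=\rank(Y)$. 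Thus minimizing over $\rank$-$k$ matrices $X$ is equivalent to minimizing $\|\wh A-Y\|_F$ over $\rank$-$k$ matrices $Y$, whose optimum is $Y^\star=\wh A_k=(U_B^\top A V_C)_k$ by Eckart--Young--Mirsky.

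Finally I would back-substitute $X^\star=V_B\Sigma_B^{-1}(U_B^\top A V_C)_k\Sigma_C^{-1}U_C^\top$ and match it to the claimed formula. Since $U_B,V_C$ have orthonormal columns, a compact SVD of $U_BU_B^\top A V_CV_C^\top=U_B\wh A V_C^\top$ is obtained by prepending $U_B$ and $V_C$ to an SVD of $\wh A$, so its best rank-$k$ approximation is $(U_BU_B^\top A V_CV_C^\top)_k=U_B(U_B^\top A V_C)_k V_C^\top$; multiplying on the left by $B^\dagger$ and on the right by $C^\dagger$ and using $U_B^\top U_B=I$, $V_C^\top V_C=I$ recovers exactly $X^\star$. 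I expect the main obstacle to be the rank-correspondence bookkeeping in the change of variables, namely ensuring the $\rank$-$k$ feasible set for $X$ maps (at the level of attainable objective values) onto the $\rank$-$k$ feasible set for $Y$, including the degenerate cases where $B$ or $C$ is rank-deficient so that $p,q$ exceed $r_B,r_C$; everything else is orthogonality bookkeeping plus a single invocation of Eckart--Young--Mirsky.
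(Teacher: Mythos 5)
Your proof is correct. There is nothing in the paper to compare it against: the paper states this result as Theorem 2 of \cite{ft07} (Friedland--Torokhti) and uses it as a black box, giving no proof of its own. Your argument is a complete and self-contained derivation, and it is essentially the standard one for this result: (i) the four-way orthogonal decomposition via the projectors $P_B=U_BU_B^\top$ and $P_C=V_CV_C^\top$, which isolates the only block $P_BAP_C-BXC$ that depends on $X$; (ii) the unitary-invariance step reducing the objective to $\|U_B^\top A V_C-Y\|_F$ under the change of variables $Y=\Sigma_BV_B^\top XU_C\Sigma_C$; (iii) the verification that $X\mapsto Y$ maps the rank-$\le k$ feasible set onto the rank-$\le k$ feasible set (this is exactly where the compact SVD matters, since it makes $\Sigma_B,\Sigma_C$ invertible); and (iv) Eckart--Young--Mirsky plus the identity $(U_BU_B^\top AV_CV_C^\top)_k=U_B(U_B^\top AV_C)_kV_C^\top$ to match the claimed formula after multiplying by $B^\dagger$ and $C^\dagger$. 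Two fine points you handled well and are worth keeping explicit: the theorem is only correct under the compact-SVD reading (with a full SVD, $U_BU_B^\top$ would be the identity rather than the range projector, and the statement would fail for rank-deficient $B$ or $C$), and the conclusion should be read as asserting that the displayed matrix \emph{is a} minimizer, since the minimizer need not be unique (both because the rank-$k$ truncation can be non-unique at ties of singular values and because $X$ can be perturbed in the null spaces of $V_B^\top(\cdot)U_C$ without changing $BXC$, though not necessarily without changing $\rank(X)$).
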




\section{Complete Experimental Results} \label{sec:complete_exp}

To demonstrate the advantage of our proposed method, we complement the theoretical analysis with empirical study on synthetic and real data. We consider the low rank approximation task with $f(x)=\log(x)$ and $f(x)=\sqrt{x}$, vary the amount of space used by our method, and compare the errors of the solutions obtained to the optimum. The we provide additional experiments testing some other aspects of the method such as robustness to the parameter values.

\paragraph{Setup}
Given a data stream in the form of $(i_t,j_t,\delta_t)$, we use the algorithm in Section~\ref{sec:lowrank} to compute the top $k=10$ singular vectors $L$, and then compare the error of this solution to the error of the optimal solution (i.e., the true top $k$ singular vectors). 
Let $A$ denote the accumulated matrix, $M=f(A)$ denote the transformed one, and $U$ denote the top $k$ singular vectors of $M$. 
Then the evaluation criterion is 
\[
  \text{error-ratio}(L) = \frac{\|M - LL^\top M\|_F}{\|M-UU^\top M\|_F}.
\]
Clearly, the error ratio is at least $1$, and a value close to $1$ demonstrates that our solution is nearly optimal.  

Besides demonstrating the effectiveness, we also exam the tradeoff between the solution quality and the space used. Recall that there is a parameter in the the sketching methods controlling the amount of space used (line 20 in \textsc{LogSum} and line 6 in \textsc{PolySum}). We vary its value, and set the parameters in other steps of our algorithm so that the amount of space used is dominated by that of the sketching. We then plot how the error ratios change with the amount of space used.
The plotted results are the average of 5 runs; the variances are too small to plot. 

Finally, we also report the results of a baseline method: uniformly at random sample a subset $T$ of columns from $A$, and then compute the top $k$ singular vectors of $f(T)$. The space occupied by the columns sampled is similar to the space required by our algorithm for fair comparison. Since our algorithm is randomized, the expected amount of space occupied is used to determine the sample size of the baseline, and is also used for the plots. 
In the experiments, the actual amount occupied is within about $10\%$ of the expected value.

\paragraph{Implementation and Parameter Setting.}
In our algorithm for low rank approximation, an FJLT matrix $S$ is used~\cite{a03,ac06}. 
In the step of adaptive sampling, instead of setting the threshold $\eta$, for simplicity we let $q_i = \max\{\tilde s_i, 0\}$ and set $p_i = q_i + \sum_i q_i / n$. 

For the sketching subroutine, instead of specifying the desired $\epsilon$, we directly set the size of the data structure, so as to exam the tradeoff between space and accuracy. 
Then we set $s = d_1 = d_2$ and set their value so that the space used in the corresponding step is at most that used by the sketch method. In particular, we set them equal to the size upper bounds in line 20 in \textsc{LogSum} or line 6 in \textsc{PolySum}.

\subsection{Synthetic Data}

\paragraph{Data Generation.}

The following data sets are generated. Note that although we don't provide theoretical analysis for $f(x) = \sqrt{x}$, one could follow that for $f(x) = \log(x)$ to get similar guarantees, and we also generate synthetic data to test our method in this case. 
\begin{enumerate}
	\item \LOGDATA: This is for the experiments with $f(x)=\log(x)$. First generate a matrix $M$ of $n\times n$ where the entries are i.i.d.\ Gaussians. To break the symmetry of the columns, scale the length of the $i$-th column to $4/i$. Finally, generate matrix $A$ with $A_{ij}=\exp(M_{ij})$. Each entry $A_{ij}$ is divided into equally into $5$ updates $(i,j,A_{ij}/5)$, and all the updates arrive in a random order. The size $n$ can be $10000$, $30000$, and $50000$. 
	\item \SQDATA: This is for the experiments with $f(x) = \sqrt{x}$. The data and update stream are generated similarly as \LOGDATA, except that $A_{ij}=M^2_{ij}$.
	We tested on sizes $n=10000$ and $n=30000$.
\end{enumerate}

\paragraph{Results.}
Figure~\ref{fig:app_syn_log} shows the results on the synthetic data \LOGDATA, and Figure~\ref{fig:app_syn_sqrt} shows those on \SQDATA. 
In general, the error ratio of our method is much better than that of the uniform sampling baseline: ours is close to 1 while that of uniform sampling is about 4. 
It also shows that our method can greatly reduces the amount of space needed by orders while merely comprising the solution quality, and this advantage is more significant on larger data sets.  For example, when $n=50000$, using space about $5\%$ of the matrix size leads to only about $5\%$ extra error over the optimum. Finally, we note that these observations are consistent on both $f(x)=\log(x)$ and $f(x)=\sqrt{x}$.

\begin{figure*}[!t]
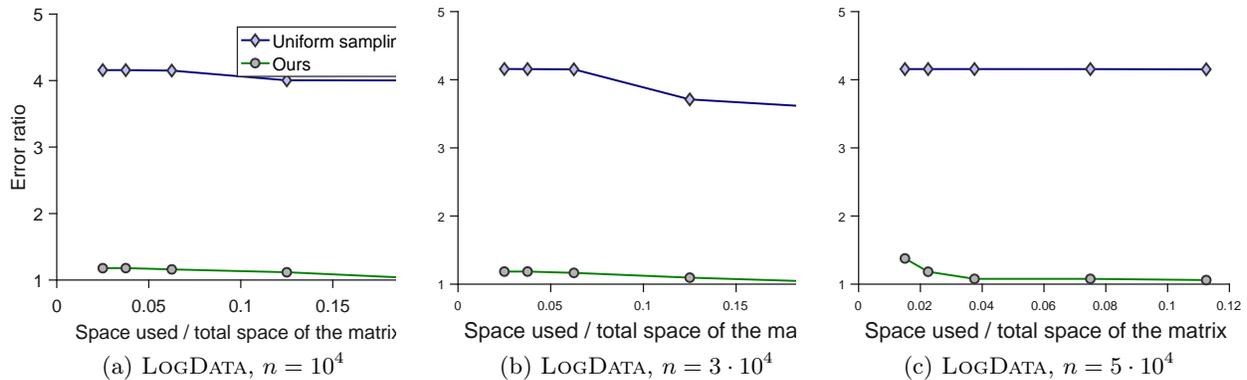

\centering
\subfloat[\LOGDATA, $n=10^4$]{\includegraphics[height=0.27\linewidth]{fig/syn_skelog_ds10000_pre.eps}}
\subfloat[\LOGDATA, $n=3\cdot 10^4$]{\includegraphics[height=0.27\linewidth]{fig/syn_skelog_ds30000_pre.eps}}
\subfloat[\LOGDATA, $n=5\cdot 10^4$]{\includegraphics[height=0.27\linewidth]{fig/syn_skelog_ds50000_pre.eps}}
\caption{Error ratios on the synthetic data \LOGDATA. The $x$-axis is the ratio between the amount of space used by the algorithms and the total amount of space occupied by the data matrix. The $y$-axis is the ratio between the error of the solutions output by the algorithms and the optimal error.}
\label{fig:app_syn_log}
\end{figure*}

\begin{figure*}[!t]
\centering
\subfloat[\SQDATA, $n=10^4$]{\includegraphics[width=0.32\linewidth]{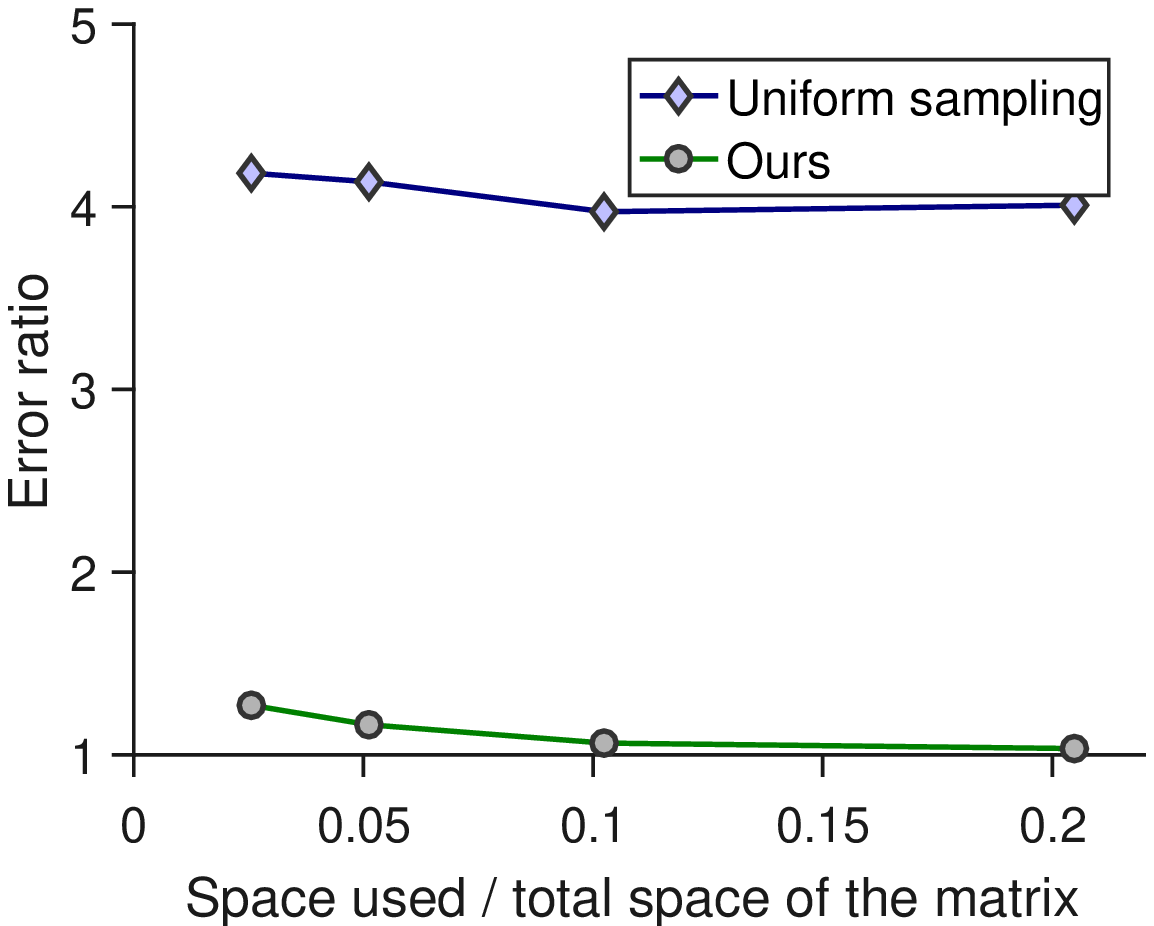}}
\subfloat[\SQDATA, $n=3\cdot 10^4$]{\includegraphics[width=0.32\linewidth]{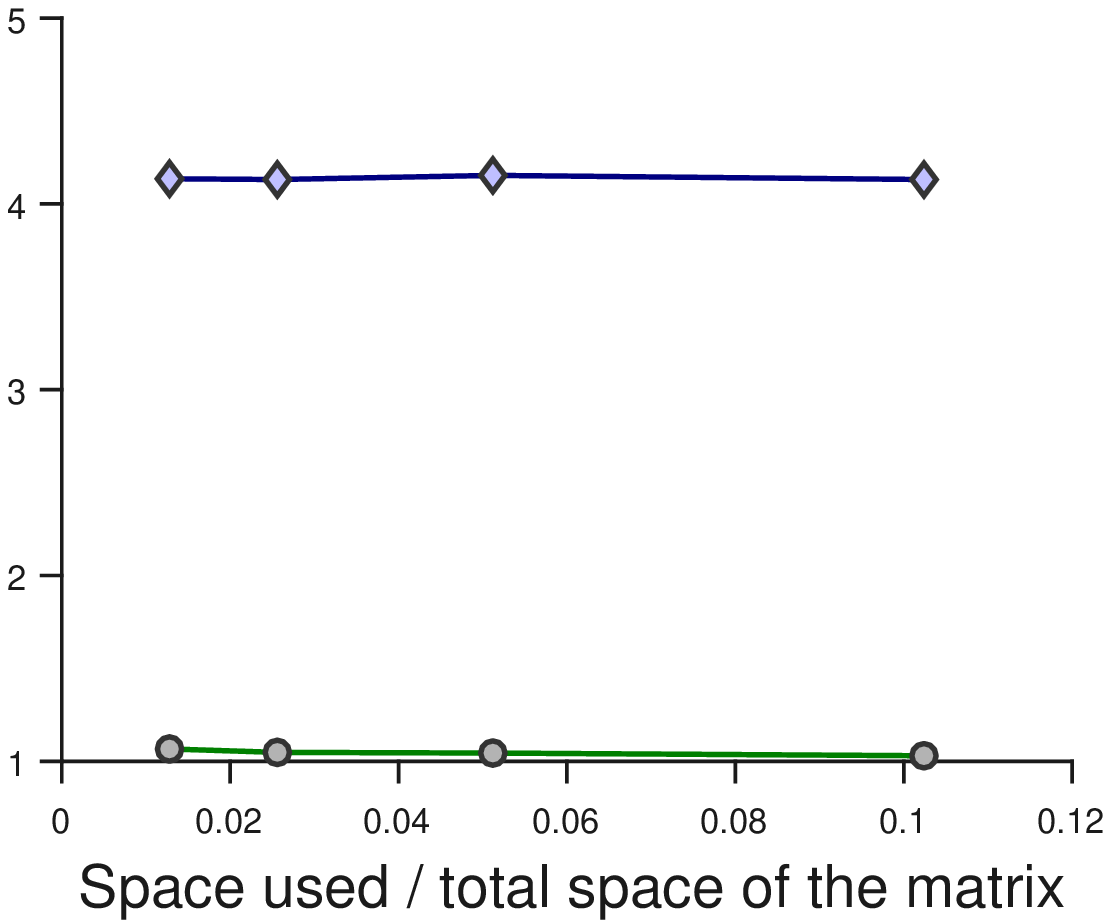}}
\subfloat[\SQDATA, $n=5\cdot 10^4$]{\includegraphics[width=0.32\linewidth]{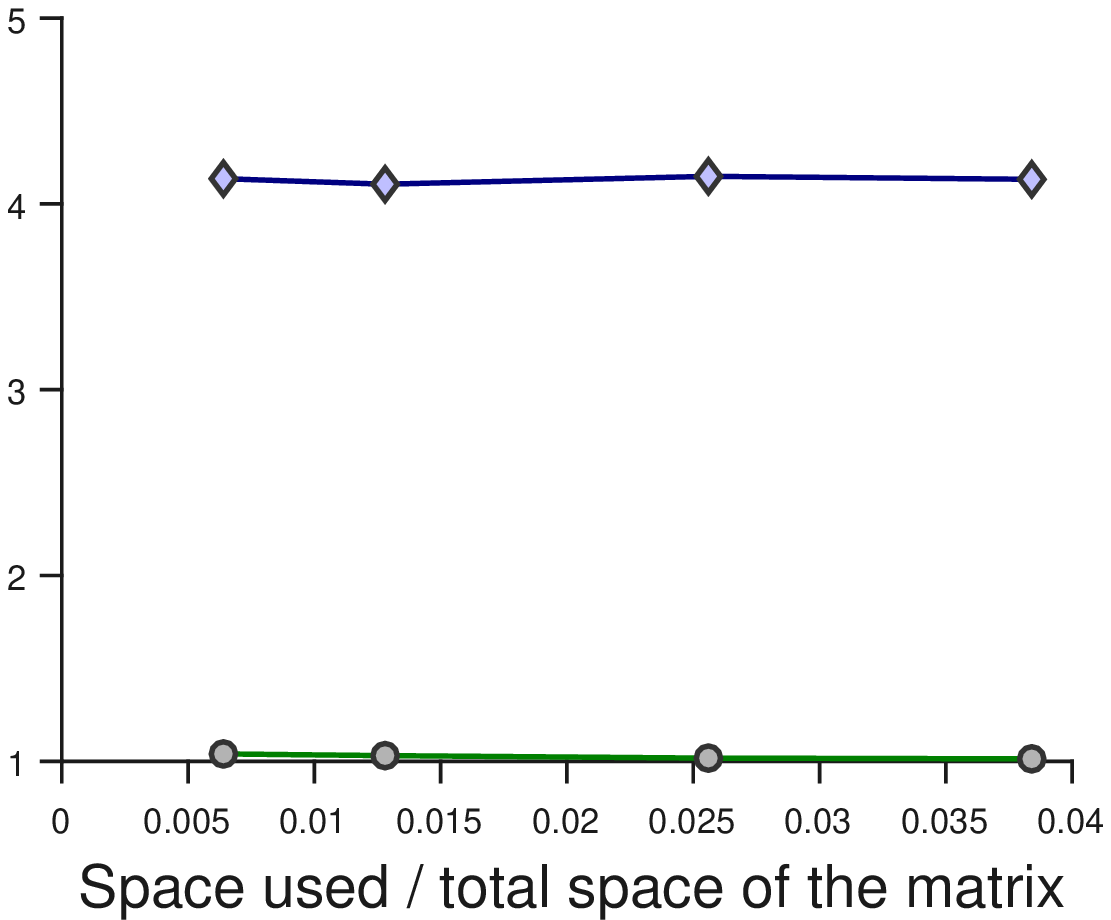}}
\caption{Error ratios on the synthetic data \SQDATA. The $x$-axis is the ratio between the amount of space used by the algorithms and the total amount of space occupied by the data matrix. The $y$-axis is the ratio between the error of the solutions output by the algorithms and the optimal error. }
\label{fig:app_syn_sqrt}
\end{figure*}

\begin{figure*}[!t]
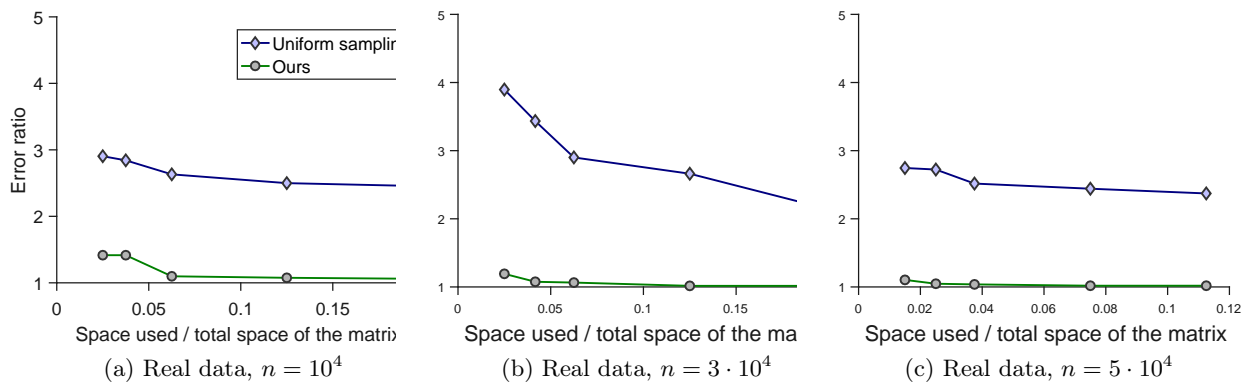

\centering
\subfloat[Real data, $n=10^4$]{\includegraphics[height=0.27\linewidth]{fig/real_skelog_ds10000_pre.eps}}
\subfloat[Real data, $n=3\cdot10^4$]{\includegraphics[height=0.27\linewidth]{fig/real_skelog_ds30000_pre.eps}}
\subfloat[Real data, $n=5\cdot10^4$]{\includegraphics[height=0.27\linewidth]{fig/real_skelog_ds50000_pre.eps}}
\caption{Error ratios on the real data (Wikipedia). The $x$-axis is the ratio between the amount of space used by the algorithms and the total amount of space occupied by the data matrix. The $y$-axis is the ratio between the error of the solutions output by the algorithms and the optimal error. } 
\label{fig:app_real}
\end{figure*}


\subsection{Real Data}
We exam our method on the real world data from the NLP application word embedding, which is a motivating example for proposing our approach.
Our method with $f(x)=\log(x+1)$ is used. The parameters are set in a similar way as for the synthetic data.

\paragraph{Data Collection.}
The data set is the entire Wikipedia corpus~\cite{enwiki}
consisting of about 3 billion tokens. Details can be found in the appendix and only a brief description is provided here. The matrix to be factorized is $M$ 
with $M_{ij} = p_j\log \frac{N_{ij} N }{N_i N_j}$ where 
$N_{ij}$ is the number of times words $i$ and $j$ co-occur in a window of size $10$, $N_i$ is the number of times word $i$ appears, $N$ is the total number of words in the corpus, and $p_j$ is a weighting factor depending set to $p_j = \max\{1, (N_j/N_{10})^2\}$, which puts larger weights on more frequent words since they are less noisy~\cite{pennington2014glove,lg14}. 
Note that $N_i$'s and $N$ can be computed easily, so essentially the only dynamically update part is $\log N_{ij}$.

The data stream is generated as a window of size 10 slides along the sentences in the corpus and we collect the co-occurrence counts of the word pairs in the window. Here the count is weighted, i.e., if two words appear in a distance of $t$ inside the window, then the count update value is $1/t$ as in~\cite{pennington2014glove}.  
We consider the matrix for the most frequent $n$ words, where $n=10000$, $30000$, and $50000$.

\paragraph{Results.}
Figure~\ref{fig:app_real} shows the results on the real data. The observations are similar to those on the synthetic data: the errors of our method are much better than the baseline, and are close to the optimum; the method is very space efficient without increasing the error much. These results again demonstrate its effectiveness.
\subsection{The Effect of the Sample Size}

In our algorithm we have parameters $s = d_1 = d_2$ that determine the sample sizes in different steps of the algorithm. In previous experiments, we set them equal to the size upper bounds in line 20 in \textsc{LogSum} or line 6 in \textsc{PolySum}. 
Here we consider varying their values. In particular, we use the Wikipedia data with $n=10000$ and $f(x)=\log(x)$ and set the size upper bound of the sketch method to be $200$. Then we set $s = d_1 = d_2 = \gamma$ and vary the value of $\gamma$. 

\paragraph{Results.}
Figure~\ref{fig:app_nsample} shows the results with various sample sizes. It is observed that smaller sample sizes lead to worse errors as expected, but overall the results are quite stable across different sizes. This demonstrates the robustness of our method to these parameters. It is also observed that after a certain value, increasing the sample size doesn't lead to better error, which should be due to the approximation error introduced by the sketch. The results suggest that in general the sample size should be set approximately equal to the size upper bound in the sketch method. 

\begin{figure*}[!t]
\centering
\includegraphics[width=0.37\linewidth]{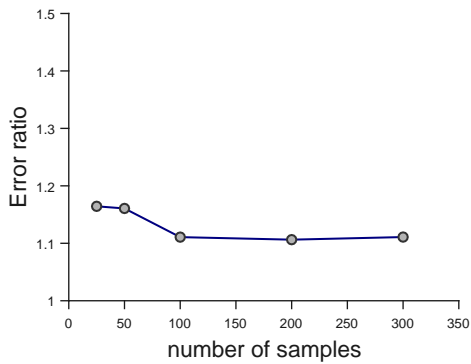}
\caption{Error ratios when using different sample size in the algorithm. The $x$-axis is the ratio between the amount of space used by the algorithms and the total amount of space occupied by the data matrix. The $y$-axis is the ratio between the error of the solutions output by the algorithms and the optimal error. } 
\label{fig:app_nsample}
\end{figure*}

\end{document}